\newcommand{\field}[1]{\mathbb{#1}}
\newcommand{\Z}{\field{Z}}
\newcommand{\R}{\field{R}}
\newcommand{\cA}{{\cal A}}
\newcommand{\cB}{{\cal B}}
\newcommand{\cC}{{\cal C}}
\newcommand{\cG}{{\cal G}}
\newcommand{\cP}{{\cal P}}
\newcommand{\cQ}{{\cal Q}}
\newcommand{\cS}{{\cal S}}
\newcommand{\cU}{{\cal U}}
\newcommand{\cT}{{\cal T}}
\newcommand{\sP}{\cP}
\newcommand{\sG}{\cG}
\newcommand{\Gr}{\smash{{\sG\kern-1.5pt}_q\kern-0.5pt(n,k)}}
\newcommand{\Grtwo}{\smash{{\sG\kern-1.5pt}_2\kern-0.5pt(n,k)}}
\newcommand{\Gkone}{\smash{{\sG\kern-1.5pt}_q\kern-0.5pt(n,k_1)}}
\newcommand{\Gktwo}{\smash{{\sG\kern-1.5pt}_q\kern-0.5pt(n,k_2)}}
\newcommand{\Ps}{\smash{{\sP\kern-2.0pt}_q\kern-0.5pt(n)}}
\newcommand{\bi}{{\bf i}}
\newcommand{\deff}{\mbox{$\stackrel{\rm def}{=}$}}
\newtheorem{theorem}{Theorem}
\newtheorem{lemma}{Lemma}
\newtheorem{remark}{Remark}
\newtheorem{cor}{Corollary}
\newtheorem{example}{Example}
\begin{document}

\bibliographystyle{IEEEtran}

\title{Sequence Folding, Lattice Tiling, \\ and Multidimensional Coding}
\author{Tuvi Etzion,~\IEEEmembership{Fellow,~IEEE}
\thanks{T. Etzion is with the Department of Computer Science,
Technion --- Israel Institute of Technology, Haifa 32000, Israel.
(email: etzion@cs.technion.ac.il).}
\thanks{The material in this paper was presented in part in the 2009 IEEE
Information Theory Workshop, Taormina, Sicily, Italy, October
2009.}
\thanks{This work was supported in part by the United States-Israel
Binational Science Foundation (BSF), Jerusalem, Israel, under
Grant 2006097.} }

\maketitle
\begin{abstract}
Folding a sequence $S$ into a multidimensional box is a well-known
method which is used as a multidimensional coding technique. The
operation of folding is generalized in a way that the sequence $S$
can be folded into various shapes and not just a box. The new
definition of folding is based on a lattice tiling for the given
shape $\cS$ and a direction in the $D$-dimensional integer grid.
Necessary and sufficient conditions that a lattice tiling for
$\cS$ combined with a direction define a folding of a sequence
into $\cS$ are derived. The immediate and most impressive
application is some new lower bounds on the number of dots in
two-dimensional synchronization patterns. This can be also
generalized for multidimensional synchronization patterns. The
technique and its application for two-dimensional synchronization
patterns, raise some interesting problems in discrete geometry. We
will also discuss these problems. It is also shown how folding can
be used to construct multidimensional error-correcting codes.
Finally, by using the new definition of folding, multidimensional
pseudo-random arrays with various shapes are generated.
\end{abstract}

\begin{keywords}
distinct difference configuration, folding, lattice tiling,
pseudo-random array, two-burst-correcting cods
\end{keywords}

\section{Introduction}
\label{sec:introduction}

Multidimensional coding in general and two-dimensional coding in
particular are subjects which attract lot of attention in the last
three decades. One of the main reasons is their modern
applications which have developed during these years. Such
applications for synchronization patterns include radar, sonar,
physical alignment, and time-position synchronization. For
error-correcting codes they include two-dimensional magnetic and
optical recording as well as three-dimensional holographic
recording. These are the storage devices of the future.
Applications for pseudo-random arrays include scrambling of
two-dimensional data, two-dimensional digital watermarking, and
structured light patterns for imaging systems. Each one of these
structures (multidimensional synchronization patterns,
error-correcting array codes, and pseudo-random arrays), and its
related coding problem, is a generalization of an one-dimensional
structure. But, although the related theory of the one-dimensional
case is well developed, the theory for the multidimensional case
is developed rather slowly. This is due that the fact the most of
the one-dimensional techniques are not generalized easily to
higher dimensions. Hence, specific techniques have to be developed
for multidimensional coding. One approach in multidimensional
coding is to take an one-dimensional code and to transform it into
a multidimensional code. One technique in this approach is called
folding and it is the subject of the current paper. This technique
was applied previously for two-dimensional synchronization
patterns, for pseudo-random arrays, and lately for
multidimensional error-correcting codes. We start with a short
introduction to these three multidimensional coding problems which
motivated our interest in the generalization of folding.

\noindent {\bf Synchronization patterns}

One-dimensional synchronization patterns were first introduced by
Babcock in connection with radio interference~\cite{Bab}. Other
applications are discussed in details in~\cite{BlGo77} and some
more are given in~\cite{ASU,LaSa88}. The two-dimensional
applications and related structures were first introduced
in~\cite{GoTa82} and discussed in many papers,
e.g.~\cite{GoTa84,Rob85,Games87,BlTi88,Rob97}. The two-dimensional
problems has also interest from discrete geometry point of view
and it was discussed for example in~\cite{EGRT92,LeTh95}. Recent
new application in keys predistribution for wireless sensor
networks~\cite{BEMP} led to new related two-dimensional problems
concerning these patterns which are discussed
in~\cite{BEMP08a,BEMP08b}. It has raised the following discrete
geometry problem: given a regular polygon with area $s$ on the
square (or hexagonal) grid, what is the maximum number of grid
points that can be taken, such that any two lines connecting these
grid points are different either in their length or in their
slope. Upper bound technique based on an idea of Erd\"{o}s and
Tur\'{a}n~\cite{EGRT92,Erd41} is given in~\cite{BEMP08a}. Some
preliminary lower bounds on the number of dots are also given
in~\cite{BEMP08a}, where the use of folding is applied. Folding
for such patterns was first used by~\cite{Rob97}. An
one-dimensional ruler was presented as a binary sequence and
written into a two-dimensional array row by row, one binary symbol
to each entry of the array. This was generalized for higher
dimensions, say $n_1 \times n_2 \times n_3$ array, by first
partitioning the array into $n_1$ two-dimensional arrays of size
$n_2 \times n_3$. The one-dimensional sequence is written into the
these $n_2 \times n_3$ arrays one by one in the order defined by
the three-dimensional array. To each of these $n_2 \times n_3$
arrays the sequence is written row by row. Folding into higher
dimensions is done similarly and can be defined recursively. This
technique was used in~\cite{Rob97} to generate asymptotically
optimal high dimensional synchronization patterns.

\noindent {\bf Error-correcting codes}

There is no need for introduction to one-dimensional
error-correcting codes. Two-dimensional and multidimensional
error-correcting codes were discussed by many authors,
e.g.~\cite{Ima73,Abd86,AMT,BBZS,BBV,EtVa02,ScEt05,Boy06,EtYa09}.
Multidimensional error-correcting codes are of interest when the
errors are not random errors. For correction of up to $t$ random
errors in a multidimensional array, we can consider the elements
in the array as an one-dimensional sequence and use a
$t$-error-correcting code to correct these errors. Hence, when we
talk about multidimensional error-correcting codes we refer to the
errors as special ones such as the rank of the error
array~\cite{Gab85,Rot91}, or crisscross patterns~\cite{Rot91,
Rot97,BlaBru00}, etc. An important family of multidimensional
error-correcting codes are the burst-error-correcting codes. In
these codes, we assume that the errors are contained in a cluster
whose size is at most $b$. The one-dimensional case was considered
for more than forty years. Fire~\cite{Fire59} was the first to
present a general construction. Optimal burst-correcting codes
were considered in~\cite{Abr59,ElSh,AMOT}. Generalizations,
especially for two-dimensional codes, but also for
multidimensional codes were considered in various research papers,
e.g.~\cite{Abd86,AMT,BBV,EtVa02,ScEt05,EtYa09}. In general,
"simple" folding of one-dimensional codes were not considered for
multidimensional error-correcting codes. Even so in many of these
papers, one-dimensional burst-correcting codes and
error-correcting codes, were transferred into two dimensional
codes, e.g.~\cite{BBZS,BBV,EtVa02,ScEt05,Boy06,EtYa09}. Colorings
for two-dimensional coding, which transfer one-dimensional codes
into multidimensional arrays were considered for interleaving
schemes~\cite{BBV} and other techniques~\cite{EtYa09}. These
colorings can be compared to the coloring which will be used in
the sequel for folding. There is another related problem of
generating an array in which burst-errors can be corrected on an
unfolded sequence generated from the
array~\cite{FaHo,BFvT,Bla90,ZhWo,Zha91}.

\noindent {\bf Pseudo-random arrays}

The one-dimensional pseudo-random sequences are the maximal length
linear shift register sequences known as M-sequences and also
pseudo-noise (PN) sequences~\cite{Golomb}. These are sequences of
length $2^n-1$ generated by a linear feedback shift-register of
order $n$. They have many desired properties such as

\begin{itemize}
\item Recurrences Property - the entries satisfy a recurrence
relation of order $n$.

\item Balanced Property - $2^{n-1}$ entries in the sequence are
{\it ones} and $2^{n-1}-1$ entries in the sequence are {\it
zeroes}.

\item shift-and-Add Property - when a sequence is added bitwise to
its cyclic shift another cyclic shift of the sequence is obtained.

\item Autocorrelation Property - the out-of-phase value of the
autocorrelation function is always -1.

\item Window Property - each nonzero $n$-tuple appears exactly
once in one period of the sequence.
\end{itemize}
There are other properties which we will not
mention~\cite{McSl76}. For a comprehensive work on these sequences
the reader is referred to~\cite{Golomb}. Related sequences are the
de Bruijn sequences of length $2^n$ which are generated by
nonlinear feedback shift-register of order $n$. These sequences
have the window property, i.e.,  each $n$-tuple appears exactly
once in one period of the sequence.

The two-dimensional generalizations of pseudo-noise and de Bruijn
sequences are the pseudo-random arrays and perfect
maps~\cite{McSl76,NMIF,ReSt,FFMS,Etz88,Pat94}. Pseudo-random
arrays were also called {\it linear recurring arrays having
maximum=area matrices} by Nomura, Miyakawa, Imai, and
Fukuda~\cite{NMIF} who were the first to construct them. Perfect
maps and pseudo-random arrays have been used in two-dimensional
range-finding, in data scrambling, and in various kinds of mask
configurations. More recently, pseudo-random arrays have found
other applications in new and emerging technological areas. One
such application is robust, undetectable, digital watermarking of
two-dimensional test images~\cite{STO,TSO}. Another interesting
example is the use of pseudo-random arrays in creating
\emph{structured light}, which is a new reliable technique for
recovering the surface of an object. The structured-light
technique is based on projecting a light pattern and observing the
illuminated scene from one or more points of
view~\cite{Hsi01,MOCDZN,SPB,PSCF}. As mentioned in these papers,
this technique can be generalized to three dimensions; hence,
constructions of three-dimensional perfect maps and pseudo-random
arrays are also of interest.

\vspace{0.2cm}

The main goal of this paper is to generalize the well-known
technique, folding, for generating multidimensional codes of these
types, synchronization patterns, burst-correcting codes, and
pseudo-random arrays. The generalization will enable to obtain the
following results:

\begin{enumerate}
\item Form new two-dimensional codes for these applications.

\item Generalize all the multidimensional codes for any number of
dimensions in a simple way.

\item Form some optimal codes not known before.

\item Make these codes feasible not just for multidimensional
boxes, but also for many other different shapes.

\item Solve the synchronization pattern problem as a discrete
geometry problem for various two-dimensional shapes, and in
particular regular polygons.
\end{enumerate}

It is important to note that folding which was used in other
places in the literature aim only at one goal. Our folding aim is
at several goals. Even so, our description of folding is simple
and very intuitive for all these goals.

The rest of this paper is organized as follows. In
Section~\ref{sec:FoldTile} we define the basic concepts of folding
and lattice tiling. Tiling and lattices are basic combinatorial
and algebraic structures. We will consider only integer lattice
tiling. We will summarize the important properties of lattices and
lattice tiling. In Section~\ref{sec:folding} we will present the
generalization of folding into multidimensional shapes. All
previous known folding definitions are special cases of the new
definition. The new definition involves a lattice tiling and a
direction. We will prove necessary and sufficient conditions that
a lattice with a direction define a folding. We first present a
proof for the two-dimensional case since it is the most applicable
case. We continue to show the generalization for the
multidimensional case. For the two-dimensional case the proofs are
slightly simpler than the slightly different proofs for the
multidimensional case. we will first consider folding in which two
consecutive elements in the folded sequence are also adjacent, at
least cyclically, in the array. This will be generalized to
folding in which each two consecutive elements in the folded
sequence are not necessarily adjacent in the array. In
Section~\ref{sec:bounds} we give a short summary on
synchronization patterns and present basic theorems concerning the
bounds on the number of elements in such patterns. In
Section~\ref{sec:DDAs} we apply the results of the previous
sections to obtain new type of synchronization patterns which are
asymptotically either optimal or almost optimal. In
Section~\ref{sec:hex} we discuss folding in the hexagonal grid and
present construction for synchronization patterns in this grid
with shapes of hexagons or circles. In Section~\ref{sec:ECC} we
show how folding can be applied to construct multidimensional
error-correcting codes. In section~\ref{sec:pseudo-random} we
generalize the constructions in~\cite{NMIF,McSl76} to form
pseudo-random arrays on different multidimensional shapes.
Conclusion and problems for further research are given in
Section~\ref{sec:conclude}.

\section{Folding and Lattice Tiling}
\label{sec:FoldTile}

\subsection{Folding}

Folding a rope, a ruler, or any other feasible object is a common
action in every day life. Folding an one-dimensional sequence into
a $D$-dimensional array is very similar, but there are a few
variants. First, we will summarize three variants for folding of
an one-dimensional sequence $s_0 s_1 \cdots s_{m-1}$ into a
two-dimensional array $\cA$. The generalization for a
$D$-dimensional array is straightforward while the description
becomes more clumsy.

\begin{enumerate}
\item[\bf F1.] $\cA$ is considered as a cyclic array horizontally
and vertically in such a way that a walk diagonally visits all the
entries of the array. The elements of the sequence are written
along the diagonal of the $r \times t$ array $\cA$. This folding
works (i.e., all elements of the sequence are written into the
array) if and only if $r$ and $t$ are relatively primes.

\item[\bf F2.] The elements of the sequence are written row by row
(or column by column) in $\cA$.

\item[\bf F3.] The elements of the sequence are written diagonal
by diagonal in $\cA$.
\end{enumerate}

\begin{example}
$~$

\vspace{0.2cm}

\noindent Example for {\bf F1:}

\vspace{0.2cm}

Given the M-sequence $000111101011001$ of length 15, we fold it
into a $3 \times 5$ array with a $2 \times 2$ window property (the
extra row and extra column are given for better understanding of
the folding).

\vspace{0.2cm}

\begin{center}
$\begin{array}{|c|c|c|c|c||c|}
\hline 0&6&12&3&9&0 \\
\hline \hline 5&11&2&8&14&5 \\
\hline
10&1&7&13&4&10 \\
\hline 0&6&12&3&9&0 \\ \hline
\end{array}$
\end{center}

\vspace{0.5cm}

\begin{center}
$\begin{array}{|c|c|c|c|c||c|}
\hline 0&1&0&1&0&0 \\
\hline \hline 1&1&0&1&1&1 \\
\hline
1&0&0&0&1&1 \\
\hline 0&1&0&1&0&0 \\ \hline
\end{array}$
\end{center}

\vspace{0.2cm}

\noindent Example for {\bf F2:}

\vspace{0.2cm}

The following sequence (ruler) of length 13 with five dots is
folded into a $3 \times 5$ array

\vspace{0.2cm}

\begin{scriptsize}
\begin{center}
$\begin{array}{|c|c|c|c|c|c|c|c|c|c|c|c|c|} \hline 0&1&2
&3&4&5&6&7&8&9&10&11&12 \\
\hline \bullet&\bullet& & &\bullet& & & & & & \bullet& &\bullet  \\
\hline
\end{array}$
\end{center}
\end{scriptsize}

\vspace{0.5cm}

\begin{center}
$\begin{array}{|c|c|c|c|c|}
\hline 10&11&12&13&14 \\
\hline
5&6&7&8&9 \\
\hline 0&1&2&3&4 \\ \hline
\end{array}$
\end{center}

\vspace{0.5cm}

\begin{center}
$\begin{array}{|c|c|c|c|c|}
\hline \bullet&&\bullet&$~$& \\
\hline
&&&& \\
\hline \bullet&\bullet&&&\bullet \\ \hline
\end{array}$
\end{center}

\vspace{0.5cm}

\noindent Example for {\bf F3:}

\vspace{0.2cm}

The following $B_2$-sequence in $\Z_{31}$ : $\{ 0,1,4,10,12,17 \}$
(can be viewed as a cyclic ruler) is folded into an infinite array
(we demonstrate part of the array with folding into a small
rectangle is given in bold). Note, that while the folding is done
we should consider all the integers modulo 31 (see
Figure~\ref{fig:fold_diag}).

\vspace{0.2cm}

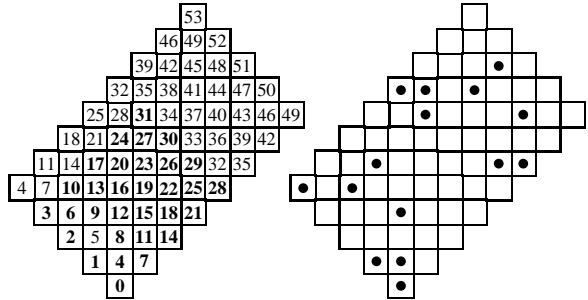
\begin{figure}[t]

\centering \setlength{\unitlength}{.65mm}
\begin{scriptsize}
\begin{picture}(35,50)(40,20)
\put(5,20){\framebox(35,5){}} \put(10,15){\framebox(25,15){}}
\put(15,10){\framebox(15,25){}} \put(20,5){\framebox(5,35){}}

\put(25,40){\framebox(35,5){}} \put(30,35){\framebox(25,15){}}
\put(35,30){\framebox(15,25){}} \put(40,25){\framebox(5,35){}}

\put(0,25){\framebox(5,5){}} \put(5,30){\framebox(5,5){}}
\put(10,35){\framebox(5,5){}} \put(15,40){\framebox(5,5){}}
\put(20,45){\framebox(5,5){}} \put(25,50){\framebox(5,5){}}
\put(30,55){\framebox(5,5){}} \put(35,60){\framebox(5,5){}}

\put(20,35){\makebox(5,5){\bf 24}}

\put(15,30){\makebox(5,5){\bf 17}} \put(20,30){\makebox(5,5){\bf
20}} \put(25,30){\makebox(5,5){\bf 23}}

\put(10,25){\makebox(5,5){\bf 10}} \put(15,25){\makebox(5,5){\bf
13}} \put(20,25){\makebox(5,5){\bf 16}}
\put(25,25){\makebox(5,5){\bf 19}} \put(30,25){\makebox(5,5){\bf
22}}

\put(0,25){\makebox(5,5){4}} \put(5,30){\makebox(5,5){11}}
\put(10,35){\makebox(5,5){18}} \put(15,40){\makebox(5,5){25}}
\put(20,45){\makebox(5,5){32}} \put(25,50){\makebox(5,5){39}}
\put(30,55){\makebox(5,5){46}} \put(35,60){\makebox(5,5){53}}
\put(5,25){\makebox(5,5){7}} \put(10,30){\makebox(5,5){14}}
\put(15,35){\makebox(5,5){21}} \put(20,40){\makebox(5,5){28}}
\put(25,45){\makebox(5,5){35}} \put(30,50){\makebox(5,5){42}}
\put(35,55){\makebox(5,5){49}}

\put(5,20){\makebox(5,5){\bf 3}} \put(10,20){\makebox(5,5){\bf 6}}
\put(15,20){\makebox(5,5){\bf 9}} \put(20,20){\makebox(5,5){\bf
12}} \put(25,20){\makebox(5,5){\bf 15}}
\put(30,20){\makebox(5,5){\bf 18}} \put(35,20){\makebox(5,5){\bf
21}}

\put(10,15){\makebox(5,5){\bf 2}} \put(15,15){\makebox(5,5){5}}
\put(20,15){\makebox(5,5){\bf 8}} \put(25,15){\makebox(5,5){\bf
11}} \put(30,15){\makebox(5,5){\bf 14}}

\put(15,10){\makebox(5,5){\bf 1}} \put(20,10){\makebox(5,5){\bf
4}} \put(25,10){\makebox(5,5){\bf 7}}

\put(20,5){\makebox(5,5){\bf 0}}

\put(40,55){\makebox(5,5){52}}

\put(35,50){\makebox(5,5){45}} \put(40,50){\makebox(5,5){48}}
\put(45,50){\makebox(5,5){51}}

\put(30,45){\makebox(5,5){38}} \put(35,45){\makebox(5,5){41}}
\put(40,45){\makebox(5,5){44}} \put(45,45){\makebox(5,5){47}}
\put(50,45){\makebox(5,5){50}}

\put(25,40){\makebox(5,5){\bf 31}} \put(30,40){\makebox(5,5){34}}
\put(35,40){\makebox(5,5){37}} \put(40,40){\makebox(5,5){40}}
\put(45,40){\makebox(5,5){43}} \put(50,40){\makebox(5,5){46}}
\put(55,40){\makebox(5,5){49}}

\put(30,35){\makebox(5,5){\bf 30}} \put(35,35){\makebox(5,5){33}}
\put(40,35){\makebox(5,5){36}} \put(45,35){\makebox(5,5){39}}
\put(50,35){\makebox(5,5){42}}

\put(35,30){\makebox(5,5){\bf 29}} \put(40,30){\makebox(5,5){32}}
\put(45,30){\makebox(5,5){35}}

\put(40,25){\makebox(5,5){\bf 28}}

\put(35,25){\makebox(5,5){\bf 25}} \put(30,30){\makebox(5,5){\bf
26}} \put(25,35){\makebox(5,5){\bf 27}}

\end{picture}
\end{scriptsize}

\centering \setlength{\unitlength}{.65mm}
\begin{scriptsize}
\begin{picture}(60,40)(-30,-20)
\put(5,20){\framebox(35,5){}} \put(10,15){\framebox(25,15){}}
\put(15,10){\framebox(15,25){}} \put(20,5){\framebox(5,35){}}

\put(25,40){\framebox(35,5){}} \put(30,35){\framebox(25,15){}}
\put(35,30){\framebox(15,25){}} \put(40,25){\framebox(5,35){}}

\put(0,25){\framebox(5,5){}} \put(5,30){\framebox(5,5){}}
\put(10,35){\framebox(5,5){}} \put(15,40){\framebox(5,5){}}
\put(20,45){\framebox(5,5){}} \put(25,50){\framebox(5,5){}}
\put(30,55){\framebox(5,5){}} \put(35,60){\framebox(5,5){}}

\put(17.5,32.5){\circle*{2}}

\put(12.5,27.5){\circle*{2}}

\put(2.5,27.5){\circle*{2}} \put(22.5,47.5){\circle*{2}}
\put(27.5,47.5){\circle*{2}} \put(37.5,47.5){\circle*{2}}
\put(27.5,42.5){\circle*{2}} \put(47.5,42.5){\circle*{2}}
\put(42.5,32.5){\circle*{2}} \put(47.5,32.5){\circle*{2}}
\put(42.5,52.5){\circle*{2}}

\put(22.5,22.5){\circle*{2}}

\put(17.5,12.5){\circle*{2}} \put(22.5,12.5){\circle*{2}}

\put(22.5,7.5){\circle*{2}} 

\end{picture}
\end{scriptsize}

\vspace{-1.5cm}

\caption{Folding by diagonals} \label{fig:fold_diag}
\end{figure}

\end{example}

\vspace{0.3cm}

F1 and F2 were used by MacWilliams and Sloane~\cite{McSl76} to
form pseudo-random arrays. F2 was also used by
Robinson~\cite{Rob97} to fold a one-dimensional ruler into a
two-dimensional Golomb rectangle. The generalization to higher
dimensions is straight forward. F3 was used in~\cite{BEMP08a} to
obtain some synchronization patterns in $\Z^D$.

\subsection{Tiling}

Tiling is one of the most basic concepts in combinatorics. We say
that a $D$-dimensional shape $\cS$ tiles the $D$-dimensional space
$\Z^D$ if disjoint copies of $\cS$ cover $\Z^D$.

\begin{remark}
We assume that our shape $\cS$ is a discrete shape, i.e., it
consists of discrete points of $\Z^D$ such that there is a path
between any two points of $\cS$ which consists only from points of
$\cS$. The shape $\cS$ in $\Z^D$ is usually not represented as a
union of points in $\Z^D$, but rather as a union of units cubes in
$\R^D$ with $2^D$ vertices in $\Z^D$. Let $A$ be the set of points
in the first representation. The set of unit cube by the second
representation is
$$\{ \cU_{(i_1, i_2 , \ldots , i_D )} ~:~ (i_1, i_2
, \ldots , i_D ) \in A  \}~,
$$
where
$$\cU_{(i_1, i_2 , \ldots , i_D )} = \{ (i_1, i_2 , \ldots , i_D )
+ \xi_1 \epsilon_1 + \xi_2 \epsilon_2 + \cdots + \xi_D \epsilon_D
~:$$ $$ 0 \leq \xi_i < 1,~ 1 \leq i \leq D \}~,
$$
and $\epsilon_i$ is a vector of length $D$ and weight one with a
{\it one} in the $i$th position. We omit the case of shapes in
$\R^D$ which are not of interest to our discussion.
\end{remark}

\vspace{0.2cm}

This cover of $\Z^D$ with disjoint copies of $\cS$ is called {\it
tiling} of $\Z^D$ with $\cS$. For each shape $\cS$ we distinguish
one of the points of $\cS$ to be the {\it center} of $\cS$. Each
copy of $\cS$ in a tiling has the center in the same related
point. The set $\cT$ of centers in a tiling defines the tiling,
and hence the tiling is denoted by the pair $(\cT,\cS)$. Given a
tiling $(\cT,\cS)$ and a grid point $(i_1,i_2,\ldots,i_D)$ we
denote by $c(i_1,i_2,\ldots,i_D)$ the center of the copy of $\cS$
for which $(i_1,i_2,\ldots,i_D) \in \cS$. We will also assume that
the origin is a center of some copy of $\cS$.
\begin{remark}
It is easy to verify that any point of $\cS$ can serve as the
center of $\cS$. If $(\cT,\cS)$ is a tiling then we can choose any
point of $\cS$ to serve as a center without affecting the fact
that $(\cT,\cS)$ is a tiling.
\end{remark}
\begin{lemma}
\label{lem:center} If $(\cT,\cS)$ is a tiling then for any given
point $(i_1,i_2,\ldots,i_D) \in \Z^D$ the point
$(i_1,i_2,\ldots,i_D)-c(i_1,i_2,\ldots,i_D)$ belongs to the shape
$\cS$ whose center is in the origin.
\end{lemma}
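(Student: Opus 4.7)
The plan is to unpack the definitions of tiling, center, and the function $c(\cdot)$, and observe that the conclusion is essentially a translation statement. Since $(\cT,\cS)$ is a tiling of $\Z^D$ by disjoint copies of $\cS$, the grid point $(i_1,\ldots,i_D)$ lies in exactly one such copy. By the setup, that copy is distinguished by its center, which is precisely $c(i_1,\ldots,i_D)\in\cT$.

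First, I would fix notation: write $\bi=(i_1,\ldots,i_D)$ and $c=c(\bi)$, and denote by $\cS_c$ the copy of $\cS$ in the tiling whose center is $c$. By definition of $c(\bi)$, we have $\bi\in\cS_c$. Second, I would invoke the fact, implicit in the phrase \emph{copies of $\cS$}, that every member of the tiling is a translate of the canonical copy $\cS_0$ (the copy whose center is the origin, which exists by the standing assumption that the origin is a center); the translation vector from $\cS_0$ to any other copy $\cS_{c'}$ is the difference of their centers $c'-0=c'$, so $\cS_{c'}=c'+\cS_0$. Applied to $c'=c$, this gives $\cS_c=c+\cS_0=\{c+s : s\in\cS_0\}$. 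Third, since $\bi\in\cS_c$, there exists $s\in\cS_0$ with $\bi=c+s$, hence $\bi-c=s\in\cS_0$, which is exactly the asserted conclusion (with $\cS_0$ being the shape $\cS$ whose center sits at the origin).

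There is no genuine obstacle here; the only subtle point is conceptual, namely confirming that \emph{copies} of a shape in a lattice tiling are translates rather than, say, rotated or reflected images, and that the translation vector between two copies equals the difference of their designated centers. Both are built into the definition used in the paper (tiling = partition of $\Z^D$ into disjoint translates, and each copy has the center in the same relative position), so the proof reduces to a one-line unwinding of definitions once the notation is set up.
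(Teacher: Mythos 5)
Your proposal is correct and follows essentially the same route as the paper: identify the copy containing the given point, note that it is the translate of the origin-centered copy by the center vector, and conclude that subtracting the center lands in the origin-centered copy. The paper's proof is exactly this one-line unwinding (writing $(i_1,\ldots,i_D)=c(i_1,\ldots,i_D)+(x_1,\ldots,x_D)$ for the corresponding point $(x_1,\ldots,x_D)$ of the origin copy), so there is nothing to add.
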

\begin{proof}
Let $\cS_1$ be the copy of $\cS$ whose center is in the origin and
$\cS_2$ be the copy of $\cS$ with the point
$(i_1,i_2,\ldots,i_D)$. Let $(x_1,x_2,\ldots,x_D)$ be the point in
$\cS_1$ related to the point $(i_1,i_2,\ldots,i_D)$ in $\cS_2$. By
definition, $(i_1,i_2,\ldots,i_D) = c(i_1,i_2,\ldots,i_D) +
(x_1,x_2,\ldots,x_D)$ and the lemma follows.
\end{proof}

One of the most common types of tiling is a {\it lattice tiling}.
A {\it lattice} $\Lambda$ is a discrete, additive subgroup of the
real $D$-space $\R^D$. W.l.o.g., we can assume that
\begin{equation}
\label{eq:lattice_def} \Lambda = \{ u_1 v_1 + u_2v_2 + \cdots +
u_D v_D ~:~ u_1, \ldots,u_D \in \Z \}~,
\end{equation}
where $\{ v_1,v_2,\ldots,v_D \}$ is a set of linearly independent
vectors in $\R^D$. A lattice $\Lambda$ defined by
(\ref{eq:lattice_def}) is a sublattice of $\Z^D$ if and only if
$\{ v_1,v_2,\ldots,v_D \} \subset \Z^D$. We will be interested
solely in sublattices of $\Z^D$ since our shapes are defined in
$\Z^D$. The vectors $v_1,v_2,\ldots,v_D$ are called a {\it base}
for $\Lambda \subseteq \Z^D$, and the $D \times D$ matrix
$$
{\bf G}=\left[\begin{array}{cccc}
v_{11} & v_{12} & \ldots & v_{1D} \\
v_{21} & v_{22} & \ldots & v_{2D} \\
\vdots & \vdots & \ddots & \vdots\\
v_{D1} & v_{D2} & \ldots & v_{DD} \end{array}\right]
$$
having these vectors as its rows is said to be a {\it generator
matrix} for $\Lambda$.

The {\it volume} of a lattice $\Lambda$, denoted $V( \Lambda )$,
is inversely proportional to the number of lattice points per unit
volume. More precisely, $V( \Lambda )$ may be defined as the
volume of the {\it fundamental parallelogram} $\Pi(\Lambda)$ in
$\R^D$, which is given by
$$
\Pi(\Lambda) \deff\ \{ \xi_1 v_1  + \xi_2 v_2 +  \cdots + \xi_D
v_D :  0 \leq \xi_i < 1, ~ ,1 \leq i \leq D \}.
$$
There is a simple expression for the volume of $\Lambda$, namely,
$V(\Lambda)=| \det {\bf G} |$.

We say that $\Lambda$ is a {\it lattice tiling} for $\cS$ if the
lattice points can be taken as the set $\cT$ to form a tiling
$(\cT,\cS)$. In this case we have that $|\cS|=V(\Lambda)=| \det
{\bf G} |$.

There is a large variety of literature about tiling and lattices.
We will refer the reader to two of the most interesting and
comprehensive books~\cite{CoSl93,StSz94}.

\begin{remark}
Note, that different generator matrices for the same lattice will
result in different fundamental parallelograms. This is related to
the fact that the same lattice can induce a tiling for different
shapes with the same volume. A fundamental parallelogram is always
a shape in $\R^D$ which is tiled by $\Lambda$ (usually this is not
a shape in $\Z^D$ and as a consequence, most and usually all, of
the shapes in $\Z^D$ are not fundamental parallelograms).
\end{remark}


Lattices are very fundamental structures in various coding
problems, e.g.~\cite{UrRi98,ViBo99,TVZ99} is a small sample which
does not mean to be representative. They are also applied in
multidimensional coding, e.g.~\cite{BBV}. This paper exhibits a
new application of lattices for multidimensional coding and for
discrete geometry problems.

\begin{lemma}
Let $\Lambda$ be a $D$-dimensional lattice, with a generator
matrix ${\bf G}$, and $\cS$ be a $D$-dimensional shape with a
point at the origin. $\Lambda$ is a lattice tiling for $\cS$ if
and only if $| \det {\bf G} | = |\cS|$ and there are no two points
$(i_1,i_2,\ldots,i_D)$ and $(j_1,j_2,\ldots,j_D)$ in $\cS$ such
that $(i_1-j_1,i_2-j_2,\ldots,i_D-j_D)$ is a lattice point.
\end{lemma}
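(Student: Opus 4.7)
The plan is to prove the two directions separately, with the backward direction being the substantive one and handled via the finite quotient group $\Z^D / \Lambda$.

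For the forward direction, assume $(\Lambda,\cS)$ is a lattice tiling. The volume identity $|\det \bG| = |\cS|$ was essentially established above (the cardinality of any fundamental region equals $V(\Lambda)=|\det \bG|$, and tiling forces $|\cS|=V(\Lambda)$). For the difference condition, suppose two distinct points $p=(i_1,\dots,i_D)$ and $q=(j_1,\dots,j_D)$ of $\cS$ satisfied $p-q = \lambda \in \Lambda$. The copy of $\cS$ centered at the origin contains $p$, and the copy centered at $\lambda$ contains $\lambda + q = p$ as well; this contradicts disjointness of the tiles.

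For the backward direction, consider the canonical projection $\pi \colon \Z^D \to \Z^D / \Lambda$. This quotient is a finite abelian group of order $|\det \bG|$. The hypothesis that no two distinct points of $\cS$ differ by a lattice vector says precisely that $\pi$ is injective on $\cS$. Combined with $|\cS| = |\det \bG| = |\Z^D/\Lambda|$, this forces $\pi|_{\cS}$ to be a bijection onto $\Z^D/\Lambda$. Equivalently, every point $x \in \Z^D$ admits a \emph{unique} decomposition $x = \lambda + s$ with $\lambda \in \Lambda$ and $s \in \cS$, which is exactly the statement that the translates $\{\lambda + \cS : \lambda \in \Lambda\}$ partition $\Z^D$, i.e., $\Lambda$ is a lattice tiling for $\cS$.

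The main thing to get right is the equivalence between the bijectivity of $\pi|_\cS$ and the tiling property; both disjointness (no overlap of translates) and covering (every point of $\Z^D$ lies in some translate) need to be read off from this bijection, but they follow immediately from uniqueness and existence of the decomposition $x = \lambda + s$, respectively. I expect no genuine obstacle beyond observing that the difference-freeness condition is the correct translation of ``$\pi$ is injective on $\cS$,'' and that the cardinality count $|\cS|=|\Z^D/\Lambda|$ promotes injectivity to bijectivity in this finite setting.
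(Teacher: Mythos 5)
Your proof is correct, and the forward direction is essentially identical to the paper's (volume count plus the observation that a lattice difference between two points of $\cS$ puts one point in two distinct translates). The backward direction, however, takes a genuinely different route. The paper places a copy of $\cS$ at each lattice point, proves pairwise disjointness by the same difference argument run in reverse (if $P$ lies in the copies centered at $C_1$ and $C_2$, then $P-C_1$ and $P-C_2$ lie in the origin copy and differ by the lattice point $C_2-C_1$), and then appeals to the volume identity $|\det \bG|=|\cS|$ to conclude that disjoint translates of the right density must also cover $\Z^D$ --- a step it leaves implicit. Your argument via the finite quotient $\Z^D/\Lambda$ of order $|\det\bG|$ packages disjointness and covering together: difference-freeness is injectivity of $\pi|_{\cS}$, the cardinality count promotes this to a bijection, and the bijection is literally the statement that every $x\in\Z^D$ decomposes uniquely as $\lambda+s$. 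This buys a cleaner and more self-contained justification of the covering half (which is the part the paper glosses over), at the mild cost of invoking the standard fact that the index of a full-rank sublattice of $\Z^D$ equals the absolute determinant of its generator matrix --- which is in substance the same fact the paper is using when it asserts that the volume condition reduces the problem to disjointness. Both proofs are sound; yours is the more rigorous rendering of the same underlying counting principle.
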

\begin{proof}
Assume first that $\Lambda$ is a lattice tiling for $\cS$. The
condition on the volume of $\cS$ is trivial. Assume the contrary
that $(i_1,i_2,\ldots,i_D)$ and $(j_1,j_2,\ldots,j_D)$ are in the
copy of $\cS$, whose center is in the origin, and
$(i_1-j_1,i_2-j_2,\ldots,i_D-j_D)$ is a lattice point. It follows
that the point $(i_1,i_2,\ldots,i_D)$ is contained in the shape
centered in the origin and also in the shape centered at
$(i_1-j_1,i_2-j_2,\ldots,i_D-j_D)$, a contradiction to the fact
that $\Lambda$ is a lattice tiling for $\cS$.

Now, assume that $| \det {\bf G} | = |\cS|$ and there are no two
points $(i_1,i_2,\ldots,i_D)$ and $(j_1,j_2,\ldots,j_D)$ in $\cS$
such that $(i_1-j_1,i_2-j_2,\ldots,i_D-j_D)$ is a lattice point.
We choose the point of $\cS$ which is in the origin to be the
center of $\cS$ and we place copies of $\cS$ on each lattice point
such that the center coincide with the lattice point. Since $|
\det {\bf G} | = |\cS|$ we only have to show that there is no
point which is contained in two different copies of $\cS$ in order
to complete the proof that $\Lambda$ is a lattice tiling for
$\cS$. Assume the contrary that the point $P$ is contained in two
copies of $\cS$ with centers at $C_1$ and $C_2$. Similarly to the
proof of Lemma~\ref{lem:center} it can be shown that $P-C_1$ and
$P-C_2$ are points in the copy of $\cS$ centered at the origin,
But, $P-C_1 -(P-C_2) =C_2 - C_1$ is a lattice point (since it is a
difference of two lattice points), a contradiction to the
assumption. Hence, $\Lambda$ is a lattice tiling for $\cS$.
\end{proof}
\begin{cor}
Let $\Lambda$ be a $D$-dimensional lattice, with a generator
matrix ${\bf G}$, and $\cS$ be a $D$-dimensional shape. $\Lambda$
is a lattice tiling for $\cS$ if and only if $| \det {\bf G} | =
|\cS|$ and there are no two points $(i_1,i_2,\ldots,i_D)$ and
$(j_1,j_2,\ldots,j_D)$ in any copy of $\cS$ such that
$(i_1-j_1,i_2-j_2,\ldots,i_D-j_D)$ is a lattice point.
\end{cor}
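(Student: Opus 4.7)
The plan is to derive the Corollary as an immediate consequence of the preceding Lemma by removing the hypothesis that $\cS$ contains the origin. Two observations do all the work: first, the tiling property of a lattice is translation-invariant in the sense that $\Lambda$ tiles $\cS$ if and only if $\Lambda$ tiles any translate $\cS + t$ (since translating every copy by the same vector $t$ preserves disjointness and coverage); second, for any translate $\cS + t$, the set of pairwise differences $\{(i_1-j_1,\ldots,i_D-j_D) : (i_1,\ldots,i_D),(j_1,\ldots,j_D) \in \cS + t\}$ is identical to the corresponding set of differences within $\cS$ itself, because the translation $t$ cancels.

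Concretely, I would first pick any point $(a_1,\ldots,a_D) \in \cS$ and let $\cS' \deff \cS - (a_1,\ldots,a_D)$, so that $\cS'$ contains the origin. Applying the preceding Lemma to $\cS'$, one obtains that $\Lambda$ is a lattice tiling for $\cS'$ if and only if $|\det \bG| = |\cS'| = |\cS|$ together with the difference condition stated for points in the single copy $\cS'$ itself. By the first observation, $\Lambda$ tiles $\cS'$ iff $\Lambda$ tiles $\cS$; by the second observation, the difference condition on $\cS'$ is equivalent to the same difference condition on $\cS$.

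It remains to upgrade from "differences inside one fixed copy" to "differences inside any copy" as the Corollary phrases it. This is routine: every copy of $\cS$ in a tiling is a translate $\cS + w$ for some $w \in \Z^D$ (in fact $w$ is the vector from the distinguished center of that copy to the corresponding center of the origin-copy, hence $w$ itself is a lattice point once the tiling has been set up, but we do not even need this). Since pairwise differences of points inside $\cS + w$ coincide with pairwise differences inside $\cS$, the "no-lattice-difference" condition in any copy is the same condition as in one copy.

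The only mild subtlety, and the part I would state most carefully, is the direction implicitly using the Remark that any point of $\cS$ may serve as a center without affecting whether $(\cT,\cS)$ is a tiling: this is what lets me translate $\cS$ to place an arbitrary point at the origin before invoking the Lemma. There is no real obstacle here; the proof is essentially a bookkeeping exercise showing that both the volume condition $|\det \bG|=|\cS|$ and the difference condition are invariant under translation of $\cS$, so the Corollary is logically equivalent to the Lemma.
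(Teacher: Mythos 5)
Your proposal is correct and matches the paper's intent: the paper states this as an unproved corollary of the preceding lemma, precisely because the translation-invariance of both the volume condition and the pairwise-difference condition (together with the remark that any point of $\cS$ may serve as its center) makes the reduction immediate. Your write-up simply makes that bookkeeping explicit, so there is nothing to correct.
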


\section{The Generalized Folding Method}
\label{sec:folding}

In this section we will generalize the definition of folding. All
the previous three definitions ({\bf F1}, {\bf F2}, and {\bf F3})
are special cases of the new definition. The new definition
involves a lattice tiling $\Lambda$, for a shape $\cS$ on which
the folding is performed.

A {\it ternary vector} of length $D$, $(d_1,d_2,\ldots,d_D)$, is a
word of length $D$, where $d_i \in \{ -1,0,+1 \}$.

Let $\cS$ be a $D$-dimensional shape and let
$\delta=(d_1,d_2,\ldots,d_D)$ be a nonzero ternary vector of
length $D$. Let $\Lambda$ be a lattice tiling for a shape $\cS$,
and let $\cS_1$ be the copy of $\cS$ which includes the origin. We
define recursively a {\it folded-row} starting in the origin. If
the point $(i_1,i_2, \ldots ,i_D )$ is the current point of
$\cS_1$ in the folded-row, then the next point on its folded-row
is defined as follows:
\begin{itemize}
\item If the point $(i_1+d_1,i_2+d_2,\ldots,i_D+d_D)$ is in
$\cS_1$ then it is the next point on the folded-row.

\item If the point $(i_1+d_1,i_2+d_2,\ldots,i_D+d_D)$ is in $\cS_2
\neq \cS_1$ whose center is in the point $(c_1,c_2,\ldots,c_D)$
then $(i_1+d_1-c_1,i_2+d_2-c_2,\ldots,i_D+d_D-c_D)$ is the next
point on the folded-row (this is a point in $\cS_1$ by
Lemma~\ref{lem:center}).
\end{itemize}

The new definition of folding is based on a lattice $\Lambda$, a
shape $\cS$, and a {\it direction} $\delta$. The triple
$(\Lambda,\cS,\delta)$ defines a {\it folding} if the definition
yields a folded-row which includes all the elements of $\cS$. It
will be proved that only $\Lambda$ and $\delta$ determine whether
the triple $(\Lambda,\cS,\delta)$ defines a folding. The role of
$\cS$ is only in the order of the elements in the folded-row; and
of course $\Lambda$ must define a lattice tiling for $\cS$.
Different lattice tilings for the same shape $\cS$ can function
completely different. Also, not all directions for the same
lattice tiling of the shape $\cS$ should define (or not define) a
folding.

\begin{remark}
It is not difficult to see that the three folding defined earlier
(F1, F2, and F3) are special cases of the new definition. The
definition of the generator matrices for the three corresponding
lattices are left as an exercise to the interested reader.
\end{remark}

\begin{remark}
The definition of ternary vectors for the direction, in which the
folding is performed, is given to guarantee that two consecutive
elements in the folded-row, are also adjacent (possibly
cyclically) in the shape $\cS$.
\end{remark}

\begin{example}
Let $\cS$ be a $2 \times 2$ square. Let $\Lambda_1$ be the lattice
whose generator matrix given by the matrix
$$
G_1=\left[\begin{array}{cc}
2 & 2 \\
0 & 2
\end{array}\right]~.
$$
$\Lambda_1$ defines a lattice tiling for $\cS$. None of the four
possible ternary vectors of length 2 define a folding with
$\Lambda$ (and $\cS$).

Let $\Lambda_2$ be the lattice whose generator matrix given by the
matrix
$$
G_2=\left[\begin{array}{cc}
2 & 1 \\
0 & 2
\end{array}\right]~.
$$
$\Lambda_2$ also defines a lattice tiling for $\cS$. Each one of
the directions $(+1,0)$, $(+1,+1)$, and $(+1,-1)$ defines a
folding with $\Lambda$ (and $\cS$). Only the direction $(0,+1)$
does not define a folding with $\Lambda$ (and $\cS$).
\end{example}

\vspace{0.5cm}

How many different folded-rows do we have? In other words, how
many different folding operations are defined in this way? There
are $3^D-1$ non-zero ternary vectors. If $\Lambda$ with the
ternary vector $(d_1,d_2,\ldots,d_D)$ define a folding then also
$\Lambda$ with the ternary vector $(-d_1,-d_2,\ldots,-d_D)$ define
a folding. The two folded-rows are in reverse order, and hence
they will be considered to be {\it equivalent}. If two folded-rows
are not equal and not a reverse pair then they will considered to
be {\it nonequivalent}. The question whether for each $D$, there
exists a $D$-dimensional shape $\cS$ with $\frac{3^D-1}{2}$
different folded-rows will be partially answered in the sequel.
Meanwhile, we present an example for $D=2$.

Before the example we shall define how we fold a sequence into a
shape $\cS$. Let $\Lambda$ be a lattice tiling for the shape $\cS$
for which $n=|\cS|$. Let $\delta$ be a direction for which
$(\Lambda,\cS,\delta)$ defines a folding. Let $\cB=b_0 b_1 \ldots
b_{n-1}$ be a sequence of length $n$. The folding of $\cB$ induced
by $(\Lambda,\cS,\delta)$ is denoted by $(\Lambda,\cS,\delta,\cB)$
and defined as the shape $\cS$ with the elements of $\cB$, where
$b_i$ is in the $i$th entry of the folded-row in $\cS$ defined by
$(\Lambda,\cS,\delta)$.

\begin{example}
\label{exm:fold1} Let $\Lambda$ be the lattice whose generator
matrix given by the matrix
$$
G=\left[\begin{array}{cc}
3 & 2 \\
7 & 1
\end{array}\right]~.
$$
One can verify that shapes tiled by this lattice have different
folded-rows. It can be proved that this is the lattice with the
smallest volume which has this property, i.e., that the four
folded-rows are different.

\vspace{0.2cm}

If our shape $\cS$ is an $1 \times 11$ array then the folding of a
sequence with length 11 is defined as follows  (the position
labelled with an $i$ is the place of the $i$th element of the
sequence).

\noindent For the direction vector $(+1,0)$ the order is given by

\vspace{0.2cm}

\begin{center}
$\begin{array}{|c|c|c|c|c|c|c|c|c|c|c|c|c|} \hline 0&1&2
&3&4&5&6&7&8&9&10 \\
\hline
\end{array}~.$
\end{center}

\vspace{0.3cm}

\noindent For the direction vector $(0,+1)$ the order is given by

\vspace{0.2cm}

\begin{center}
$\begin{array}{|c|c|c|c|c|c|c|c|c|c|c|c|c|} \hline 0&3&6
&9&1&4&7&10&2&5&8 \\
\hline
\end{array}~.$
\end{center}

\vspace{0.3cm}

\noindent For the direction vector $(+1,+1)$ the order is given by

\vspace{0.2cm}

\begin{center}
$\begin{array}{|c|c|c|c|c|c|c|c|c|c|c|c|c|} \hline 0&9&7
&5&3&1&10&8&6&4&2 \\
\hline
\end{array}~.$
\end{center}

\vspace{0.3cm}

\noindent For the direction vector $(+1,-1)$ the order is given by

\vspace{0.2cm}

\begin{center}
$\begin{array}{|c|c|c|c|c|c|c|c|c|c|c|c|c|} \hline 0&7&3
&10&6&2&9&5&1&8&4 \\
\hline
\end{array}~.$
\end{center}

\vspace{0.5cm}

\noindent If our shape $\cS$ is given by

\setlength{\unitlength}{.75mm}
\begin{picture}(40,25)(10,-5)
\linethickness{.5 pt} \put(30,0){\framebox(20,10){}}
\put(30,0){\framebox(20,5){}} \put(30,0){\framebox(10,15){}}
\put(35,0){\framebox(10,15){}}
\end{picture}

\noindent then the folding of a sequence of length 11 is depicted
in Figure~\ref{fig:fold_shape1}.

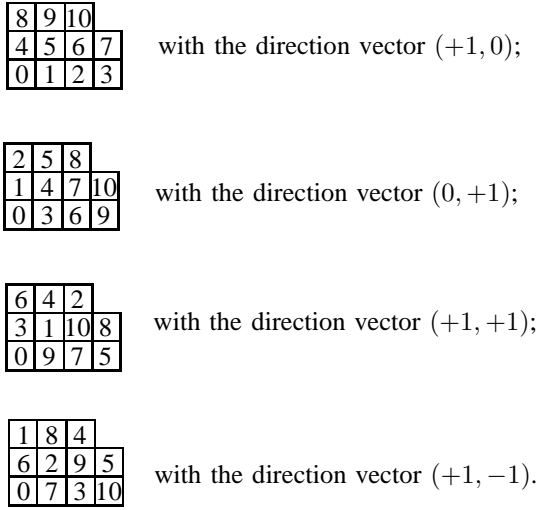
\begin{figure}[tb]

\centering \setlength{\unitlength}{.75mm}
\begin{picture}(60,40)(20,-20)
\linethickness{.5 pt}

\put(30,0){\framebox(20,10){}} \put(30,0){\framebox(20,5){}}

\put(30,0){\framebox(10,15){}} \put(35,0){\framebox(10,15){}}

\put(30,0){\makebox(5,5){0}} \put(35,0){\makebox(5,5){1}}
\put(40,0){\makebox(5,5){2}} \put(45,0){\makebox(5,5){3}}
\put(30,5){\makebox(5,5){4}} \put(35,5){\makebox(5,5){5}}
\put(40,5){\makebox(5,5){6}} \put(45,5){\makebox(5,5){7}}
\put(30,10){\makebox(5,5){8}} \put(35,10){\makebox(5,5){9}}
\put(40,10){\makebox(5,5){10}}

\end{picture}
\begin{picture}(50,40)(25,-26)
with the direction vector $(+1,0)$;
\end{picture}

\centering \setlength{\unitlength}{.75mm}
\begin{picture}(57,40)(27,-35)
\linethickness{.5 pt}

\put(30,0){\framebox(20,10){}} \put(30,0){\framebox(20,5){}}

\put(30,0){\framebox(10,15){}} \put(35,0){\framebox(10,15){}}

\put(30,0){\makebox(5,5){0}} \put(35,0){\makebox(5,5){3}}
\put(40,0){\makebox(5,5){6}} \put(45,0){\makebox(5,5){9}}
\put(30,5){\makebox(5,5){1}} \put(35,5){\makebox(5,5){4}}
\put(40,5){\makebox(5,5){7}} \put(45,5){\makebox(5,5){10}}
\put(30,10){\makebox(5,5){2}} \put(35,10){\makebox(5,5){5}}
\put(40,10){\makebox(5,5){8}}

\end{picture}
\begin{picture}(40,40)(29,-40)
with the direction vector $(0,+1)$;
\end{picture}

\centering \setlength{\unitlength}{.75mm}
\begin{picture}(54,40)(49,-50)
\linethickness{.5 pt}

\put(30,0){\framebox(20,10){}} \put(30,0){\framebox(20,5){}}

\put(30,0){\framebox(10,15){}} \put(35,0){\framebox(10,15){}}

\put(30,0){\makebox(5,5){0}} \put(35,0){\makebox(5,5){9}}
\put(40,0){\makebox(5,5){7}} \put(45,0){\makebox(5,5){5}}
\put(30,5){\makebox(5,5){3}} \put(35,5){\makebox(5,5){1}}
\put(40,5){\makebox(5,5){10}} \put(45,5){\makebox(5,5){8}}
\put(30,10){\makebox(5,5){6}} \put(35,10){\makebox(5,5){4}}
\put(40,10){\makebox(5,5){2}}

\end{picture}

\begin{picture}(30,40)(5,-97)
with the direction vector $(+1,+1)$;
\end{picture}

\centering \setlength{\unitlength}{.75mm}
\begin{picture}(51,40)(50,-106)
\linethickness{.5 pt}

\put(30,0){\framebox(20,10){}} \put(30,0){\framebox(20,5){}}

\put(30,0){\framebox(10,15){}} \put(35,0){\framebox(10,15){}}

\put(30,0){\makebox(5,5){0}} \put(35,0){\makebox(5,5){7}}
\put(40,0){\makebox(5,5){3}} \put(45,0){\makebox(5,5){10}}
\put(30,5){\makebox(5,5){6}} \put(35,5){\makebox(5,5){2}}
\put(40,5){\makebox(5,5){9}} \put(45,5){\makebox(5,5){5}}
\put(30,10){\makebox(5,5){1}} \put(35,10){\makebox(5,5){8}}
\put(40,10){\makebox(5,5){4}}
\end{picture}

\begin{picture}(30,40)(4,-150)
with the direction vector $(+1,-1)$.
\end{picture}
\vspace{-11cm}

\caption{Folding of the first shape} \label{fig:fold_shape1}
\end{figure}


\noindent Finally, if our shape $\cS$ is given by

\vspace{0.5cm}

\setlength{\unitlength}{.75mm}
\begin{picture}(40,25)(10,-5)
\linethickness{.5 pt}

\put(30,0){\framebox(10,15){}} \put(30,5){\framebox(20,5){}}
\put(35,5){\framebox(10,15){}} \put(35,0){\framebox(5,5){}}
\put(40.1,15){\framebox(5,5){}}
\end{picture}


\noindent then the folding of a sequence of length 11 is depicted
in Figure~\ref{fig:fold_shape2}.

\begin{figure}[tb]

\centering \setlength{\unitlength}{.75mm}
\begin{picture}(60,40)(20,-20)
\linethickness{.5 pt}

\put(30,0){\framebox(10,15){}} \put(30,5){\framebox(20,5){}}
\put(35,5){\framebox(10,15){}} \put(35,0){\framebox(5,5){}}
\put(40.1,15){\framebox(5,5){}}

\put(30,0){\makebox(5,5){0}} \put(35,0){\makebox(5,5){1}}
\put(35,15){\makebox(5,5){2}} \put(40,15){\makebox(5,5){3}}
\put(30,5){\makebox(5,5){4}} \put(35,5){\makebox(5,5){5}}
\put(40,5){\makebox(5,5){6}} \put(45,5){\makebox(5,5){7}}
\put(30,10){\makebox(5,5){8}} \put(35,10){\makebox(5,5){9}}
\put(40,10){\makebox(5,5){10}}
\end{picture}
\begin{picture}(50,40)(25,-26)
with the direction vector $(+1,0)$;
\end{picture}

\centering \setlength{\unitlength}{.75mm}
\begin{picture}(57,40)(27,-35)
\linethickness{.5 pt}

\put(30,0){\framebox(10,15){}} \put(30,5){\framebox(20,5){}}
\put(35,5){\framebox(10,15){}} \put(35,0){\framebox(5,5){}}
\put(40.1,15){\framebox(5,5){}}

\put(30,0){\makebox(5,5){0}} \put(35,0){\makebox(5,5){3}}
\put(35,15){\makebox(5,5){6}} \put(40,15){\makebox(5,5){9}}
\put(30,5){\makebox(5,5){1}} \put(35,5){\makebox(5,5){4}}
\put(40,5){\makebox(5,5){7}} \put(45,5){\makebox(5,5){10}}
\put(30,10){\makebox(5,5){2}} \put(35,10){\makebox(5,5){5}}
\put(40,10){\makebox(5,5){8}}
\end{picture}
\begin{picture}(40,40)(29,-40)
with the direction vector $(0,+1)$;
\end{picture}

\centering \setlength{\unitlength}{.75mm}
\begin{picture}(54,40)(49,-50)
\linethickness{.5 pt}

\put(30,0){\framebox(10,15){}} \put(30,5){\framebox(20,5){}}
\put(35,5){\framebox(10,15){}} \put(35,0){\framebox(5,5){}}
\put(40.1,15){\framebox(5,5){}}

\put(30,0){\makebox(5,5){0}} \put(35,0){\makebox(5,5){9}}
\put(35,15){\makebox(5,5){7}} \put(40,15){\makebox(5,5){5}}
\put(30,5){\makebox(5,5){3}} \put(35,5){\makebox(5,5){1}}
\put(40,5){\makebox(5,5){10}} \put(45,5){\makebox(5,5){8}}
\put(30,10){\makebox(5,5){6}} \put(35,10){\makebox(5,5){4}}
\put(40,10){\makebox(5,5){2}}
\end{picture}

\begin{picture}(30,40)(5,-97)
with the direction vector $(+1,+1)$;
\end{picture}

\centering \setlength{\unitlength}{.75mm}
\begin{picture}(51,40)(50,-106)
\linethickness{.5 pt}

\put(30,0){\framebox(10,15){}} \put(30,5){\framebox(20,5){}}
\put(35,5){\framebox(10,15){}} \put(35,0){\framebox(5,5){}}
\put(40.1,15){\framebox(5,5){}}

\put(30,0){\makebox(5,5){0}} \put(35,0){\makebox(5,5){7}}
\put(35,15){\makebox(5,5){3}} \put(40,15){\makebox(5,5){10}}
\put(30,5){\makebox(5,5){6}} \put(35,5){\makebox(5,5){2}}
\put(40,5){\makebox(5,5){9}} \put(45,5){\makebox(5,5){5}}
\put(30,10){\makebox(5,5){1}} \put(35,10){\makebox(5,5){8}}
\put(40,10){\makebox(5,5){4}}
\end{picture}

\begin{picture}(30,40)(4,-150)
with the direction vector $(+1,-1)$.
\end{picture}
\vspace{-11cm}

\caption{Folding of the second shape} \label{fig:fold_shape2}
\end{figure}
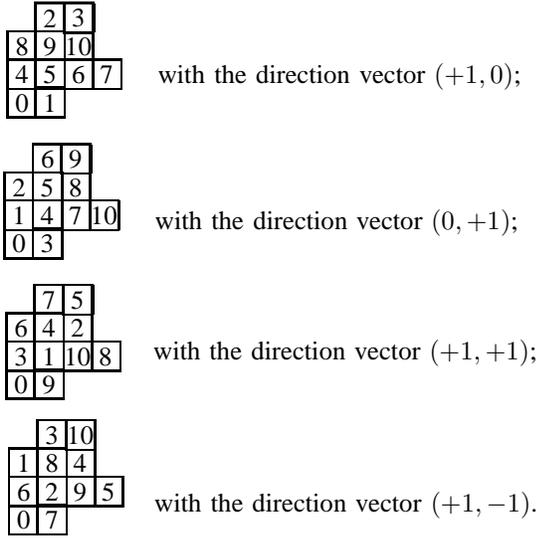

\end{example}

Next, we aim to find sufficient and necessary conditions that a
triple $(\Lambda,\cS,\delta)$ defines a folding. We start with a
simple characterization for the order of the elements in a
folded-row.

\begin{lemma}
\label{lem:order_fold} Let $\Lambda$ be a lattice tiling for the
shape $\cS$ and let $\delta = (d_1,d_2,\ldots,d_D)$ be a nonzero
ternary vector. Let $g(i)=(i \cdot d_1,\ldots,i \cdot d_D)-c(i
\cdot d_1,\ldots,i \cdot d_D)$ and let $i_1$, $i_2$ be two
integers. Then $g(i_1)=g(i_2)$ if and only if $g(i_1+1)=g(i_2+1)$.
\end{lemma}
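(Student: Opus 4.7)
The plan is to show that both conditions $g(i_1)=g(i_2)$ and $g(i_1+1)=g(i_2+1)$ are equivalent to the single translation-invariant condition that $(i_1-i_2)\delta \in \Lambda$. Since that condition depends only on the difference $i_1-i_2$, which is unchanged when $(i_1,i_2)$ is replaced by $(i_1+1,i_2+1)$, the asserted equivalence will follow immediately. So the real work is to establish the reduction $g(i_1)=g(i_2) \iff (i_1-i_2)\delta \in \Lambda$.

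The key technical step I would isolate first is a translation property of the center function: for every $P \in \Z^D$ and every lattice point $L \in \Lambda$, $c(P+L) = c(P) + L$. This is where I expect the main (though modest) obstacle to lie, since the definition of $c$ was given pointwise and the argument has to use the structure of the tiling. The proof is short: if $\cS_P$ is the copy of $\cS$ in the tiling containing $P$ with center $c(P) \in \Lambda$, then the translate $\cS_P + L$ is again one of the copies in the tiling $(\Lambda,\cS)$ (its center $c(P)+L$ is a lattice point because $\Lambda$ is an additive group), it contains $P+L$, and by disjointness of the copies this translate is the unique copy of $\cS$ containing $P+L$. Hence $c(P+L) = c(P) + L$.

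With this in hand, I would unfold the definition of $g$: the equality $g(i_1)=g(i_2)$ reads $i_1\delta - c(i_1\delta) = i_2\delta - c(i_2\delta)$, i.e.\ $(i_1-i_2)\delta = c(i_1\delta) - c(i_2\delta)$. The right-hand side is a difference of two lattice points and therefore automatically lies in $\Lambda$, so $g(i_1)=g(i_2)$ forces $(i_1-i_2)\delta \in \Lambda$. Conversely, if $(i_1-i_2)\delta = L \in \Lambda$, then $i_1\delta = i_2\delta + L$, and the translation property gives $c(i_1\delta) = c(i_2\delta) + L$; substituting back into the definition of $g$ yields $g(i_1) = i_2\delta + L - c(i_2\delta) - L = g(i_2)$. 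Applying the same chain to $(i_1+1,i_2+1)$ in place of $(i_1,i_2)$, whose difference is still $i_1-i_2$, completes the proof.
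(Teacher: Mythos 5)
Your proof is correct and takes essentially the same route as the paper's: the paper's one-sentence proof appeals to the observation that $g(i_1)=g(i_2)$ exactly when the two points occupy the same relative position within their tiles, which is precisely the translation-invariance you make explicit through $c(P+L)=c(P)+L$ and the reduction to the condition $(i_1-i_2)\delta\in\Lambda$. Your write-up simply supplies the details that the paper leaves implicit.
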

\begin{proof}
The lemma follows immediately from the observation that
$g(i_1)=g(i_2)$ if and only if $(i_1 \cdot d_1,\ldots,i_1 \cdot
d_D)$ and $(i_2 \cdot d_1,\ldots,i_2 \cdot d_D)$ are the same
related position in $\cS$, i.e., corresponds to the same position
of the folded-row.
\end{proof}

The next two lemmas are an immediate consequence of the
definitions and provide a concise condition whether the triple
$(\Lambda,\cS,\delta)$ defines a folding.

\begin{lemma}
\label{lem:tile_fold1} Let $\Lambda$ be a lattice tiling for the
shape $\cS$ and let $\delta = (d_1,d_2,\ldots,d_D)$ be a nonzero
ternary vector. $(\Lambda,\cS,\delta)$ defines a folding if and
only if the set $\{ (i \cdot d_1,i \cdot d_2,\ldots,i \cdot
d_D)-c(i \cdot d_1,i \cdot d_2,\ldots,i \cdot d_D) ~:~ 0 \leq i <
| \cS | \}$ contains $| \cS |$ distinct elements.
\end{lemma}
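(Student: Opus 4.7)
The plan is to identify the folded-row with the explicit sequence $g(i)$ and then reduce the claim to counting distinct values. Write $p_i$ for the $i$-th point of the folded-row (so $p_0 = 0$); I would prove by induction on $i$ that $p_i = g(i) = i\delta - c(i\delta)$. The base case $p_0 = g(0) = 0$ holds because the origin is itself a center, so $c(0) = 0$. For the inductive step, the recursive definition gives
\[
p_{i+1} \;=\; p_i + \delta - c(p_i + \delta),
\]
where $c(p_i + \delta) = 0$ when $p_i + \delta$ is already in $\cS_1$.

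The crucial ingredient is the translation equivariance of the center function along the tiling lattice: for every $a \in \Lambda$ and every $x \in \Z^D$,
\[
c(x + a) \;=\; c(x) + a.
\]
This is because the set of centers $\cT$ equals $\Lambda$, and translation by a lattice vector is a symmetry of the tiling sending the copy $\cS + c(x)$ to the copy $\cS + c(x) + a$. Applying this identity with $a = -c(i\delta) \in \Lambda$ to the point $(i+1)\delta - c(i\delta) = p_i + \delta$ yields $c(p_i + \delta) = c((i+1)\delta) - c(i\delta)$, and substituting into the recurrence gives $p_{i+1} = (i+1)\delta - c((i+1)\delta) = g(i+1)$.

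Once $p_i = g(i)$ is established, the lemma is essentially a counting statement. By Lemma \ref{lem:center} every $g(i)$ lies in $\cS_1$, the copy of $\cS$ whose center is the origin, and $|\cS_1| = |\cS|$. Hence the folded-row visits every point of $\cS$ if and only if $g(0), g(1), \ldots, g(|\cS|-1)$ are pairwise distinct. (As an alternative finish, Lemma \ref{lem:order_fold} could be invoked: a collision $g(i_1) = g(i_2)$ with $i_1 < i_2 < |\cS|$ would propagate to all shifts, forcing $g$ to be periodic of period $i_2 - i_1 < |\cS|$ and so preventing the folded-row from covering $\cS_1$.) The main obstacle is the equivariance identity $c(x + a) = c(x) + a$; it is intuitive but needs the explicit observation that $\cT = \Lambda$ is invariant under its own translations, so it deserves an explicit line in the proof.
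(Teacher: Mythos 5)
Your proof is correct and follows essentially the same route as the paper, which simply declares the lemma an immediate consequence of Lemma~\ref{lem:center}, Lemma~\ref{lem:order_fold}, and the definition of folding. The only difference is that you make explicit the identification $p_i = g(i)$ via the equivariance $c(x+a)=c(x)+a$ for $a\in\Lambda$ — a detail the paper (and the proof of Lemma~\ref{lem:order_fold}) takes for granted — which is a worthwhile clarification but not a different argument.
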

\begin{proof}
The lemma is an immediate consequence of
Lemmas~\ref{lem:center},~\ref{lem:order_fold}, and the definition
of folding.
\end{proof}

\begin{lemma}
\label{lem:tile_fold2} Let $\Lambda$ be a lattice tiling for the
shape $\cS$ and let $\delta = (d_1,d_2,\ldots,d_D)$ be a nonzero
ternary vector. $(\Lambda,\cS,\delta)$ defines a folding if and
only if $(|\cS| \cdot d_1,\ldots,|\cS| \cdot d_D)-c(|\cS| \cdot
d_1,\ldots,|\cS| \cdot d_D)=(0,\ldots,0)$ and for each $i$, $0 < i
< |\cS|$ we have $(i \cdot d_1,\ldots,i \cdot d_D)-c(i \cdot
d_1,\ldots,i \cdot d_D) \neq (0, \ldots ,0)$.
\end{lemma}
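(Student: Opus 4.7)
The plan is to deduce this lemma from Lemma~\ref{lem:tile_fold1}, which already characterizes the folding property via distinctness of the values $g(i) = (i \cdot d_1,\ldots,i \cdot d_D) - c(i \cdot d_1,\ldots,i \cdot d_D)$ for $0 \le i < |\cS|$, and from Lemma~\ref{lem:order_fold}, which lets us shift indices while preserving equality/inequality of $g$-values. The latter is the main engine: since its statement is an iff in $i_1,i_2$ versus $i_1+1,i_2+1$, it iterates in both directions, so $g(i_1)=g(i_2)$ is equivalent to $g(i_1+k)=g(i_2+k)$ for every integer $k$.

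For the forward direction, assume $(\Lambda,\cS,\delta)$ defines a folding. By Lemma~\ref{lem:tile_fold1}, $g(0),g(1),\dots,g(|\cS|-1)$ are $|\cS|$ distinct points, and all lie in $\cS_1$ by Lemma~\ref{lem:center}; since $g(0)=(0,\dots,0)$, the nonvanishing clause $g(i)\ne(0,\dots,0)$ for $0<i<|\cS|$ follows. To get $g(|\cS|)=(0,\dots,0)$, note that $g(|\cS|)\in \cS_1$, and since $\{g(0),\dots,g(|\cS|-1)\}$ exhausts $\cS_1$, one has $g(|\cS|)=g(j)$ for some $0\le j<|\cS|$. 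Iterating Lemma~\ref{lem:order_fold} backwards $j$ times gives $g(|\cS|-j)=g(0)=(0,\dots,0)$; by the nonvanishing clause just derived, this forces $j=0$, i.e.\ $g(|\cS|)=(0,\dots,0)$.

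For the reverse direction, assume both conditions. If $g(i_1)=g(i_2)$ for some $0\le i_1<i_2<|\cS|$, then applying Lemma~\ref{lem:order_fold} backwards $i_1$ times yields $g(0)=g(i_2-i_1)$, so $g(i_2-i_1)=(0,\dots,0)$ with $0<i_2-i_1<|\cS|$, contradicting the second assumption. Hence the $g(i)$ are pairwise distinct for $0\le i<|\cS|$, and Lemma~\ref{lem:tile_fold1} concludes that $(\Lambda,\cS,\delta)$ defines a folding.

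The only real subtlety is the bookkeeping around Lemma~\ref{lem:order_fold}, namely that its iff-formulation legitimately allows both forward and backward iteration on the index $i$, which is what lets a single ``zero'' of $g$ propagate into a full periodicity of length $|\cS|$. The condition $g(|\cS|)=(0,\dots,0)$ is in fact already implied by the nonvanishing clause together with distinctness, so it is listed only for convenience when the lemma is used in the sequel.
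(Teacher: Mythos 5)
Your proof is correct and follows essentially the same route as the paper: both directions hinge on iterating Lemma~\ref{lem:order_fold} to shift indices, and your reverse direction is identical to the paper's. The only difference is cosmetic — you route the forward direction explicitly through Lemma~\ref{lem:tile_fold1} and supply a pigeonhole argument for $g(|\cS|)=(0,\ldots,0)$, a step the paper asserts with less detail; your closing observation that this clause is redundant given the nonvanishing condition is also accurate.
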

\begin{proof}
Assume first that $(\Lambda,\cS,\delta)$ defines a folding. If for
some $0 < j < |\cS|$ we have $(j \cdot d_1,\ldots,j \cdot d_D)-c(j
\cdot d_1,\ldots,j\cdot d_D) = (0, \ldots ,0)$ then $g(j)=g(0)$
and hence by Lemma~\ref{lem:order_fold} the folded-row will have
at most $j$ elements of $\cS$. Since $j < |\cS|$ we will have that
$(\Lambda,\cS,\delta)$ does not define a folding. On the other
hand, Lemma~\ref{lem:order_fold} also implies that if
$(\Lambda,\cS,\delta)$ defines a folding then $g(|\cS|)=
(0,\ldots,0)$.

Now assume that $(|\cS| \cdot d_1,\ldots,|\cS| \cdot d_D)-c(|\cS|
\cdot d_1,\ldots,|\cS| \cdot d_D)=(0,\ldots,0)$ and for each $i$,
$0 < i < |\cS|$ we have $(i \cdot d_1,\ldots,i \cdot d_D)-c(i
\cdot d_1,\ldots,i \cdot d_D) \neq (0, \ldots ,0)$. Let $0 < i_1 <
i_2 < |\cS|$; if $g(i_1) = g(i_2)$ then by
Lemma~\ref{lem:order_fold} we have $g(i_2-i_1)=g(0)=(0, \ldots
,0)$, a contradiction. Therefore, the folded-row contains all the
elements of $\cS$ and hence by definition $(\Lambda,\cS,\delta)$
defines a folding.
\end{proof}
\begin{cor}
\label{cor:lattice_points} If $(\Lambda,\cS,\delta)$, $\delta =
(d_1,d_2,\ldots,d_D)$, defines a folding then the point $(|\cS|
\cdot d_1,\ldots,|\cS| \cdot d_D)$ is a lattice point.
\end{cor}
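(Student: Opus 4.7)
The plan is to derive this as a one-step consequence of Lemma~\ref{lem:tile_fold2}, combined with the definitional fact that, in a lattice tiling, the set of centers of the copies of $\cS$ coincides with the lattice $\Lambda$ itself. So my proof will essentially just chase the definitions, with no real combinatorial content to produce.

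Concretely, the first step is to invoke Lemma~\ref{lem:tile_fold2}: since $(\Lambda,\cS,\delta)$ defines a folding, we have
$$(|\cS|\cdot d_1,\ldots,|\cS|\cdot d_D) - c(|\cS|\cdot d_1,\ldots,|\cS|\cdot d_D) = (0,\ldots,0),$$
which I rewrite as the identity
$$(|\cS|\cdot d_1,\ldots,|\cS|\cdot d_D) = c(|\cS|\cdot d_1,\ldots,|\cS|\cdot d_D).$$

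The second step is to recall that, because $\Lambda$ is a lattice tiling for $\cS$, the set $\cT$ of centers of the copies of $\cS$ equals $\Lambda$. This was built into the set-up of Section~\ref{sec:FoldTile}: the origin is assumed to be a center, the tiling is $(\cT,\cS)$ with $\cT = \Lambda$, and $c(i_1,\ldots,i_D)$ is by definition the center of the copy of $\cS$ containing $(i_1,\ldots,i_D)$, hence a lattice point. Consequently the right-hand side above is a point of $\Lambda$, and therefore so is the left-hand side, which is exactly the claim.

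There is no genuine obstacle here; the whole content of the corollary is the translation of the folding-closure condition of Lemma~\ref{lem:tile_fold2} into the language of the lattice. If anything, the only thing worth a sentence of care is to remind the reader that $c(\cdot)$ always returns a center, and that every center is a lattice point — both of which were established when the notation was introduced, so no additional lemma is required.
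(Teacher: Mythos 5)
Your proof is correct and follows exactly the route the paper intends: the paper states this as an unproved corollary of Lemma~\ref{lem:tile_fold2}, and the only content is precisely your observation that the folding condition forces the point to equal its own center $c(\cdot)$, which by the definition of a lattice tiling is a point of $\Lambda$. Nothing is missing.
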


\vspace{0.5cm}

Before considering the general $D$-dimensional case we want to
give a simple condition to check whether the triple
$(\Lambda,\cS,\delta)$ defines a folding in the two-dimensional
case. For each one of the four possible ternary vector we will
give a necessary and sufficient condition that the triple
$(\Lambda,\cS,\delta)$ defines a folding.

\begin{lemma}
\label{lem:det_points} Let $G$ be the generator matrix of a
lattice $\Lambda$ and let $s=| \det G |$. Then the points $(0,s)$,
$(s,0)$, $(s,s)$, and $(s,-s)$ are lattice points.
\end{lemma}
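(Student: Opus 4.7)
The plan is to leverage the elementary group-theoretic fact that, because $\Lambda$ is a sublattice of $\Z^D$ of index $s = |\det G|$, the quotient $\Z^D/\Lambda$ is a finite abelian group of order $s$. By Lagrange's theorem, every element of this quotient has order dividing $s$, so $s v \in \Lambda$ for every $v \in \Z^D$.

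Concretely, I would first recall (this is standard and consistent with the paper's setup, since $\Lambda$ is assumed to be an integer sublattice) that for a full-rank sublattice $\Lambda \subseteq \Z^D$ with generator matrix $G$, the index $[\Z^D : \Lambda]$ equals $|\det G| = s$. Then I would invoke Lagrange on $\Z^D/\Lambda$ to conclude $s\Z^D \subseteq \Lambda$. Applying this in the two-dimensional setting at hand to the four integer vectors $(1,0)$, $(0,1)$, $(1,1)$, and $(1,-1)$ immediately gives that $(s,0)$, $(0,s)$, $(s,s)$, and $(s,-s)$ all lie in $\Lambda$, which is exactly the claim.

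There is really no obstacle here; the only thing worth pausing on is the justification that $s\Z^D \subseteq \Lambda$. If one prefers an explicit argument over citing Lagrange, the $s$ cosets $v + \Lambda$ for $v$ ranging over a fundamental domain are all distinct and the group $\Z^D/\Lambda$ has exactly $s$ elements, so $s \cdot (v + \Lambda) = \Lambda$ for every $v$, which is the same statement. Either way, the lemma follows in one line once this observation is in place.
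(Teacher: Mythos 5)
Your proof is correct, but it takes a different route from the paper's. The paper argues by direct computation: writing $G$ with rows $(v_{11},v_{12})$ and $(v_{21},v_{22})$ and taking $s=v_{11}v_{22}-v_{12}v_{21}$, it exhibits the explicit integer combinations $v_{22}(v_{11},v_{12})-v_{12}(v_{21},v_{22})=(s,0)$ and $v_{11}(v_{21},v_{22})-v_{21}(v_{11},v_{12})=(0,s)$ --- essentially the $2\times 2$ adjugate identity $G\,\mathrm{adj}(G)=(\det G)I$ --- and then gets $(s,s)$ and $(s,-s)$ by additivity. You instead invoke the index formula $[\Z^D:\Lambda]=|\det G|$ for an integer sublattice together with Lagrange's theorem to conclude $s\Z^D\subseteq\Lambda$, and specialize to the four vectors $(1,0),(0,1),(1,1),(1,-1)$. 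Your argument is valid (the lemma is indeed stated for integer lattices, per the paper's standing assumption), and it proves something strictly stronger and dimension-free: $sv\in\Lambda$ for every $v\in\Z^D$, which is exactly what is needed later for Corollary~\ref{cor:lattice_points} and the $D$-dimensional generalization in Appendix A. The trade-off is that you lean on the index formula, which is standard but not proved in the paper, whereas the paper's cofactor computation is entirely self-contained; if you wanted your version to stand alone at the same level of detail, you would either cite the index formula explicitly or note that the adjugate identity gives $s\Z^D\subseteq\Lambda$ directly in any dimension.
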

\begin{proof}
It is sufficient to prove that the points $(0,s)$, $(s,0)$ are
lattice points. Let $\Lambda$ be a lattice whose generator matrix
is given by
$$
G=\left[\begin{array}{cc}
v_{11} & v_{12} \\
v_{21} & v_{22}
\end{array}\right]~.
$$
W.l.o.g. we assume that $| \det G | > 0$, i.e., $s= v_{11} v_{22}
- v_{12} v_{21}$. Since $v_{22} ( v_{11} , v_{12} ) - v_{12} (
v_{21} , v_{22} ) = (s,0)$ and $v_{11} ( v_{21} , v_{22} ) -
v_{21} ( v_{11} , v_{12} ) = (0,s)$, it follows that $(0,s)$,
$(s,0)$ are lattice points.
\end{proof}

\begin{theorem}
\label{thm:cond_fold2D} Let $\Lambda$ be a lattice whose generator
matrix is given by
$$
G=\left[\begin{array}{cc}
v_{11} & v_{12} \\
v_{21} & v_{22}
\end{array}\right]~.
$$
If $\Lambda$ defines a lattice tiling for the shape $\cS$ then the
triple $(\Lambda,\cS,\delta)$ defines a folding

\begin{itemize}
\item with the ternary vector $\delta =(+1,+1)$ if and only if
$\text{g.c.d.}(v_{22}-v_{21},v_{11}-v_{12})=1$;

\item with the ternary vector $\delta =(+1,-1)$ if and only if
$\text{g.c.d.}(v_{22}+v_{21},v_{11}+v_{12})=1$;

\item with the ternary vector $\delta =(+1,0)$ if and only if
$\text{g.c.d.}(v_{12},v_{22})=1$;

\item with the ternary vector $\delta =(0,+1)$ if and only if
$\text{g.c.d.}(v_{11},v_{21})=1$.
\end{itemize}
\end{theorem}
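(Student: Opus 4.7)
By Lemma~\ref{lem:tile_fold2}, the triple $(\Lambda,\cS,\delta)$ defines a folding iff $|\cS|$ is the \emph{smallest} positive integer $i$ for which $(id_1,id_2)-c(id_1,id_2)=(0,0)$. My first step is to observe that, because the origin is a center of a copy of $\cS$ and centers differ by lattice vectors, the set of centers is exactly $\Lambda$; consequently $(id_1,id_2)-c(id_1,id_2)=(0,0)$ holds iff the point $(id_1,id_2)$ is itself a lattice point. Thus the theorem reduces to the purely lattice-theoretic statement: \emph{$(\Lambda,\cS,\delta)$ defines a folding iff $s:=|\det G|$ is the smallest positive integer $i$ with $(id_1,id_2)\in\Lambda$}. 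By Lemma~\ref{lem:det_points}, $(sd_1,sd_2)$ is always in $\Lambda$ for each of the four directions, so $s$ is an upper bound on this smallest $i$, and what needs checking is whether any smaller positive multiple of $\delta$ lies in $\Lambda$.

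The second step is to solve, for each direction, the lattice-membership equation
\[
a(v_{11},v_{12})+b(v_{21},v_{22})=(id_1,id_2),\qquad a,b\in\Z.
\]
For $\delta=(+1,0)$ the second coordinate gives $av_{12}+bv_{22}=0$; writing $g=\gcd(v_{12},v_{22})$, $v_{12}=g\alpha$, $v_{22}=g\beta$ with $\gcd(\alpha,\beta)=1$, the integer solutions are $(a,b)=k(\beta,-\alpha)$, and substituting back yields $i=k(\beta v_{11}-\alpha v_{21})=k(v_{11}v_{22}-v_{12}v_{21})/g=\pm ks/g$. So the smallest positive $i$ is $s/g$, which equals $s$ iff $g=1$. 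The case $\delta=(0,+1)$ is symmetric and yields $\gcd(v_{11},v_{21})=1$.

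For the diagonal directions $\delta=(+1,\pm 1)$, eliminating $i$ between the two coordinate equations gives, respectively,
\[
a(v_{11}-v_{12})+b(v_{21}-v_{22})=0 \quad\text{and}\quad a(v_{11}+v_{12})+b(v_{21}+v_{22})=0.
\]
Running the identical Bezout argument with $g=\gcd(v_{22}-v_{21},v_{11}-v_{12})$ (resp. $\gcd(v_{22}+v_{21},v_{11}+v_{12})$) and using the algebraic identity
\[
\beta v_{11}-\alpha v_{21}=\tfrac{(v_{22}\mp v_{21})v_{11}-(v_{11}\mp v_{12})v_{21}}{g}=\pm\tfrac{v_{11}v_{22}-v_{12}v_{21}}{g}=\pm\tfrac{s}{g}
\]
again shows the smallest positive $i$ is $s/g$, which equals $s$ iff $g=1$.

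The only delicate points I anticipate are sign bookkeeping in the phrase ``smallest positive $i$'' (handled by taking absolute values in $s$ and by noting that $k$ ranges over $\Z\setminus\{0\}$), and verifying that the gcd is well-defined in each case: $\gcd(v_{12},v_{22})=0$ would force $v_{12}=v_{22}=0$ and hence $\det G=0$, contradicting that $\Lambda$ is a tiling, and the analogous remark applies to the other three cases. Beyond that, the argument is a single Bezout computation repeated four times and feeding directly into the Lemma~\ref{lem:tile_fold2} criterion.
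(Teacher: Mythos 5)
Your proof is correct and takes essentially the same route as the paper: both reduce, via Lemma~\ref{lem:tile_fold2} (together with the observation that $(id_1,id_2)-c(id_1,id_2)=(0,0)$ exactly when $(id_1,id_2)$ is a lattice point) and Lemma~\ref{lem:det_points}, to determining the smallest positive $i$ with $i\delta\in\Lambda$, and both identify it as $|\det G|/g$ for the stated gcd $g$ by the same Bezout/proportionality computation. The only blemish is a sign slip in your displayed identity in the $(+1,+1)$ case (the correct expansion is $(v_{22}-v_{21})v_{11}+(v_{11}-v_{12})v_{21}=\det G$), which, as you anticipate, is absorbed by the absolute-value bookkeeping and does not affect the conclusion.
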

\begin{proof}
We will prove the case where $\delta =(+1,+1)$; the other three
cases are proved similarly.

Let $\Lambda$ be a lattice tiling for the shape $\cS$. By
Lemma~\ref{lem:det_points} we have that $(|\cS|,|\cS|)$ is a
lattice point. Therefore, there exist two integers $\alpha_1$ and
$\alpha_2$ such that $\alpha_1 (v_{11},v_{12}) + \alpha_2
(v_{21},v_{22}) =(|\cS|,|\cS|)$, i.e., $\alpha_1 v_{11} + \alpha_2
v_{21} = \alpha_1 v_{12} + \alpha_2 v_{22}= | \cS |= v_{11} v_{22}
- v_{12} v_{21}$. These equations have exactly one solution,
$\alpha_1 = v_{22}-v_{21}$ and $\alpha_2 = v_{11}-v_{12}$.
By Lemma~\ref{lem:tile_fold2}, $(\Lambda,\cS,\delta)$ defines a
folding if and only if $(|\cS|,|\cS|)=c(|\cS|,|\cS|)$ and for each
$i$, $0 < i < |\cS|$ we have $(i,i) \neq c(i,i)$.

Assume first that $\text{g.c.d.}(v_{22}-v_{21},v_{11}-v_{12})=1$.
Assume for the contrary, that there exist three integers $i$,
$\beta_1$, and $\beta_2$, such that $\beta_1 (v_{11},v_{12}) +
\beta_2 (v_{21},v_{22}) =(i,i)$, $0 < i < |\cS|$. Hence, $\beta_1
v_{11} + \beta_2 v_{21} = \beta_1 v_{12} + \beta_2 v_{22} = i$,
i.e., $\frac{\beta_2}{\beta_1} =
\frac{v_{11}-v_{12}}{v_{22}-v_{21}}=\frac{\alpha_2}{\alpha_1}$.
Since $\alpha_1 = v_{22}-v_{21}$ and
$\text{g.c.d.}(v_{22}-v_{21},v_{11}-v_{12})=1$, it follows that
$\beta_1 = \gamma \alpha_1$ and $\beta_2 = \gamma \alpha_2$, for
some integer $\gamma > 0$ (w.l.o.g. we can assume tha $\gamma >
0$). Therefore, $i=\beta_1 v_{11} + \beta_2 v_{21} = \gamma
\alpha_1 v_{11} + \gamma \alpha_2 v_{21} = \gamma |S| \geq |\cS|$,
a contradiction. Thus, it follows from Lemma~\ref{lem:tile_fold2}
that if $\text{g.c.d.}(v_{22}-v_{21},v_{11}-v_{12})=1$ then
$(\Lambda,\cS,\delta)$ defines a folding with the ternary vector
$\delta =(+1,+1)$.

Assume now that $(\Lambda,\cS,\delta)$ defines a folding with the
ternary vector $\delta =(+1,+1)$. Assume for the contrary that
$\text{g.c.d.}(v_{22}-v_{21},v_{11}-v_{12})=\nu>1$. Since
$\text{g.c.d.}(v_{22}-v_{21},v_{11}-v_{12})=\nu>1$, it follows
that $\beta_1 = \frac{v_{22}-v_{21}}{\nu}$ and $\beta_2 =
\frac{v_{11}-v_{12}}{\nu}$ are integers. Therefore, $\beta_1
(v_{11},v_{12}) + \beta_2 (v_{21},v_{22})
=(\frac{|\cS|}{\nu},\frac{|\cS|}{\nu})$ and as a consequence
$\frac{|\cS|}{\nu}$ is an integer. Hence, by
Lemma~\ref{lem:tile_fold2} we have that $(\Lambda,\cS,\delta)$
does not define a folding, a contradiction. Thus, if
$(\Lambda,\cS,\delta)$ defines a folding with the ternary vector
$\delta =(+1,+1)$ then
$\text{g.c.d.}(v_{22}-v_{21},v_{11}-v_{12})=1$.
\end{proof}

Theorem~\ref{thm:cond_fold2D} is generalized for the
$D$-dimensional. This generalization will be presented in
Theorem~\ref{thm:new_general_condition} given in Appendix A.

There are cases when we can determine immediately without going
into all the computation, whether $(\Lambda,\cS,\delta)$ defines a
folding. It will be a consequence of the following lemmas.

\begin{lemma}
\label{lem:size_fold} $~$
\begin{itemize}
\item The number of elements in a folded-row does not depend on
the point of $\cS$ chosen to be the center of $\cS$.

\item The number of elements in a folded-row is a divisor of
$|\cS|$, i.e., a divisor of $V( \Lambda )$.
\end{itemize}
\end{lemma}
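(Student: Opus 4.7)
The plan is to reduce both assertions to a single group-theoretic characterization of when two indices yield the same point of the folded-row, and then invoke a standard fact about integer lattices. With $g(i)=(id_1,\ldots,id_D)-c(id_1,\ldots,id_D)$ as in Lemma~\ref{lem:order_fold}, I claim that for any choice of the distinguished center point of $\cS$,
\[
g(i_1)=g(i_2)\ \Longleftrightarrow\ (i_1-i_2)\delta\in\Lambda,
\]
where $\delta=(d_1,\ldots,d_D)$. Indeed, whatever point of $\cS$ is declared to be the center, the set of all tile-centers is a single coset of $\Lambda$ in $\Z^D$; hence $c(i_1\delta)-c(i_2\delta)\in\Lambda$, and the equivalence follows from the translation property $c(x+\lambda)=c(x)+\lambda$ for $\lambda\in\Lambda$ (translation by a lattice vector permutes tiles and therefore permutes centers).

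For the first bullet, let $k$ be the smallest positive integer with $g(k)=g(0)$; by Lemma~\ref{lem:order_fold}, $k$ is exactly the number of elements of the folded-row. By the displayed equivalence, $k$ is the smallest positive integer with $k\delta\in\Lambda$, a quantity that depends only on $\Lambda$ and $\delta$ and not on which point of $\cS$ was chosen as the center. Hence the length of the folded-row is invariant under that choice.

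For the second bullet, the set $H=\{i\in\Z:\ i\delta\in\Lambda\}$ is a subgroup of $\Z$, so $H=k\Z$, and it suffices to show $|\cS|\in H$, i.e., that $|\cS|\delta\in\Lambda$. From the adjugate identity $\operatorname{adj}(\bG)\,\bG=\det(\bG)\,I$ together with $|\det\bG|=|\cS|$, every vector $\pm|\cS|\,\epsilon_j$ appears as a row of $\operatorname{adj}(\bG)\bG$ and is therefore an integer combination of the rows of $\bG$, hence lies in $\Lambda$. Since $\delta\in\{-1,0,+1\}^D\subset\Z^D$, we conclude $|\cS|\delta=\sum_{j=1}^{D}d_j(|\cS|\epsilon_j)\in\Lambda$, so $k$ divides $|\cS|=V(\Lambda)$.

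The main subtlety is the first bullet: one has to track what relabelling the center of $\cS$ does to the recursive folded-row construction. The key observation is that such a relabelling merely translates the whole set of tile-centers by a constant vector while leaving the lattice $\Lambda$ untouched, so the equivalence relation $g(i_1)=g(i_2)$---characterized intrinsically by $(i_1-i_2)\delta\in\Lambda$---is unaffected. The second bullet is then a short application of the adjugate identity.
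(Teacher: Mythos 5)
Your proof is correct. For the first bullet you and the paper do essentially the same thing: both identify the length of the folded-row as the smallest positive $t$ with $t\delta\in\Lambda$, a quantity depending only on $\Lambda$ and $\delta$; your coset formulation of the center-change (the set of tile-centers is translated but $\Lambda$ is not) makes explicit what the paper states more briefly (``any point can be at the origin''). For the second bullet, however, your route is genuinely different. The paper argues combinatorially: all folded-rows with direction $\delta$ have the same length $t$, any two are equal or disjoint, and together they cover $\cS$, so $t$ divides $|\cS|$ --- an orbit/partition argument. You instead observe that $H=\{i\in\Z:\ i\delta\in\Lambda\}$ is a subgroup $k\Z$ of $\Z$ and prove directly that $|\cS|\,\delta\in\Lambda$ via the adjugate identity $\operatorname{adj}(\bG)\,\bG=\det(\bG)I$, which shows $|\cS|\,\epsilon_j\in\Lambda$ for every $j$. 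This is in effect the $D$-dimensional generalization of the paper's Lemma~\ref{lem:det_points}, which the paper only proves for $D=2$ by explicit computation, and it yields unconditionally the conclusion of Corollary~\ref{cor:lattice_points} (there derived only under the assumption that the triple defines a folding). Your approach buys a self-contained algebraic proof that sidesteps the need to justify that distinct folded-rows partition $\cS$; the paper's approach buys the extra structural information that the folded-rows do partition $\cS$ into $|\cS|/t$ classes of equal size, which is useful intuition elsewhere in the paper. Both arguments are sound.
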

\begin{proof}
By Lemmas~\ref{lem:order_fold} and~\ref{lem:tile_fold2} and the
definition of the folded-row, if we start the folded-row in the
origin then the number of elements in the folded-row is the
smallest $t$ such that $t \cdot \delta$ is a lattice point (since
the folded-row starts at a lattice point and ends one step before
it reaches again a lattice point). This implies that the number of
elements in a folded-row does not depend on the point of $\cS$
chosen to be the center of $\cS$. We can make any point of $\cS$
to be the center of $\cS$ and hence any point can be at the
origin. Therefore, all folded-rows with the direction $\delta$
have $t$ elements. Any two folded-rows are either equal or
disjoint. Hence $t$ must be a divisor $|\cS|$ and $t$ does not
depend on which point of $\cS$ is the center.
\end{proof}

The next lemma is an immediate consequence from the definition of
a folded-row.

\begin{lemma}
The number of elements in a folded-row is one if and only if
$\delta$ is a lattice point.
\end{lemma}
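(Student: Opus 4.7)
The plan is to invoke the characterization of folded-row length established inside the proof of Lemma~\ref{lem:size_fold}, namely that the number of elements of a folded-row equals the smallest positive integer $t$ for which $t\cdot\delta$ is a lattice point of $\Lambda$. Once this is accepted, both directions of the equivalence are one-line observations: $t=1$ is the minimum positive integer with $t\delta\in\Lambda$ precisely when $\delta$ itself is a lattice point.

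For the sake of a self-contained argument I would also unfold the definition directly. Starting the folded-row at the origin, the next candidate position is $\delta$. If $\delta\in\Lambda$, then $\delta$ is the center of some copy $\cS_2$ of $\cS$; since $\delta\neq 0$ (as a nonzero ternary vector) and the disjoint copies of a tiling meet the origin only in $\cS_1$, we have $\delta\notin\cS_1$, so the second case of the recursive rule applies with $c(\delta)=\delta$, yielding the next point $\delta-c(\delta)=(0,\dots,0)$. Thus the folded-row returns to the origin after one step and consists of a single element.

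Conversely, if the folded-row has only one element, then the step from the origin must loop back to the origin. The first case of the rule cannot apply (it would produce $\delta\neq 0$ as the next point), so we are in the second case with $\delta - c(\delta) = (0,\dots,0)$. Hence $\delta = c(\delta)$, and since every center in a lattice tiling is a lattice point, $\delta\in\Lambda$.

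There is no real obstacle here; the content of the lemma is essentially a book-keeping statement about the very first step of the recursion defining the folded-row, combined with the fact that in a lattice tiling the set of centers coincides with $\Lambda$. The only mild subtlety worth flagging explicitly in the write-up is the observation that $\delta\in\Lambda$ together with $\delta\neq 0$ forces $\delta\notin\cS_1$, so that the relevant branch of the recursive definition is the one that subtracts $c(\delta)$.
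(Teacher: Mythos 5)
Your proof is correct and matches the paper's intent: the paper states this lemma without proof as ``an immediate consequence from the definition of a folded-row,'' and your argument simply spells out that first step of the recursion (together with the characterization, already derived in the proof of Lemma~\ref{lem:size_fold}, of the folded-row length as the least $t$ with $t\delta\in\Lambda$). No gaps; the only cosmetic point is that the reason $\delta\notin\cS_1$ is that $\delta$ lies in the distinct copy $\cS_2$ centered at $\delta$ and distinct copies in a tiling are disjoint.
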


\begin{cor}
\label{cor:prime_fold} If the volume of a lattice is a prime
number then it defines a folding with a direction $\delta$ unless
$\delta$ is a lattice point.
\end{cor}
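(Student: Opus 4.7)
The plan is to combine the two lemmas immediately preceding the corollary. By Lemma~\ref{lem:size_fold}, the number of elements in a folded-row determined by $(\Lambda,\cS,\delta)$ is a divisor of $|\cS|=V(\Lambda)$. Since $V(\Lambda)$ is prime by hypothesis, the only possibilities for this number are $1$ and $V(\Lambda)$.

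Next I would invoke the lemma stating that the number of elements in the folded-row equals $1$ if and only if $\delta$ is itself a lattice point. Thus, under the assumption that $\delta$ is not a lattice point, the length-$1$ case is ruled out, leaving only the possibility that the folded-row contains exactly $V(\Lambda)=|\cS|$ elements. By the definition of folding (a triple $(\Lambda,\cS,\delta)$ defines a folding precisely when its folded-row exhausts all elements of $\cS$), this means $(\Lambda,\cS,\delta)$ defines a folding.

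There is no real obstacle here: the corollary is essentially a one-line consequence of the divisibility fact from Lemma~\ref{lem:size_fold} together with the characterization of the trivial case. The only subtlety is to note that a nonzero $\delta$ which is not a lattice point still fits the setup of Lemma~\ref{lem:size_fold} (i.e., the definition of a folded-row does not presuppose that $(\Lambda,\cS,\delta)$ actually folds), so the dichotomy ``length $1$ or length $|\cS|$'' genuinely applies and immediately yields the result.
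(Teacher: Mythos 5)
Your argument is correct and is exactly the one the paper intends: the corollary is stated as an immediate consequence of Lemma~\ref{lem:size_fold} (the folded-row length divides $|\cS|=V(\Lambda)$) and the preceding lemma characterizing length one as $\delta$ being a lattice point, so primality forces the length to be $|\cS|$ whenever $\delta$ is not a lattice point. The paper gives no separate proof, and your one-line derivation matches its reasoning.
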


By Theorem~\ref{thm:new_general_condition} it is clear that we can
determine whether the triple $(\Lambda,\cS,\delta)$ defines a
folding only by the lattice $\Lambda$ and the ternary direction
vector $\delta$. The role of $\cS$ is only in the fact that
$\Lambda$ should be a lattice tiling for $\cS$. But, it would be
easier to examine simpler shapes (like rectangle) than more
complicated shapes even so they have the same lattice tiling
$\Lambda$. This leads to an important tool that we will use to
find an appropriate folding for a shape $\cS'$. We will use a
folding of a simpler shape $\cS$ with the same volume and apply
iteratively the following theorem. The proof of the theorem is an
immediate consequence from the definitions of lattice tiling and
folding.

\begin{theorem}
\label{thm:trans_shape} Let $\Lambda$ be a lattice tiling for the
$D$-dimensional shape $\cS$, let $\delta=(d_1,d_2,\ldots,d_D)$ be
a nonzero ternary vector, and $(\Lambda,\cS,\delta)$ defines a
folding. Assume the origin is a point in the copy $\cS'$ of $\cS$,
$(i_1,i_2,\ldots,i_D) \in \cS'$, $(i_1+d_1,i_2+d_2,\ldots,i_D+d_D)
\in \tilde{\cS}$, $\cS' \neq \tilde{\cS}$, and the center of
$\tilde{\cS}$ is the point $(c_1,c_2,\ldots,c_D)$. Then $\Lambda$
is also a lattice tiling for the shape $\cQ=\cS' \cup
\{(i_1+d_1,i_2+d_2,\ldots,i_D+d_D)\} \setminus
\{(i_1+d_1-c_1,i_2+d_2-c_2,\ldots,i_D+d_D-c_D)\}$ and the triple
$(\Lambda,\cQ,\delta)$ also defines a folding.
\end{theorem}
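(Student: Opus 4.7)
The plan is to verify the two conclusions independently, both reducing to combinatorial checks that the earlier lemmas have already abstracted out. The key conceptual point is that, because $\Lambda$ remains unchanged, the set of lattice points is the same, and every criterion we need (tiling and folding) has already been rephrased in terms of $\Lambda$ and cardinalities only.

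First I would show that $\Lambda$ is a lattice tiling for $\cQ$ via the corollary after Lemma~1. The cardinality count is immediate: by Lemma~\ref{lem:center} the point $(i_1+d_1-c_1,\ldots,i_D+d_D-c_D) = (i_1+d_1,\ldots,i_D+d_D)-(c_1,\ldots,c_D)$ lies in the copy $\cS_1$ of $\cS$ whose center is at the origin; translating by the center of $\cS'$ places it in $\cS'$, so removing it from $\cS'$ is legitimate. On the other hand $(i_1+d_1,\ldots,i_D+d_D)\in\tilde{\cS}\neq\cS'$ was not already in $\cS'$, so $|\cQ|=|\cS|=|\det\bG|$. For the disjointness-of-differences condition, the only potentially new conflict involves the added point $P=(i_1+d_1,\ldots,i_D+d_D)$; if some $Q\in\cQ\setminus\{P\}$ satisfied $P-Q\in\Lambda$, then since $(c_1,\ldots,c_D)\in\Lambda$ we would also have $(P-(c_1,\ldots,c_D))-Q=(i_1+d_1-c_1,\ldots,i_D+d_D-c_D)-Q\in\Lambda$, putting two distinct points of $\cS'$ at a lattice-vector difference, contradicting that $\Lambda$ tiles by $\cS$. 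Hence the corollary yields that $\Lambda$ is a lattice tiling for $\cQ$.

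Second I would invoke Lemma~\ref{lem:tile_fold2} to show that $(\Lambda,\cQ,\delta)$ defines a folding. Observe that in any lattice tiling the centers are precisely the lattice points, so the condition $i\delta - c(i\delta) = (0,\ldots,0)$ is equivalent to $i\delta\in\Lambda$. Consequently Lemma~\ref{lem:tile_fold2} specializes to a condition that references only $\Lambda$, $\delta$, and the cardinality $|\cS|$: namely, $|\cS|\delta\in\Lambda$ while $i\delta\notin\Lambda$ for all $0<i<|\cS|$. Since $(\Lambda,\cS,\delta)$ defines a folding this condition holds; and since $|\cQ|=|\cS|$ by the first step, the same condition certifies that $(\Lambda,\cQ,\delta)$ defines a folding.

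I do not anticipate a serious obstacle here. The real work is just the bookkeeping of the first step (checking that the removed point was truly in $\cS'$ and that the added point was truly not), together with the observation that the folding criterion is insensitive to which shape we choose as long as $\Lambda$ and $|\cS|$ are fixed. A minor caveat worth stating explicitly is that the designated center of $\cQ$ can be taken to be any point of $\cQ$ (by Remark~3), so we lose nothing by assuming the origin plays the role of center in the computations above; the folded-row count is center-independent by Lemma~\ref{lem:size_fold}.
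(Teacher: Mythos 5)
Your proof is correct. The paper itself offers no argument for this theorem (it states only that "the proof is an immediate consequence from the definitions of lattice tiling and folding"), and your write-up supplies exactly the details that remark is gesturing at: the cardinality and lattice-difference checks for $\cQ$ via Lemma~\ref{lem:center} and the corollary characterizing lattice tilings, plus the observation that the criterion of Lemma~\ref{lem:tile_fold2} reads as "$i\delta\in\Lambda$ iff $i\equiv 0 \pmod{|\cS|}$" and therefore depends only on $\Lambda$, $\delta$, and $|\cS|=|\cQ|$ --- which is precisely the shape-independence the author emphasizes elsewhere in Section~\ref{sec:folding}.
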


\subsection{Further generalization of folding}
\label{sec:further_generalize}

So far we have used a ternary vector to indicate the direction in
which the supposed folding is performed. The use of a ternary
vector is implied by a natural requirement that consecutive
elements on the folded-row will be also consecutive elements in
the shape (up to cyclic shift). But, as we will see in the sequel,
and specifically in the application of Sections~\ref{sec:bounds}
and~\ref{sec:pseudo-random}, we don't need this requirement. This
leads for further generalization and modification of folding which
will yield a better understanding of the operation and its
properties.

A {\it direction vector} (direction in short) of length $D$,
$(d_1,d_2,\ldots,d_D)$, is a nonzero word of length $D$, where
$d_i \in \Z$. The definitions of a folded-row and folding remain
as before with the exception that instead of a nonzero ternary
vector we use any nonzero integer direction vector. Also, all the
results obtained in this section remain true with the same proofs.
The only exception is Theorem~\ref{thm:cond_fold2D} for which we
need a generalized version which will be given in the sequel.

\begin{lemma}
\label{lem:pos_direct} Let $\Lambda$ be a lattice tiling for the
shape $\cS$. Let $(d_1,d_2,\ldots,d_D)$ be a direction vector,
$(i_1,i_2,\ldots,i_D)$ be a lattice point, and the point
$(d_1,d_2,\ldots,d_D)$ is in the shape $\cS$ whose center is in
the origin. Then the folded-rows defined by the directions
$(d_1,d_2,\ldots,d_D)$ and $(i_1+d_1,i_2+d_2,\ldots,i_d+d_D)$ are
equivalent.
\end{lemma}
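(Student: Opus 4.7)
The plan is to prove something stronger than equivalence --- that the two folded-rows are in fact \emph{identical}. Write $L=(i_1,\ldots,i_D)$, $\delta=(d_1,\ldots,d_D)$, and $\delta'=L+\delta$. The first step is to establish by induction on $j$, directly from the recursive definition of the folded-row, that the $j$-th element of the folded-row defined by any direction $\mu$ is the point $j\mu - c(j\mu)$, which lies in $\cS_1$ by Lemma~\ref{lem:center}. This refines Lemma~\ref{lem:tile_fold1} by recording the order of the elements; the inductive step handles both sub-cases of the recursion uniformly, since $c(x)$ is the unique lattice point whose subtraction places $x$ into $\cS_1$.

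The heart of the argument is translation invariance of the lattice tiling. Because $\cT=\Lambda$ and $L\in\Lambda$, for every integer $j$ the vector $jL$ is also a lattice point, and translation by $jL$ sends the tiling to itself, mapping the tile centered at $c$ to the tile centered at $c+jL$. Applied to the point $j\delta$, this gives the identity
\[
c(j\delta') \;=\; c(j\delta + jL) \;=\; c(j\delta) + jL.
\]
Substituting, the $j$-th element of the folded-row for $\delta'$ becomes
\[
j\delta' - c(j\delta') = (j\delta+jL) - (c(j\delta)+jL) = j\delta - c(j\delta),
\]
which is exactly the $j$-th element of the folded-row for $\delta$.

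It remains to check that the two folded-rows have the same length. By Lemma~\ref{lem:size_fold}, the length of the folded-row for a direction $\mu$ is the smallest $t\geq 1$ with $t\mu\in\Lambda$. For $\mu=\delta'$ this reads $t\delta+tL\in\Lambda$, and since $tL\in\Lambda$ automatically, it is equivalent to $t\delta\in\Lambda$. Hence the two lengths agree, all entries coincide, and the folded-rows are identical and therefore equivalent. There is no serious obstacle here; the only point demanding attention is the translation-invariance identity $c(j\delta')=c(j\delta)+jL$, where one must use that $\cT=\Lambda$ so that adding any lattice vector to a tile center yields another center of the same tiling.
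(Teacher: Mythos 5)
Your proof is correct and rests on exactly the observation the paper uses, namely that adding a lattice vector to a point shifts its tile center by the same lattice vector (the paper states this as $c(i_1+d_1,\ldots,i_D+d_D)=(i_1,\ldots,i_D)$ and declares the lemma immediate). Your version merely makes the argument explicit by writing the $j$-th element of the folded-row in closed form as $j\mu-c(j\mu)$ and applying the translation identity at every step, so the two proofs are essentially the same.
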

\begin{proof}
Follows immediately from the observation that
$c(i_1+d_1,i_2+d_2,\ldots,i_d+d_D)=(i_1,i_2,\ldots,i_D)$.
\end{proof}

In view of Lemma~\ref{lem:pos_direct} we should examine only the
$|\cS|-1$ directions related to the points of $\cS$ whose center
is in the origin. Hence, in the sequel each direction $\delta
=(d_1,d_2,\ldots,d_D)$ will have the property that the point
$(d_1,d_2,\ldots,d_D)$ will be contained in the copy of $\cS$
whose center is in the origin. One might puzzle how this relates
to the observation that the necessary and sufficient conditions
that a direction defines a folding depend only on the generator
matrix of $\Lambda$ and not on $\cS$? The answer is that the
folded-row itself is defined on the elements of $\cS$. Therefore,
$\Lambda$ will have different directions and folded-rows depending
on the shape $\cS$.

\begin{remark}
If we consider only the $|\cS|-1$ directions related to the points
of $\cS$ whose center is in the origin, some on the ternary
direction vectors might not be considered (directions which form
an equivalent folding will be considered). This is another reason
for the distinction between the definitions of direction vectors
(ternary vector and integer vector). Each definition has a
different purpose.
\end{remark}

\begin{lemma}
Let $\Lambda$ be a lattice tiling for the shape $\cS$, $n=|\cS|$.
Let $\delta =(d_1,d_2,\ldots,d_D)$ be a direction vector and let
$f_0 f_1 \ldots f_{n-1}$ be its folded-row, where $f_0 =
(0,0,\ldots,0)$ and $f_1 =(d_1,d_2,\ldots,d_D)$. Then the
direction $\delta' = f_i$ defines a folding if and only if
$\text{g.c.d.}(i,n)=1$. If the direction $\delta' = f_i$ defines a
folding then its folded-row is $f_0 f_i f_{2i} \ldots f_{n-i}$,
where indices are taken modulo $n$.
\end{lemma}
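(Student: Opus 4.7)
The plan is to use the characterizations from Lemmas \ref{lem:order_fold} and \ref{lem:tile_fold2} to express every element of a folded-row as a coset representative modulo the lattice, and then to translate the question of whether $\delta'=f_i$ defines a folding into a purely number-theoretic question about multiplication by $i$ in $\Z_n$.

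First, I would introduce the function $g(k)=k\delta-c(k\delta)$ from Lemma \ref{lem:order_fold}. Since $\delta$ defines a folding, Lemma \ref{lem:tile_fold2} gives $g(n)=(0,\ldots,0)$ and $g(k)\neq(0,\ldots,0)$ for $0<k<n$. Combined with Lemma \ref{lem:order_fold}, this shows that $g$ is $n$-periodic on $\Z$ and that $g(k)=f_{k\bmod n}$ for every integer $k$. In particular $f_i=g(i)=i\delta-c(i\delta)$, so $f_i\equiv i\delta\pmod{\Lambda}$, a fact I would use repeatedly.

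Next, I would compute the folded-row induced by $\delta'=f_i$. For any integer $j$, write
$$
jf_i \;=\; j(i\delta-c(i\delta)) \;=\; ji\delta - jc(i\delta).
$$
Since $c(i\delta)$ is a lattice point, so is $\ell:=jc(i\delta)$. The key observation (which I would justify briefly from the definition of a lattice tiling: if $P\in\cS$ centered at $C$, then $P-\ell\in\cS$ centered at $C-\ell$) is that $c$ is $\Lambda$-equivariant, i.e., $c(jf_i)=c(ji\delta)-\ell$. Therefore
$$
jf_i - c(jf_i) \;=\; ji\delta-c(ji\delta) \;=\; g(ji) \;=\; f_{ji\bmod n}.
$$
So the sequence of candidate folded-row entries for $\delta'$ is exactly $f_0,f_i,f_{2i},f_{3i},\ldots$ with indices taken modulo $n$.

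Now I apply Lemma \ref{lem:tile_fold2} to $\delta'$. The direction $\delta'$ defines a folding iff $jf_i-c(jf_i)\neq(0,\ldots,0)$ for $0<j<n$ and equals $(0,\ldots,0)$ at $j=n$. Translated through the identity just derived, this says $f_{ji\bmod n}\neq f_0$ for $0<j<n$, i.e., $ji\not\equiv 0\pmod n$ for those $j$. This holds precisely when $\gcd(i,n)=1$, establishing both directions of the equivalence. When it holds, the folded-row is $f_0,f_i,f_{2i},\ldots,f_{(n-1)i}$ with indices mod $n$; since $(n-1)i\equiv -i\equiv n-i\pmod n$, the last entry is $f_{n-i}$, matching the statement. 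The main delicate point—and the only step that needs care—is the $\Lambda$-equivariance of $c$, since everything else is a clean reduction to arithmetic in $\Z_n$.
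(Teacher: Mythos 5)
Your proof is correct and follows essentially the same route as the paper's: both identify the $j$-th entry of the folded-row of $\delta'=f_i$ with $g(j\cdot i)=f_{ji\bmod n}$ and reduce the question to whether multiplication by $i$ permutes $\Z_n$, i.e., to $\gcd(i,n)=1$. The only difference is that you spell out the $\Lambda$-equivariance of $c$ that the paper leaves implicit in its appeal to Lemma~\ref{lem:order_fold}; this is a useful clarification, not a different argument.
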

\begin{proof}
By definition and by Lemma~\ref{lem:order_fold} we have that
$\delta' = f_i = (i \cdot d_1,i \cdot d_2,\ldots,i \cdot d_D) -
c(i \cdot d_1,i \cdot d_2,\ldots,i \cdot d_D)$ and $f_{\ell \cdot
i} = (\ell \cdot i \cdot d_1, \ell \cdot i \cdot d_2,\ldots,\ell
\cdot i \cdot d_D) -c(\ell \cdot i \cdot d_1, \ell \cdot i \cdot
d_2,\ldots,\ell \cdot i \cdot d_D)$. Since the sequence $f_0 f_1
\ldots f_{n-1}$ consists of $n$ distinct points of $\Z^D$, it
follows that the sequence $f_0 f_i f_{2i} \ldots f_{n-i}$ consists
of $n$ distinct points of $\Z^D$ if and only if
$\text{g.c.d.}(i,n)=1$. Thus, the lemma follows.
\end{proof}
\begin{cor}
\label{cor:num_direct} Let $\Lambda$ be a lattice tiling for the
shape $\cS$. There exists one folding with respect to $\Lambda$ if
and only if the number of nonequivalent folding operations with
respect to $\Lambda$ is $\frac{\phi ( | \cS |)}{2}$, where $\phi (
\cdot )$ is the Euler function.
\end{cor}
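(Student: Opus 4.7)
The plan is to reduce directly to the preceding lemma, which characterizes which $f_i$ in a folded-row themselves define folding directions. First I would dispose of the trivial direction: for $|\cS| \geq 3$ we have $\phi(|\cS|) \geq 2$, so $\phi(|\cS|)/2 \geq 1$; hence if the number of nonequivalent foldings equals $\phi(|\cS|)/2$, at least one folding exists (the edge cases $|\cS| \leq 2$ are immediate).

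For the main direction, assume some direction $\delta$ defines a folding, set $n = |\cS|$, and let $f_0, f_1, \ldots, f_{n-1}$, with $f_0 = (0, \ldots, 0)$ and $f_1 = \delta$, be its folded-row. By the convention adopted in the ``further generalization'' subsection, every candidate direction may be taken to lie in the copy of $\cS$ centered at the origin; since the folded-row visits each such point exactly once, every candidate direction equals some $f_i$ with $1 \leq i \leq n-1$. The preceding lemma then says that $f_i$ actually defines a folding if and only if $\text{g.c.d.}(i,n) = 1$, so the total number of folding directions is exactly $\phi(n)$.

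It remains to count equivalence classes under the ``reverse folded-row'' identification. I would show that $f_i$ and $f_{n-i}$ always form such a pair: the folded-row of direction $f_i$ is $f_0, f_i, f_{2i}, \ldots, f_{(n-1)i}$ with indices taken mod $n$, and reversing this sequence and cyclically shifting so as to restart at $f_0$ yields $f_0, f_{-i}, f_{-2i}, \ldots$, which by the same lemma is the folded-row of direction $f_{n-i}$. Because $\text{g.c.d.}(i,n) = 1$ with $n \geq 3$ forces $i \neq n-i$, each equivalence class has exactly two elements, giving $\phi(n)/2$ classes.

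The main obstacle is justifying the reverse-pairing step above: one must check that the cyclic realignment which puts $f_0$ back at the start of the reversed row really corresponds to a folding in direction $f_{n-i}$, rather than merely a relabeling of points. This is handled by combining Lemma \ref{lem:pos_direct} (directions differing by a lattice point are equivalent) with the identity $f_{ki} - f_{(k-1)i} \equiv f_i \pmod{\Lambda}$ implicit in Lemma \ref{lem:order_fold}. Everything else is bookkeeping.
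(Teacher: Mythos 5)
Your argument is correct and is essentially the derivation the paper intends: the corollary is stated without proof as an immediate consequence of the preceding lemma, and you spell out exactly that route --- every candidate direction is one of the points $f_1,\dots,f_{n-1}$ of the folded-row, the lemma selects the $\phi(n)$ indices coprime to $n$, and the pairing $i\leftrightarrow n-i$ (fixed-point-free for $n\geq 3$, with $f_{n-i}$ differing from $-f_i$ by the lattice point $n\delta$, so Lemma~\ref{lem:pos_direct} gives the reverse-pair identification) halves the count. The only blemish is the claim that $|\cS|\leq 2$ is ``immediate'' --- for $|\cS|=2$ the quantity $\phi(2)/2$ is not an integer --- but the paper silently ignores this degenerate case as well.
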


Corollary~\ref{cor:num_direct} implies that once we have one
folding operation with its folded-row, then we can easily find and
compute all the other folding operations with their folded-rows.
It also implies that once the necessary and sufficient conditions
for the existence of one folding in the related theorems are
satisfied, then the necessary and sufficient conditions for the
existence of many other folding are also satisfied. Nevertheless,
Corollary~\ref{cor:num_direct} does not guarantee that there will
be a direction which defines a folding. This fact is shown in the
next example given in terms of a lemma.

\begin{lemma}
Let $\gamma$ a positive integer greater than one, $a_1$,
$a_2$,...,$a_D$, be nonzero integers, and $b_i$, $b_2$,...,$b_D$
be nonzero integers such that either $b_i=a_i$ or $b_i = a_i
\gamma$, for each $1 \leq i \leq D$, and $|\{ i ~:~ b_i=a_i
\gamma, ~ 1 \leq i \leq D \}| \geq 2$. Let $\cS$ be a
$D$-dimensional shape and $\Lambda$ be a lattice tiling for $\cS$
whose generator matrix is given by
$$ \left[\begin{array}{cccc}
b_1 & 0 & \ldots & 0 \\
0 & b_2 & \ldots & 0 \\
\vdots & \vdots & \ddots & \vdots\\
0 & 0 & \ldots & b_D  \end{array}\right]~.
$$
Then there is no direction $\delta$ for which the triple
$(\Lambda,\cS,\delta)$ defines a folding.
\end{lemma}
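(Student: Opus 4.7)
The plan is to reduce the question to a simple number-theoretic inequality via Lemma~\ref{lem:tile_fold2}. Because the generator matrix is diagonal, the lattice is simply $\Lambda = \{(k_1 b_1, k_2 b_2, \ldots, k_D b_D) : k_j \in \Z\}$, so for any direction $\delta = (d_1,\ldots,d_D)$, the point $i\delta$ lies in $\Lambda$ if and only if $b_j \mid i\, d_j$ for every $j$. Set $t_j = b_j/\gcd(b_j,d_j)$ when $d_j\neq 0$, and treat the $j$-th coordinate as imposing no constraint (equivalently $t_j=1$) when $d_j=0$. The smallest positive $i$ for which $i\delta\in\Lambda$ is then
\[
L \;=\; \operatorname{lcm}\bigl(t_j : d_j \neq 0\bigr).
\]
By Lemma~\ref{lem:tile_fold2}, $(\Lambda,\cS,\delta)$ defines a folding if and only if $L = |\cS| = b_1 b_2 \cdots b_D$. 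So the entire task is to show that, under the hypothesis, $L < b_1 b_2 \cdots b_D$ for every nonzero integer direction $\delta$.

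Next I would fix two indices $i_1 \neq i_2$ with $b_{i_1} = a_{i_1}\gamma$ and $b_{i_2} = a_{i_2}\gamma$, and split into cases. If $d_{i_k}=0$ for some $k\in\{1,2\}$, the lcm $L$ has no factor corresponding to that coordinate, so
\[
L \;\leq\; \prod_{j : d_j \neq 0} b_j \;\leq\; \frac{|\cS|}{b_{i_k}} \;<\; |\cS|,
\]
since $b_{i_k} \geq \gamma \geq 2$. Otherwise both $d_{i_1}, d_{i_2}$ are nonzero. If some $\gcd(b_j, d_j)>1$, then $t_j < b_j$ and using $L \leq \prod_j t_j$ gives $L < |\cS|$ immediately. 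The remaining sub-case is $\gcd(b_j, d_j)=1$ for all $j$, so $t_j = b_j$; here the key observation is that $\gamma$ divides both $b_{i_1}$ and $b_{i_2}$, whence $\gcd(b_{i_1}, b_{i_2}) \geq \gamma$ and
\[
\operatorname{lcm}(b_{i_1}, b_{i_2}) \;=\; \frac{b_{i_1} b_{i_2}}{\gcd(b_{i_1}, b_{i_2})} \;\leq\; \frac{b_{i_1} b_{i_2}}{\gamma}.
\]
Combining with $L \leq \operatorname{lcm}(b_{i_1}, b_{i_2}) \cdot \prod_{j\neq i_1,i_2} b_j$ yields $L \leq |\cS|/\gamma < |\cS|$.

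In every case $L < |\cS|$, so by Lemma~\ref{lem:tile_fold2} no direction $\delta$ makes $(\Lambda,\cS,\delta)$ a folding. The only real subtlety is handling the zero coordinates of $\delta$ cleanly; once that is done the remaining argument is the elementary identity $\operatorname{lcm}(a,b) = ab/\gcd(a,b)$ applied to the two coordinates that share the common factor $\gamma$.
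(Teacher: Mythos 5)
Your argument is correct, and it reaches the conclusion by a noticeably different route than the paper. Both proofs ultimately invoke Lemma~\ref{lem:tile_fold2}, i.e., both must produce some $i$ with $0<i<|\cS|$ such that $i\delta$ is a lattice point. You do this by characterizing the \emph{minimal} such $i$ exactly, $L=\operatorname{lcm}\bigl(b_j/\gcd(b_j,d_j) : d_j\neq 0\bigr)$, and then running a three-way case analysis (a vanishing coordinate of $\delta$ at one of the two distinguished indices; a nontrivial $\gcd(b_j,d_j)$; the coprime case, settled by $\operatorname{lcm}(b_{i_1},b_{i_2})\leq b_{i_1}b_{i_2}/\gamma$). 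The paper instead exhibits a single universal witness $\sigma=\gamma\prod_{i=1}^{D}a_i$: since $b_j$ divides $\sigma$ for every $j$ (whether $b_j=a_j$ or $b_j=a_j\gamma$), the point $\sigma\delta$ is a lattice point for \emph{every} direction $\delta$ simultaneously, and $\sigma=|\cS|/\gamma^{k-1}<|\cS|$ because at least $k\geq 2$ coordinates carry the factor $\gamma$. The paper's proof is therefore a one-liner with no case analysis, while yours buys more information (the exact length of every folded-row for this lattice, not just the fact that it is shorter than $|\cS|$). One small caveat, which applies equally to the paper's own statement: the $a_i$ are only assumed to be nonzero integers, so $|\cS|=|\det G|=\prod_j|b_j|$ and your products and lcm's should be read with absolute values throughout; this is cosmetic and does not affect either argument.
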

\begin{proof}
Let $\delta = (d_1 , d_2 , \ldots , d_D )$ be any direction vector
and let $\sigma = \gamma \prod_{i=1}^D a_i$. Then, $\sigma <
|\cS|$ and for any given shape $\cS$ for which $\Lambda$ is a
lattice tiling we have $( \sigma \cdot d_1 , \sigma \cdot d_2 ,
\ldots , \sigma \cdot d_D ) - c(\sigma \cdot d_1 , \sigma \cdot
d_2 , \ldots , \sigma \cdot d_D ) = (0,0, \ldots , 0)$. Hence, by
Lemma~\ref{lem:tile_fold2}, the triple $(\Lambda,\cS,\delta)$ does
not define a folding.
\end{proof}

\begin{lemma}
Let $\Lambda$ be a lattice tiling for the shape $\cS$. If $|\cS|$
is a prime number then there exists $\frac{|\cS|-1}{2}$ different
directions which form $\frac{|\cS|-1}{2}$ nonequivalent
folded-rows.
\end{lemma}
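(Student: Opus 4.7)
The proof I would give is short: combine Corollary~\ref{cor:prime_fold} (which guarantees the existence of at least one folding in the prime case) with Corollary~\ref{cor:num_direct} (which counts the nonequivalent foldings once one exists).

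First I would fix the set of admissible directions. Following the reduction described just before Lemma~\ref{lem:pos_direct}, I only need to look at the $|\cS|-1$ nonzero points of the copy of $\cS$ centered at the origin, each of which serves as a direction $\delta$. Because $\Lambda$ is a lattice tiling for $\cS$, the only lattice point in this copy of $\cS$ is the origin itself (otherwise two lattice points would lie in the same tile, contradicting tiling). Hence none of the $|\cS|-1$ candidate directions is a lattice point.

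Next I invoke Corollary~\ref{cor:prime_fold}: since $|\cS|=V(\Lambda)$ is prime, every direction $\delta$ that is not a lattice point defines a folding with $\Lambda$. By the previous paragraph, all $|\cS|-1$ candidate directions qualify, so at least one folding exists.

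Finally, I apply Corollary~\ref{cor:num_direct}: the existence of one folding with respect to $\Lambda$ forces the number of nonequivalent folding operations to be exactly $\frac{\phi(|\cS|)}{2}$. When $|\cS|$ is prime we have $\phi(|\cS|)=|\cS|-1$, yielding $\frac{|\cS|-1}{2}$ nonequivalent folded-rows, as claimed. There is no real obstacle here; the only thing worth double-checking is that $|\cS|-1$ is even when $|\cS|$ is an odd prime (so the count is an integer), and if $|\cS|=2$ the statement degenerates correctly to a single direction producing $\frac{1}{2}$ — which we interpret in the natural way that the only nontrivial direction yields its own folded-row equivalent to its reverse.
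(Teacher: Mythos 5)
Your proposal is correct and follows essentially the same route as the paper: it combines Corollary~\ref{cor:prime_fold} (every non-lattice-point direction folds when $|\cS|$ is prime) with the fact that a tile contains exactly one lattice point and with Corollary~\ref{cor:num_direct} to get the count $\frac{\phi(|\cS|)}{2}=\frac{|\cS|-1}{2}$. Your extra remarks on parity and the degenerate case $|\cS|=2$ are harmless additions not present in the paper's own argument.
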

\begin{proof}
Let $p=|\cS|$ be a prime number. By Corollary~\ref{cor:prime_fold}
a direction $\delta$ defines a folding if a and only if $\delta$
is not a lattice point. A shape $\cS$ in the tiling contains
exactly one lattice point. Therefore, by
Corollary~\ref{cor:num_direct}, any one of the $p-1$ directions
defined by the non-lattice points of $\cS$ defines a folding.
\end{proof}

\begin{example}
Consider the lattice $\Lambda$ of Example~\ref{exm:fold1}. It is a
lattice tiling for three shapes given in Example~\ref{exm:fold1}.
For each shape, four nonequivalent folding operations are given in
Example~\ref{exm:fold1}. We will demonstrate the fifth one now.

For the $1 \times 11$ array the fifth folding operation has the
direction vector $(+2,0)$ and the order is given by

\vspace{0.2cm}

\begin{center}
$\begin{array}{|c|c|c|c|c|c|c|c|c|c|c|c|c|} \hline 0&6&1
&7&2&8&3&9&4&10&5 \\
\hline
\end{array}~.$
\end{center}

\vspace{0.2cm}

For the second shape and the direction vector $(+2,0)$, the order
is given by

\setlength{\unitlength}{.75mm}
\begin{picture}(40,25)(10,-5)
\linethickness{.5 pt}

\put(30,0){\framebox(20,10){}} \put(30,0){\framebox(20,5){}}

\put(30,0){\framebox(10,15){}} \put(35,0){\framebox(10,15){}}

\put(30,0){\makebox(5,5){0}} \put(35,0){\makebox(5,5){6}}
\put(40,0){\makebox(5,5){1}} \put(45,0){\makebox(5,5){7}}
\put(30,5){\makebox(5,5){2}} \put(35,5){\makebox(5,5){8}}
\put(40,5){\makebox(5,5){3}} \put(45,5){\makebox(5,5){9}}
\put(30,10){\makebox(5,5){4}} \put(35,10){\makebox(5,5){10}}
\put(40,10){\makebox(5,5){5}}

\end{picture}

For the third shape and the direction vector $(+1,+2)$, the order
is given by

\vspace{0.2cm}


\setlength{\unitlength}{.75mm}
\begin{picture}(40,25)(10,-5)
\linethickness{.5 pt}

\put(30,0){\framebox(10,15){}} \put(30,5){\framebox(20,5){}}
\put(35,5){\framebox(10,15){}} \put(35,0){\framebox(5,5){}}
\put(40.1,15){\framebox(5,5){}}

\put(30,0){\makebox(5,5){0}} \put(35,0){\makebox(5,5){5}}
\put(35,15){\makebox(5,5){10}} \put(40,15){\makebox(5,5){4}}
\put(30,5){\makebox(5,5){9}} \put(35,5){\makebox(5,5){3}}
\put(40,5){\makebox(5,5){8}} \put(45,5){\makebox(5,5){2}}
\put(30,10){\makebox(5,5){7}} \put(35,10){\makebox(5,5){1}}
\put(40,10){\makebox(5,5){6}}
\end{picture}

\end{example}

We continue now with the theorem which generalizes
Theorem~\ref{thm:cond_fold2D}. Indeed, it was enough to prove the
generalization only for the $D$-dimensional case. But, we feel
that making the generalizations one step at a time, first for
$D=2$ and after that for any $D \geq 2$, will make it easier on
the reader, and especially as we are using some different
reasoning in these two generalizations.
\begin{theorem}
\label{thm:new_cond_fold2D} Let $\Lambda$ be a lattice whose
generator matrix is given by
$$
G=\left[\begin{array}{cc}
v_{11} & v_{12} \\
v_{21} & v_{22}
\end{array}\right]~.
$$
Let $d_1$ and $d_2$ be two positive integers and $\tau =
\text{g.c.d.}(d_1 , d_2)$. If $\Lambda$ defines a lattice tiling
for the shape $\cS$ then the triple $(\Lambda,\cS,\delta)$ defines
a folding

\begin{itemize}
\item with the ternary vector $\delta =(+d_1,+d_2)$ if and only if
$\text{g.c.d.}(\frac{d_1 v_{22}-d_2 v_{21}}{\tau},\frac{d_2
v_{11}-d_1 v_{12}}{\tau})=1$ and $\text{g.c.d.}(\tau , | \cS
|)=1$;

\item with the ternary vector $\delta =(+d_1,-d_2)$ if and only if
$\text{g.c.d.}(\frac{d_1 v_{22}+d_2 v_{21}}{\tau},\frac{d_2
v_{11}+d_1 v_{12}}{\tau})=1$ and $\text{g.c.d.}(\tau , | \cS
|)=1$;

\item with the ternary vector $\delta =(+d_1,0)$ if and only if
$\text{g.c.d.}(v_{12},v_{22})=1$ and $\text{g.c.d.}(d_1 , | \cS
|)=1$;

\item with the ternary vector $\delta =(0,+d_2)$ if and only if
$\text{g.c.d.}(v_{11},v_{21})=1$ and $\text{g.c.d.}(d_2 , | \cS
|)=1$.
\end{itemize}
\end{theorem}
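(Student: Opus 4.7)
My plan is to apply Lemma~\ref{lem:tile_fold2}, which reduces the folding condition to computing the smallest positive integer $t^{*}$ for which $t^{*}\delta$ is a lattice point and checking that $t^{*}=|\cS|$. I will carry out the argument in detail for $\delta=(+d_{1},+d_{2})$; the three remaining cases $(+d_{1},-d_{2})$, $(+d_{1},0)$, and $(0,+d_{2})$ are obtained by the substitutions $d_{2}\mapsto -d_{2}$, $d_{2}\mapsto 0$ (with $\tau=d_{1}$), and $d_{1}\mapsto 0$ (with $\tau=d_{2}$) respectively, and the same computation goes through.

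Writing $t(d_{1},d_{2})=\alpha(v_{11},v_{12})+\beta(v_{21},v_{22})$ and solving the resulting $2\times 2$ linear system by Cramer's rule gives $\alpha=tA/|\cS|$ and $\beta=tB/|\cS|$, where $A=d_{1}v_{22}-d_{2}v_{21}$ and $B=d_{2}v_{11}-d_{1}v_{12}$, using $|\cS|=|\det G|$. Thus $t\delta\in\Lambda$ iff $|\cS|\mid tA$ and $|\cS|\mid tB$, which is in turn equivalent to $|\cS|\mid t\cdot\gcd(A,B)$. The smallest such $t$ is $t^{*}=|\cS|/\gcd(|\cS|,A,B)$, so $t^{*}=|\cS|$ iff $\gcd(|\cS|,A,B)=1$.

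The heart of the argument, and what I expect to be the main obstacle, is translating $\gcd(|\cS|,A,B)=1$ into the two explicit conditions of the theorem. Write $A=\tau A'$ and $B=\tau B'$ with $\tau=\gcd(d_{1},d_{2})$, so $\gcd(A,B)=\tau\gcd(A',B')$. If $\gcd(A',B')=1$ and $\gcd(\tau,|\cS|)=1$, then $\gcd(|\cS|,\tau\gcd(A',B'))=1$ is immediate. For the converse I would use the two algebraic identities
\begin{equation*}
v_{11}A+v_{21}B=d_{1}|\cS|,\qquad v_{12}A+v_{22}B=d_{2}|\cS|,
\end{equation*}
which one verifies by direct expansion using $|\cS|=v_{11}v_{22}-v_{12}v_{21}$. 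These give $\gcd(A,B)\mid\gcd(d_{1}|\cS|,d_{2}|\cS|)=\tau|\cS|$, and dividing by $\tau$ yields the crucial divisibility $\gcd(A',B')\mid|\cS|$. Consequently $\gcd(|\cS|,\gcd(A',B'))=\gcd(A',B')$, and from $\gcd(|\cS|,\tau\gcd(A',B'))=1$ we can extract both $\gcd(A',B')=1$ and $\gcd(\tau,|\cS|)=1$.

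To finish, I would specialize the formulas for the remaining three cases. For $\delta=(+d_{1},-d_{2})$ the quantities become $A=d_{1}v_{22}+d_{2}v_{21}$, $B=-(d_{2}v_{11}+d_{1}v_{12})$, reproducing the stated gcd after cancelling signs and the common factor $\tau$. For $\delta=(+d_{1},0)$ one has $\tau=d_{1}$, $A=d_{1}v_{22}$, $B=-d_{1}v_{12}$, so the first condition becomes $\gcd(v_{12},v_{22})=1$ and the second becomes $\gcd(d_{1},|\cS|)=1$; the case $\delta=(0,+d_{2})$ is perfectly symmetric.
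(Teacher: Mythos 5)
Your proposal is correct, and while it shares the paper's skeleton (reduce to Lemma~\ref{lem:tile_fold2}, then use Cramer's rule to identify the unique coefficients $A=d_1v_{22}-d_2v_{21}$ and $B=d_2v_{11}-d_1v_{12}$ expressing multiples of $\delta$ in the lattice basis), the central step is handled genuinely differently. The paper proves each implication separately: for sufficiency it argues via the ratio $\beta_2/\beta_1=\alpha_2/\alpha_1$ that any integer solution for $(i d_1,i d_2)$ forces $i\geq|\cS|$, and for necessity it splits into two cases according to which of the two gcd conditions fails, exhibiting in each case an explicit smaller lattice multiple of $\delta$. You instead collapse the whole folding criterion into the single clean statement that the least $t$ with $t\delta\in\Lambda$ is $t^{*}=|\cS|/\gcd(|\cS|,A,B)$, so folding holds iff $\gcd(|\cS|,A,B)=1$, and then prove this equivalent to the conjunction $\gcd(A/\tau,B/\tau)=1$ and $\gcd(\tau,|\cS|)=1$ using the identities $v_{11}A+v_{21}B=d_1|\cS|$ and $v_{12}A+v_{22}B=d_2|\cS|$, which yield the key divisibility $\gcd(A/\tau,B/\tau)\mid|\cS|$. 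Your route buys two things: it avoids the paper's division by $\beta_1$ and $\alpha_1$ (which silently assumes these are nonzero), and it makes transparent exactly why the condition splits into the two stated gcd requirements; the paper's case analysis, by contrast, stays closer to the raw definition and is the pattern reused in its $D$-dimensional generalization (Theorem~\ref{thm:new_general_condition}).
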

\begin{proof}
We will prove the case where $\delta =(+d_1,+d_2)$; the other
three cases are proved similarly.

Let $\Lambda$ be a lattice tiling for the shape $\cS$. By
Lemma~\ref{lem:det_points} we have that $(|\cS| \cdot d_1 ,|\cS|
\cdot d_2)$ is a lattice point. Therefore, there exist two
integers $\alpha_1$ and $\alpha_2$ such that $\alpha_1
(v_{11},v_{12}) + \alpha_2 (v_{21},v_{22}) =(|\cS| \cdot d_1
,|\cS| \cdot d_2)$, i.e., $\alpha_1 v_{11} + \alpha_2 v_{21} = d_1
|\cS|$, $\alpha_1 v_{12} + \alpha_2 v_{22}= d_2 | \cS |$, and
$|\cS|= v_{11} v_{22} - v_{12} v_{21}$. These equations have
exactly one solution, $\alpha_1 = d_1 v_{22}-d_2 v_{21}$ and
$\alpha_2 = d_2 v_{11}-d_1 v_{12}$. By Lemma~\ref{lem:tile_fold2},
$(\Lambda,\cS,\delta)$ defines a folding if and only if $(|\cS|
\cdot d_1, |\cS| \cdot d_2)=c(|\cS| \cdot d_1, |\cS| \cdot d_2)$
and for each $i$, $0 < i < |\cS|$ we have $(i \cdot d_1 ,i \cdot
d_2) \neq c(i \cdot d_1 ,i \cdot d_2)$.

Assume first that $\text{g.c.d.}(\frac{d_1 v_{22}-d_2
v_{21}}{\tau},\frac{d_2 v_{11}-d_1 v_{12}}{\tau})=1$ and
$\text{g.c.d.}(\tau , | \cS |)=1$. Assume for the contrary, that
there exist three integers $i$, $\beta_1$, and $\beta_2$, such
that $\beta_1 (v_{11},v_{12}) + \beta_2 (v_{21},v_{22}) =(i \cdot
d_1 ,i \cdot d_2 )$, $0 < i < |\cS|$. Hence we have,
$\frac{\beta_2}{\beta_1} = \frac{d_2 v_{11}-d_1 v_{12}}{d_1
v_{22}-d_2 v_{21}}=\frac{\alpha_2}{\alpha_1}$. Since
$\text{g.c.d.}(\frac{d_1 v_{22}-d_2 v_{21}}{\tau},\frac{d_2
v_{11}-d_1 v_{12}}{\tau})=1$ it follows that $\beta_1 = \gamma
\frac{d_1 v_{22}-d_2 v_{21}}{\tau}$ and $\beta_2 = \gamma
\frac{d_2 v_{11}-d_1 v_{12}}{\tau}$, for some $0 < \gamma < \tau$.
Therefore, we have $i \cdot d_1 = \beta_1 v_{11} + \beta_2 v_{21}
= \frac{ \gamma d_1 |\cS|}{\tau}$, i.e., $i = \frac{\gamma
|\cS|}{\tau}$. But, since $\text{g.c.d.}(\tau , | \cS |)=1$ it
follows that $\gamma = \rho \tau$, for some integer $\rho > 0$, a
contradiction to the fact that $0 < \gamma < \tau$. Hence, our
assumption on the existence of three integers $i$, $\beta_1$, and
$\beta_2$ is false. Thus, by Lemma~\ref{lem:tile_fold2} we have
that if $\text{g.c.d.}(\frac{d_1 v_{22}-d_2
v_{21}}{\tau},\frac{d_2 v_{11}-d_1 v_{12}}{\tau})=1$ and
$\text{g.c.d.}(\tau , | \cS |)=1$ then $(\Lambda,\cS,\delta)$
defines a folding with the direction vector $\delta =(+d_1,+d_2)$.

Assume now that $(\Lambda,\cS,\delta)$ defines a folding with the
direction vector $\delta =(+d_1,+d_2)$. Assume for the contrary
that $\text{g.c.d.}(\frac{d_1 v_{22}-d_2 v_{21}}{\tau},\frac{d_2
v_{11}-d_1 v_{12}}{\tau})=\nu_1 > 1$ or $\text{g.c.d.}(\tau , |
\cS |)= \nu_2 >1$. We distinguish now between two cases.

\noindent {\bf case 1:} If $\text{g.c.d.}(\frac{d_1 v_{22}-d_2
v_{21}}{\tau},\frac{d_2 v_{11}-d_1 v_{12}}{\tau})=\nu_1 > 1$ then
$\beta_1 = \frac{d_1 v_{22}-d_2 v_{21}}{\tau \nu_1}$ and $\beta_2
= \frac{d_2 v_{11}-d_1 v_{12}}{\tau \nu_1}$ are integers.
Therefore, $\beta_1 (v_{11},v_{12}) + \beta_2 (v_{21},v_{22})
=(\frac{|\cS| \cdot d_1}{\tau \nu_1},\frac{|\cS| \cdot d_2}{\tau
\nu_1})$. Hence, $\frac{|\cS| }{\nu_1}$ is an integer and for the
integers $\beta'_1 = \frac{d_1 v_{22}-d_2 v_{21}}{\nu_1}$ and
$\beta'_2 = \frac{d_2 v_{11}-d_1 v_{12}}{\nu_1}$ we have $\beta'_1
(v_{11},v_{12}) + \beta'_2 (v_{21},v_{22}) =(\frac{|\cS| }{\nu_1}
d_1,\frac{|\cS|}{\nu_1} d_2)$ and as a consequence by
Lemma~\ref{lem:tile_fold2} we have that $(\Lambda,\cS,\delta)$
does not define a folding, a contradiction.

\noindent {\bf case 2:} If $\text{g.c.d.}(\tau , | \cS |)= \nu_2
>1$ then let $\beta_1 = \frac{d_1 v_{22}-d_2 v_{21}}{\nu_2}$ and
$\beta_2 = \frac{d_2 v_{11}-d_1 v_{12}}{\nu_2}$. Hence, $\beta_1
(v_{11},v_{12}) + \beta_2 (v_{21},v_{22}) =(\frac{|\cS| }{\nu_2}
d_1,\frac{|\cS|}{\nu_2} d_2)$. Clearly, $\beta_1$, $\beta_2$, and
$\frac{|\cS|}{\nu_2}$ are integers, and as a consequence by
Lemma~\ref{lem:tile_fold2} we have that $(\Lambda,\cS,\delta)$
does not define a folding, a contradiction.

Therefore, if $(\Lambda,\cS,\delta)$ defines a folding with the
ternary vector $\delta =(+1,+1)$ then
$\text{g.c.d.}(v_{22}-v_{21},v_{11}-v_{12})=1$.
\end{proof}
The generalization of Theorem~\ref{thm:new_cond_fold2D} for the
$D$-dimensional case is Theorem~\ref{thm:new_general_condition}
given in Appendix A.

The next lemma is an immediate consequence from the definitions on
equivalent directions and folded-row.
\begin{lemma}
\label{lem:eq_direct} If the directions $(d_1,d_2,\ldots,d_D)$ and
$(d'_1,d'_2,\ldots,d'_D)$ are equivalent then there exists a
lattice point $(i_1,i_2,\ldots,i_D)$ such that either
$(d'_1,d'_2,\ldots,d'_D)=(i_1+d_1,i_2+d_2,\ldots,i_d+d_D)$ or
$(d'_1,d'_2,\ldots,d'_D)=(i_1-d_1,i_2-d_2,\ldots,i_d-d_D)$.
\end{lemma}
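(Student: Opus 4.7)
The plan is to unpack the meaning of ``equivalent'' from the discussion preceding the statement---namely, that the folded-rows induced by $\delta=(d_1,\ldots,d_D)$ and $\delta'=(d'_1,\ldots,d'_D)$ are either identical or one is the reverse of the other---and then in each case compare the first non-origin entries of the two folded-rows. By Lemma~\ref{lem:order_fold} (together with the definition of folded-row), the $k$-th entry of the folded-row associated to $\delta$ is $g_\delta(k) = (k d_1,\ldots,k d_D)-c(k d_1,\ldots,k d_D)$, and analogously for $\delta'$; since both folded-rows start at the origin, equivalence is controlled entirely by whether $g_{\delta'}(1)$ equals $g_\delta(1)$ (identical case) or $g_\delta(n-1)$ (reverse case), where $n=|\cS|$.

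In the identical case, setting $g_\delta(1)=g_{\delta'}(1)$ gives $\delta - c(\delta)=\delta' - c(\delta')$, so $\delta'-\delta = c(\delta')-c(\delta)$ is a difference of two lattice points and hence itself a lattice point. Taking $(i_1,\ldots,i_D):=c(\delta')-c(\delta)\in\Lambda$ yields $\delta' = (i_1+d_1,\ldots,i_D+d_D)$, which is the first alternative in the statement.

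For the reverse case the key step is a ``wraparound'' identification of $g_\delta(n-1)$ in closed form. By Corollary~\ref{cor:lattice_points} the point $n\delta=(n d_1,\ldots,n d_D)$ is a lattice point, so $g_\delta(n)=(0,\ldots,0)=g_\delta(0)$; applying Lemma~\ref{lem:order_fold} in its backward-shift form therefore forces $g_\delta(n-1)=g_\delta(-1)=-\delta - c(-\delta)$. The reverse of the $\delta$-folded-row begins with $0, g_\delta(n-1),\ldots$, so equivalence demands $g_{\delta'}(1)=g_\delta(n-1)$, i.e.\ $\delta' - c(\delta') = -\delta - c(-\delta)$. Rearranging gives $\delta'+\delta = c(\delta')-c(-\delta)\in\Lambda$, and setting $(i_1,\ldots,i_D):=c(\delta')-c(-\delta)$ produces $\delta' = -\delta + (i_1,\ldots,i_D) = (i_1-d_1,\ldots,i_D-d_D)$, the second alternative.

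The only mildly delicate point is the closed-form identification $g_\delta(n-1) = -\delta-c(-\delta)$ used in the reverse case; everything else amounts to bookkeeping that differences of centers of copies of $\cS$ lie in $\Lambda$. That delicate step is handled by combining Corollary~\ref{cor:lattice_points} (which supplies $g_\delta(n)=0$) with Lemma~\ref{lem:order_fold} (which lets us ``shift the index by $-1$''), so the proof reduces to the two short algebraic manipulations above.
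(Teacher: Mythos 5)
Your proof is correct, and it is precisely the detailed unpacking that the paper itself omits: the paper states this lemma with no proof, declaring it ``an immediate consequence from the definitions on equivalent directions and folded-row.'' Your two cases (identical folded-rows giving $\delta'-\delta=c(\delta')-c(\delta)\in\Lambda$, and reversed folded-rows giving, via $g_\delta(n-1)=g_\delta(-1)$ from Lemma~\ref{lem:order_fold} and Corollary~\ref{cor:lattice_points}, that $\delta'+\delta=c(\delta')-c(-\delta)\in\Lambda$) are exactly the intended argument, carried out correctly.
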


\begin{lemma}
Let $\Lambda$ be a lattice tiling for a shape $\cS$. If $|\cS|$ is
a prime number then there exist $\frac{3^D-1}{2}$ ternary
direction vectors which form folding if and only if there does not
exist a lattice point $(i_1 , i_2 , \ldots , i_D )$, where for
each $i$, $1 \leq j \leq D$, we have $|i_j| \leq 2$.
\end{lemma}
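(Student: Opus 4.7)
My plan is to combine Corollary~\ref{cor:prime_fold} (primality of $|\cS|$ makes ``defines a folding'' synonymous with ``not a lattice point'') with Lemma~\ref{lem:eq_direct} (two directions $\delta,\delta'$ are equivalent iff $\delta-\delta'\in\Lambda$ or $\delta+\delta'\in\Lambda$), and then translate both conditions --- every ternary defines a folding, and the natural negation pairs $\{\delta,-\delta\}$ are not coarsened by any extra equivalences --- into the absence of nonzero lattice points with all coordinates in $\{-2,-1,0,1,2\}$.

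For the ``if'' direction, assume no nonzero $\mu\in\Lambda$ satisfies $|\mu_j|\leq 2$ for all $j$. Then no nonzero ternary vector (whose coordinates are bounded by $1$) lies in $\Lambda$, so by Corollary~\ref{cor:prime_fold} every nonzero ternary defines a folding. Moreover, for any two nonzero ternaries with $\delta\neq\pm\delta'$, both $\delta-\delta'$ and $\delta+\delta'$ are nonzero vectors with entries bounded by $2$, hence not in $\Lambda$, so $\delta\not\sim\delta'$. Therefore the $3^D-1$ nonzero ternaries split exactly into $\frac{3^D-1}{2}$ distinct equivalence classes, one per negation pair.

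For the ``only if'' direction I argue the contrapositive: given a nonzero $\mu\in\Lambda$ with $|\mu_j|\leq 2$, I exhibit either a ternary lattice point (wiping out one pair) or a ``collision'' between two distinct negation pairs (merging them). If $\mu$ is itself ternary, both $\mu$ and $-\mu$ are nonzero lattice points and fail to define a folding, reducing the class count below $\frac{3^D-1}{2}$. Otherwise some $|\mu_j|=2$, and I decompose $\mu=\delta+\delta'$ componentwise by setting $\delta_j=\delta'_j=\mu_j/2$ where $|\mu_j|=2$, $\delta_j=\mu_j,\ \delta'_j=0$ where $|\mu_j|=1$, and $\delta_j=\delta'_j=0$ where $\mu_j=0$. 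Both $\delta,\delta'$ are nonzero ternaries (the $|\mu_j|=2$ coordinate forces this), and $\delta\neq -\delta'$ since $\mu\neq 0$. Provided some coordinate of $\mu$ has absolute value exactly $1$, we also get $\delta\neq\delta'$, so $\delta\neq\pm\delta'$ and $\delta+\delta'=\mu\in\Lambda$ produces the collision $\delta\sim -\delta'$ that merges $\{\delta,-\delta\}$ with $\{\delta',-\delta'\}$.

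The main obstacle is the remaining degenerate subcase in which every coordinate of $\mu$ lies in $\{-2,0,2\}$, so the decomposition collapses to $\delta=\delta'=\mu/2=:\xi$ and yields no new equivalence. I would handle it via Lemma~\ref{lem:size_fold}: the smallest $t\geq 1$ with $t\xi\in\Lambda$ divides $|\cS|$ and is at most $2$ (since $2\xi=\mu\in\Lambda$). Since $|\cS|$ is prime (and, tacitly, greater than $2$ --- the $|\cS|=2$ situation is degenerate), this forces $t=1$, so $\xi$ itself is a nonzero ternary lattice point and we are back in the easy ``ternary'' case, completing the contrapositive.
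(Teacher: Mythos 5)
Your proof follows the same route as the paper's: Corollary~\ref{cor:prime_fold} to equate ``defines a folding'' with ``is not a lattice point'' when $|\cS|$ is prime, Lemma~\ref{lem:eq_direct} to reduce equivalence of two ternary directions to the existence of a lattice point with all coordinates of absolute value at most $2$, and the observation that such a lattice point decomposes as a sum or difference of two ternary vectors. You are in fact more careful than the paper, which does not separate the case where the small lattice point is itself ternary (killing a negation pair) from the case where it merges two distinct pairs, nor the degenerate case $\mu=2\xi$ with all coordinates in $\{-2,0,2\}$, which you correctly dispose of via the divisibility argument from Lemma~\ref{lem:size_fold}.
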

\begin{proof}
By Lemma~\ref{lem:size_fold}, if $|\cS|$ is a prime number, then
the number of elements in a folded-row for a given ternary vector
$\delta$ is either one or $|\cS|$. By
Corollary~\ref{cor:prime_fold} the number of elements is one if
and only if $\delta$ is a lattice point.

If there exist two equivalent directions $(d_1 , \ldots , d_D)$
and $(d'_1 , \ldots , d'_D)$ then by Lemma~\ref{lem:eq_direct} we
have that $(d_1 - d'_1 , \ldots , d _D - d'_D)$ is a lattice
point, where $|d_i - d'_i| \leq 2$ for each $i$, $1 \leq i \leq D$
(since $|d_i| \leq 1$ and $|d'_i| \leq 1$).

If there exists a lattice point $(i_1 , \ldots , i_D)$ for which
$|i_j| \leq 2$, $1 \leq j \leq D$, then there exists two ternary
vectors $(d_1 , \ldots , d_D)$ and $(d'_1 , \ldots , d'_D)$ for
which $(i_1 , \ldots , i_D)=(d_1 - d'_1 , \ldots , d _D - d'_D)$.
\end{proof}

The same result is obtained when $|\cS|$ is not a prime number if
the necessary conditions of
Theorem~\ref{thm:new_general_condition} are satisfied for all the
related $\frac{3^D-1}{2}$ ternary direction vectors. In any case,
if there exist a lattice point $(i_1 , i_2 , \ldots , i_D )$,
where for each $j$, $1 \leq j \leq D$, we have $|i_j| \leq 2$,
then there are some related ternary direction vectors which form
equivalent folding. We can also give an answer to this question by
finding one ternary direction vector which defines a folding and
using Corollary~\ref{cor:num_direct}.

\section{Bounds on Synchronization Patterns}
\label{sec:bounds}

Our original motivation for the generalization of the folding
operation came from the design of two-dimensional synchronization
patterns. Given a grid (square or hexagonal) and a shape $\cS$ on
the grid, we would like to find what is the largest set $\Delta$
of dots on grid points, $|\Delta|=m$, located in $\cS$, such that
the following property hold. All the $\binom{m}{2}$ lines between
dots in $\Delta$ are distinct either in their length or in their
slope. Such a shape $\cS$ with dots is called a {\it distinct
difference configuration} (DDC). If $\cS$ is an $m \times m$ array
with exactly one dot in each row and each column than $\cS$ is
called a Costas array~\cite{GoTa82}. If $\cS$ is a $k \times m$
array with exactly one dot in each column then $\cS$ is called a
sonar sequence~\cite{GoTa82}. If $\cS$ is a $k \times n$ DDC array
then $\cS$ is called a Golomb rectangle~\cite{Rob85}. These
patterns have various applications as described in~\cite{GoTa82}.
A new application of these patterns to the design of key
predistribution scheme for wireless sensor networks was described
lately in~\cite{BEMP}. In this application the shape $\cS$ might
be a Lee sphere, an hexagon, or a circle, and sometimes another
regular polygon. This application requires in some cases to
consider these shapes in the hexagonal grid. F3 was used for this
application in~\cite{BEMP08a} to form a DDC whose shape is a
rectangle rotated in 45 degrees in the square grid (see
Figure~\ref{fig:fold_diag}). Henceforth, we assume that our grid
is $\Z^D$, i.e., the square grid for $D=2$. Since the all the
results of the previous sections hold for $D$-dimensional shapes
we will continue to state the results in a $D$-dimensional
language, even so the applied part for synchronization patterns is
two-dimensional.

We will generalize some of the definition given for DDCs in
two-dimensional arrays~\cite{BEMP08a} for multidimensional arrays.
The reason is not just the generalization, but we also need these
definitions in the sequel. Let $\cA$ be a (generally infinite)
$D$-dimensional array of dots in $\Z^D$, and let $\eta_1 , \eta_2,
\ldots , \eta_D$ be positive integers. We say that $\cA$ is a
\emph{multi periodic} (or {\it doubly periodic} if $D=2$) with
period $(\eta_1 , \eta_2, \ldots , \eta_D)$ if
$\cA(i_1,i_2,\ldots,i_D)=\cA(i_1
+\eta_1,i_2,\ldots,i_D)=\cA(i_1,i_2+\eta_2,\ldots,i_D)= \cdots
=\cA(i_1,i_2,\ldots,i_D +\eta_D)$. We define the \emph{density} of
$\cA$ to be $d/(\Pi_{j=1}^D \eta_j)$, where $d$ is the number of
dots in any $\eta_1 \times \eta_2 \times \cdots \times \eta_D$
sub-array of $\cA$. Note that the period $(\eta_1 , \eta_2, \ldots
, \eta_D)$ might not be unique, but that the density of $\cA$ does
not depend on the period we choose. We say that a multi periodic
array $\cA$ of dots is a \emph{multi periodic $n_1 \times n_2
\times \cdots n_D$ DDC} if every $n_1 \times n_2 \times \cdots
n_D$ sub-array of $\cA$ is a DDC.

We write $(i_1,i_2,\ldots,i_D)+\cS$ for the shifted copy
$\{(i_1+i'_1,i_2+i'_2,\ldots,i_D+i'_D):(i'_1,i'_2,\ldots,i'_D)\in\cS\}$
of $\cS$. We say that a multi periodic array $\cA$ is a
\emph{multi periodic $\cS$-DDC} if the dots contained in every
shift $(i_1,i_2,\ldots,i_D)+\cS$ of $\cS$ form a DDC.

The definition of the density is given based on periodicity of a
$D$-dimensional box. If $\mu$ is the density, of the multi
periodic array $\cA$, it implies that given a shape $\cS$, the
average number of dots in any shape $\cS$ shifted all over $\cA$
is $\mu | \cS |$. This leads to the following theorem given
in~\cite{BEMP08a} for the two-dimensional case and which has a
similar proof for the multidimensional case.

\begin{theorem}
\label{thm:general_construction} Let $\cS$ be a shape, and let
$\cA$ be a multi periodic $\cS$-DDC of density $\mu$. Then there
exists a set of at least
$\left\lceil\mu\lvert\cS\rvert\right\rceil$ dots contained in
$\cS$ that form a DDC.
\end{theorem}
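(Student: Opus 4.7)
The plan is to use a straightforward averaging (double counting) argument over all translates of $\cS$ within one fundamental period of $\cA$.

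First I would fix a period $(\eta_1,\eta_2,\ldots,\eta_D)$ of $\cA$ and consider the collection of all $\prod_{j=1}^D \eta_j$ shifts $(i_1,i_2,\ldots,i_D)+\cS$ with $0\le i_j<\eta_j$. By the definition of density, the number of dots of $\cA$ contained in one $\eta_1\times\eta_2\times\cdots\times\eta_D$ fundamental block equals $\mu\prod_{j=1}^D\eta_j$. Next I would count pairs (dot, shift) in which the dot lies inside the shifted copy of $\cS$. Because $\cA$ is multi periodic, every dot of $\cA$ (viewed modulo the period) appears in exactly $|\cS|$ of the $\prod_j\eta_j$ shifts — one shift for each point of $\cS$ chosen to align with the given dot. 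Hence the total count of such pairs equals $|\cS|\cdot\mu\prod_{j=1}^D\eta_j$.

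Then I would divide by the number of shifts to conclude that the average number of dots contained in a single shifted copy $(i_1,\ldots,i_D)+\cS$ is exactly $\mu|\cS|$. By the pigeonhole principle, at least one shift must contain at least $\lceil\mu|\cS|\rceil$ dots. Because $\cA$ is a multi periodic $\cS$-DDC, every such shifted copy of $\cS$ is itself a DDC, so the dots inside this particular shift already form a DDC with the required cardinality. Translating this shift back so that its supporting copy of $\cS$ sits at the origin gives the desired DDC inside $\cS$.

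I don't expect any serious obstacle: the only subtle points are verifying that each dot is covered by exactly $|\cS|$ shifts (which uses the periodicity and the fact that translation by any lattice vector of the period is a bijection on $\cA$) and observing that the density $\mu$ is well-defined independently of the chosen period (remarked in the text just before the theorem). Once these two facts are in hand, the averaging and ceiling step is immediate.
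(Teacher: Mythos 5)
Your averaging argument is correct and is essentially the same approach the paper takes: the sentence immediately preceding the theorem ("the average number of dots in any shape $\cS$ shifted all over $\cA$ is $\mu|\cS|$") is exactly your double count, with the pigeonhole and the $\cS$-DDC property finishing the job as you describe. The paper itself defers the detailed proof to the cited reference, but your write-up fills in precisely the intended argument.
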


\vspace{0.2cm}

Another important observation from the definition of multi
periodic $\cS$-DDC is the following lemma from~\cite{BEMP08a}.
\begin{lemma}
\label{lem:sub_DDC} Let $\cA$ be a multi periodic $\cS$-DDC, and
let $\cS' \subseteq \cS$. Then $\cA$ is a multi periodic
$\cS'$-DDC.
\end{lemma}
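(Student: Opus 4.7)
The plan is to reduce the statement to the elementary observation that any subset of a DDC is again a DDC, and then apply this pointwise to every shift of $\cS'$.

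First, I would unpack the definition. To show that $\cA$ is a multi periodic $\cS'$-DDC, I must verify two things: that $\cA$ is multi periodic (which is immediate, since periodicity is a property of $\cA$ and does not depend on $\cS$ or $\cS'$), and that for every index $(i_1,\ldots,i_D) \in \Z^D$, the dots of $\cA$ lying in the shifted shape $(i_1,\ldots,i_D) + \cS'$ form a DDC.

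Next, I would fix an arbitrary shift $(i_1,\ldots,i_D) + \cS'$. Because $\cS' \subseteq \cS$, the inclusion $(i_1,\ldots,i_D) + \cS' \subseteq (i_1,\ldots,i_D) + \cS$ holds. Denoting by $\Delta'$ and $\Delta$ the sets of dots of $\cA$ inside these two shifted regions respectively, this gives $\Delta' \subseteq \Delta$. The hypothesis that $\cA$ is a multi periodic $\cS$-DDC tells us that $\Delta$ is a DDC, i.e., all $\binom{|\Delta|}{2}$ lines joining pairs of points of $\Delta$ are distinct in length or in slope.

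The final step is the trivial lemma that a subset of a DDC is a DDC: the $\binom{|\Delta'|}{2}$ lines connecting pairs of dots of $\Delta'$ form a subcollection of the $\binom{|\Delta|}{2}$ lines connecting pairs of dots of $\Delta$, and distinctness of all pairs in the larger collection immediately implies distinctness of all pairs in the subcollection. Hence $\Delta'$ is a DDC, and since $(i_1,\ldots,i_D)$ was arbitrary, $\cA$ is a multi periodic $\cS'$-DDC. There is essentially no obstacle here; the main care is simply to be explicit that the DDC property passes to subsets, which is the only substantive content of the argument.
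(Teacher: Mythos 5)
Your proof is correct and is the natural argument: periodicity is independent of the shape, every shift of $\cS'$ sits inside the corresponding shift of $\cS$, and the DDC property is inherited by subsets since the lines between dots of the subset are a subcollection of the lines between dots of the superset. The paper itself omits the proof (citing it from an earlier reference), so there is nothing to compare against beyond noting that your argument is exactly the intended one.
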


\vspace{0.2cm}

Let $\cS_1$, $\cS_2, \ldots$ be an infinite sequence of similar
shapes such that $| \cS_{i+1} | > | \cS_i |$. Using the technique
of Erd\"{o}s and Tur\'{a}n~\cite{EGRT92,Erd41}, for which a
detailed proof is given in~\cite{BEMP08a}, one can prove that

\begin{theorem}
\label{thm:upper_bound} An upper bound on the number of dots in
$\cS_i$, $i \rightarrow \infty$, is $\text{lim}_{i \rightarrow
\infty} ( \sqrt{| \cS_i |} + o(\sqrt{| \cS_i |}) )$.
\end{theorem}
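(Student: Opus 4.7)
The plan is to adapt the classical Erd\H{o}s--Tur\'{a}n averaging argument to the multidimensional shape setting. Suppose $\Delta\subseteq\cS_i$ is a DDC with $|\Delta|=m$; the goal is to prove $m\le\sqrt{|\cS_i|}+o(\sqrt{|\cS_i|})$ as $i\to\infty$.

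First I would introduce a ``probe'' shape $\cP_i\subset\Z^D$, most conveniently a $D$-dimensional cube of side $s_i$, so that $|\cP_i|=p_i=s_i^D$, and consider all integer translates $t+\cP_i$ that meet $\cS_i$. Let $N_i$ denote the number of such translates. Because the $\cS_i$ are similar copies of a fixed bounded shape (scaled by some factor $\tau_i$), the boundary layer of $\cS_i$ has size $O(\tau_i^{D-1})$ while $|\cS_i|=\Theta(\tau_i^D)$; so as long as $p_i=o(|\cS_i|)$ the Minkowski-sum estimate $N_i\le |\cS_i|+O(s_i\cdot \tau_i^{D-1})$ reduces to $N_i=|\cS_i|(1+o(1))$.

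Next I would evaluate two double-counts. Each dot of $\Delta$ lies in exactly $p_i$ translates, so $\sum_t|\Delta\cap(t+\cP_i)|=mp_i$. The sum of squares counts triples $(x_1,x_2,t)$ with $x_1,x_2\in\Delta$ and both in $t+\cP_i$: the diagonal contributes $mp_i$, and each ordered off-diagonal pair contributes $|\cP_i\cap(\cP_i+(x_1-x_2))|$. Here the DDC property is decisive---the nonzero differences $x_1-x_2$ are all distinct, so the off-diagonal total is at most $\sum_{d\neq 0}|\cP_i\cap(\cP_i+d)|=p_i^2-p_i$. Thus $\sum_t|\Delta\cap(t+\cP_i)|^2\le mp_i+p_i^2-p_i$. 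Applying Cauchy--Schwarz then yields $(mp_i)^2\le N_i(mp_i+p_i^2-p_i)$, equivalently $m^2\le(N_i/p_i)(m+p_i-1)$.

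Finally, I would choose $p_i=|\cS_i|^{\alpha}$ for some $\alpha\in(1/2,1)$ (for instance $\alpha=3/4$). Then $p_i\to\infty$, $p_i=o(|\cS_i|)$ so that $N_i=|\cS_i|(1+o(1))$, while $N_i/p_i=|\cS_i|^{1-\alpha}(1+o(1))=o(\sqrt{|\cS_i|})$. Solving the quadratic inequality in $m$ produces $m\le\sqrt{|\cS_i|}+o(\sqrt{|\cS_i|})$, as desired.

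The main obstacle is the boundary estimate $N_i=|\cS_i|(1+o(1))$, which is precisely where the ``similar shapes'' hypothesis is used: it guarantees that the perimeter-to-volume ratio of $\cS_i$ tends to zero, so the Minkowski-sum correction is negligible compared with $|\cS_i|$. For pathological families (such as very thin or fractal-like shapes) this step, and hence the conclusion, could fail; once the boundary term is tamed, the remaining steps are a short double-counting and Cauchy--Schwarz computation with a standard optimization of $\alpha$.
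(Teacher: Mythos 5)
Your argument is correct and is exactly the Erd\H{o}s--Tur\'an windowing argument (translate-averaging plus the Sidon property of the difference set, then Cauchy--Schwarz) that the paper itself invokes, deferring the details to the cited reference rather than reproducing them. Your explicit treatment of the boundary term $N_i=|\cS_i|(1+o(1))$ is the one point the paper glosses over, and your caveat that it relies on the shapes being similar scaled copies with vanishing perimeter-to-volume ratio is accurate.
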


\vspace{0.3cm}

Let $\cS$ and $\cS'$ be two-dimensional shapes in the grid. We
will denote by $\Delta (\cS,\cS')$ the largest intersection
between $\cS$ and $\cS'$ in any orientation. Our bounds on the
number of dots in a DDC with a given shape are based on the
following result.

\begin{theorem}
\label{thm:infinite} Assume we are given a multi periodic
$\cS$-DDC array $\cA$ with density $\mu$. Let $\cQ$ be another
shape on $\Z^D$. Then there exists a copy of $\cQ$ on $\Z^D$ with
at least $\lceil \mu \cdot \Delta (\cS,\cQ) \rceil$ dots.
\end{theorem}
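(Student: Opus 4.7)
The plan is to combine the given multi periodic $\cS$-DDC property with an averaging argument already encoded in Theorem~\ref{thm:general_construction}, bridging from $\cS$ to $\cQ$ via the intersection shape that realizes $\Delta(\cS,\cQ)$.

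First I would unpack the definition of $\Delta(\cS,\cQ)$: by definition there exist an orientation and a translation placing a copy $\cQ'$ of $\cQ$ so that $|\cS\cap\cQ'|=\Delta(\cS,\cQ)$. Set $\cS'\deff \cS\cap\cQ'$; then $\cS'\subseteq\cS$ and $|\cS'|=\Delta(\cS,\cQ)$. Applying Lemma~\ref{lem:sub_DDC} to this subshape, the array $\cA$ is also a multi periodic $\cS'$-DDC. The density of $\cA$ is intrinsic to $\cA$ and does not change when we view it with respect to $\cS'$, so it remains $\mu$.

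Next I would invoke Theorem~\ref{thm:general_construction} with the shape $\cS'$: there exists a shift $(i_1,\ldots,i_D)+\cS'$ of $\cS'$ inside $\cA$ containing at least $\lceil \mu|\cS'|\rceil=\lceil\mu\cdot\Delta(\cS,\cQ)\rceil$ dots. Since $\cS'\subseteq\cQ'$ by construction, the corresponding shift $(i_1,\ldots,i_D)+\cQ'$ contains $(i_1,\ldots,i_D)+\cS'$ and therefore inherits all those dots. This shifted copy of $\cQ'$ is a copy of $\cQ$ on $\Z^D$ (in the chosen orientation), producing the required copy of $\cQ$ with at least $\lceil\mu\cdot\Delta(\cS,\cQ)\rceil$ dots.

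The only mildly delicate point is the bookkeeping about orientations: one must agree that ``a copy of $\cQ$'' in the statement permits the same set of rigid motions used to define $\Delta(\cS,\cQ)$, which is the convention already in force for $\Delta(\cS,\cS')$ throughout Section~\ref{sec:bounds}. Once this is acknowledged, everything reduces cleanly to applying Lemma~\ref{lem:sub_DDC} and Theorem~\ref{thm:general_construction} in succession, so there is no real obstacle; the proof is essentially one clean reduction from $\cQ$ to the witnessing intersection shape $\cS'$.
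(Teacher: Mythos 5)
Your proposal is correct and follows essentially the same route as the paper's proof: form the intersection shape realizing $\Delta(\cS,\cQ)$, apply Lemma~\ref{lem:sub_DDC} to see that $\cA$ is a multi periodic DDC for that subshape, and then invoke Theorem~\ref{thm:general_construction} to extract $\lceil\mu\cdot\Delta(\cS,\cQ)\rceil$ dots inside a copy of $\cQ$. Your explicit remark on the orientation convention is a small clarification the paper leaves implicit, but the argument is the same.
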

\begin{proof}
Let $\cQ'$ be the shape such that $\cQ' = \cS \cap \cQ$ and
$|\cQ'| = \Delta (\cS,\cQ)$. By Lemma~\ref{lem:sub_DDC} we have
that $\cA$ is a multi periodic $\cQ'$-DDC. By
Theorem~\ref{thm:general_construction}, there exists a set of at
least $\left\lceil\mu\lvert\cQ'\rvert\right\rceil$ dots contained
in $\cS$ that form a DDC. Thus, there exists a copy of $\cQ$ on
$\Z^D$ with at least $\lceil \mu \cdot \Delta (\cS,\cQ) \rceil$
dots.
\end{proof}

In order to apply Theorem~\ref{thm:infinite} we will use folding
of sequences defined as follows. Let $A$ be an abelian group, and
let $\cB=\{b_1,b_2,\ldots,b_m\}\subseteq A$ be a sequence of $m$
distinct elements of $A$. We say that $\cB$ is a
\emph{$B_2$-sequence over $A$} if all the sums $a_{i_1}+a_{i_2}$
with $1\leq i_1\leq i_2\leq m$ are distinct. For a survey on
$B_2$-sequences and their generalizations the reader is referred
to~\cite{Bry04}. The following lemma is well known and can be
readily verified.

\begin{lemma}
\label{lem:B2_diff} A subset $\cB=\{a_1,a_2,\ldots,a_m\}\subseteq A$
is a $B_2$-sequence over $A$ if and only if all the differences
$a_{i_1}-a_{i_2}$ with $1\leq i_1 \neq i_2\leq m$ are distinct in
$A$.
\end{lemma}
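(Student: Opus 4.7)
The plan is to exploit the elementary identity $a_{i_1}+a_{j_2}=a_{j_1}+a_{i_2}$ iff $a_{i_1}-a_{i_2}=a_{j_1}-a_{j_2}$, which lets one translate any nontrivial coincidence of sums into a nontrivial coincidence of differences and vice versa. The only subtlety is bookkeeping the indexing conventions: sums are indexed by unordered pairs $\{i_1,i_2\}$ with $i_1\le i_2$, while differences are indexed by ordered pairs with $i_1\ne i_2$. So the core of the argument is to check that whenever such a rewriting collapses to a trivial relation on one side, it collapses to a trivial relation on the other side too.

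For the forward direction I would assume $\cB$ is a $B_2$-sequence and suppose for contradiction that $a_{i_1}-a_{i_2}=a_{j_1}-a_{j_2}$ with $i_1\ne i_2$, $j_1\ne j_2$, and $(i_1,i_2)\ne(j_1,j_2)$. Rearranging gives $a_{i_1}+a_{j_2}=a_{j_1}+a_{i_2}$. The $B_2$ property (after ordering each pair) forces $\{i_1,j_2\}=\{j_1,i_2\}$ as multisets, leaving two cases: either $i_1=j_1$ and $j_2=i_2$, which directly contradicts $(i_1,i_2)\ne(j_1,j_2)$, or $i_1=i_2$ (and $j_1=j_2$), which contradicts $i_1\ne i_2$.

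For the converse I would assume all nonzero differences are distinct and suppose $a_{i_1}+a_{i_2}=a_{j_1}+a_{j_2}$ with $i_1\le i_2$, $j_1\le j_2$, and $\{i_1,i_2\}\ne\{j_1,j_2\}$. Rewrite this as $a_{i_1}-a_{j_1}=a_{j_2}-a_{i_2}$. If $i_1=j_1$ then $a_{i_2}=a_{j_2}$, and since the elements of $\cB$ are distinct this forces $i_2=j_2$, contradicting $\{i_1,i_2\}\ne\{j_1,j_2\}$; the case $j_2=i_2$ is symmetric. Otherwise both sides are nonzero differences of distinct-indexed pairs, so the distinct-differences hypothesis yields $(i_1,j_1)=(j_2,i_2)$. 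Combined with $i_1\le i_2$ and $j_1\le j_2$ this forces $i_1=i_2=j_1=j_2$, again contradicting $\{i_1,i_2\}\ne\{j_1,j_2\}$.

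There is no real obstacle here; the whole content is the sum-difference identity and a short case analysis on when indices coincide. The only thing to be careful about is not to confuse ordered and unordered index pairs when matching the two conventions, and to handle the degenerate cases (where one side of the rearranged equation becomes a zero difference) using the fact that the elements $a_1,\ldots,a_m$ are pairwise distinct.
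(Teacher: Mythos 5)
Your proof is correct. The paper itself offers no argument for this lemma (it is stated as ``well known and can be readily verified''), and your write-up is exactly the standard verification that is being alluded to: the sum--difference identity $a_{i_1}+a_{j_2}=a_{j_1}+a_{i_2} \Leftrightarrow a_{i_1}-a_{i_2}=a_{j_1}-a_{j_2}$, plus a careful case analysis reconciling the unordered-pair indexing of sums with the ordered-pair indexing of differences and using the distinctness of the elements of $\cB$ to dispose of the degenerate cases. Nothing is missing.
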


Note that if $\cB$ is a $B_2$-sequence over $\Z_n$ and $a\in\Z_n$,
then so is the shift $a+B=\{a+e:e\in B\}$. The following theorem,
due to Bose~\cite{Bose42}, shows that large $B_2$-sequences over
$\Z_n$ exist for many values of $n$.

\begin{theorem}
\label{thm:Bose} Let $q$ be a prime power. Then there exists a
$B_2$-sequence $a_1,a_2,\ldots,a_m$ over $\Z_n$ where $n=q^2-1$
and $m=q$.
\end{theorem}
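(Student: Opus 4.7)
The plan is to use the standard Singer/Bose-style construction based on the finite field $\gf{q}{2}$. Since $\gf{q}{2}^*$ is cyclic of order $q^2-1=n$, pick a primitive element $\alpha$ generating it, so that every nonzero element of $\gf{q}{2}$ can be written uniquely as $\alpha^j$ for some $j\in\Z_n$. For each $x\in \gf{q}{1}$, the element $\alpha-x$ is nonzero (otherwise $\alpha=x\in\gf{q}{1}^*$, forcing the order of $\alpha$ to divide $q-1$, contradicting $\text{ord}(\alpha)=q^2-1$). So define $a_x\in\Z_n$ by
$$
\alpha^{a_x}=\alpha-x, \qquad x\in \gf{q}{1}.
$$
The $q$ elements $\{a_x: x\in \gf{q}{1}\}$ are distinct in $\Z_n$ because the map $x\mapsto \alpha-x$ is injective. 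This yields our candidate $B_2$-sequence of length $m=q$.

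Next I would verify the $B_2$ property using Lemma~\ref{lem:B2_diff} (or directly the sum formulation). Suppose $a_x+a_y\equiv a_u+a_v\pmod{n}$ for some $x,y,u,v\in \gf{q}{1}$. Raising $\alpha$ to both sides gives the equation in $\gf{q}{2}$:
$$
(\alpha-x)(\alpha-y)=(\alpha-u)(\alpha-v).
$$
Expanding both products, this rearranges to
$$
\bigl((u+v)-(x+y)\bigr)\alpha = uv-xy.
$$
The key observation is that $\{1,\alpha\}$ is linearly independent over $\gf{q}{1}$, since $\alpha\notin \gf{q}{1}$ (same order argument as above). The right-hand side lies in $\gf{q}{1}$ and the coefficient of $\alpha$ on the left lies in $\gf{q}{1}$, so both must vanish, giving $x+y=u+v$ and $xy=uv$. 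Hence the multisets $\{x,y\}$ and $\{u,v\}$ are the roots of the same quadratic over $\gf{q}{1}$ and thus coincide, which translates back to $\{a_x,a_y\}=\{a_u,a_v\}$ in $\Z_n$. This is precisely the $B_2$ condition, completing the proof.

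The only genuine subtlety is the linear independence step: one needs $\alpha\notin \gf{q}{1}$, which follows cleanly from $\alpha$ being primitive and $q^2-1\nmid q-1$ for every prime power $q\ge 2$. Everything else is routine manipulation in the field. An alternative framing is to work directly with the bijection $(x,y)\mapsto\{(x+y,xy):x,y\in \gf{q}{1}\}$ between unordered pairs and monic quadratics over $\gf{q}{1}$, but I find the linear-independence argument cleaner.
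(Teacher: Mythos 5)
Your proof is correct: the construction $a_x$ defined by $\alpha^{a_x}=\alpha-x$ for $x\in GF(q)$, the injectivity of $x\mapsto\alpha-x$, the linear independence of $\{1,\alpha\}$ over $GF(q)$, and the reduction to equality of the monic quadratics $t^2-(x+y)t+xy$ are all sound. The paper itself gives no proof of this statement --- it simply cites Bose --- and your argument is precisely the classical affine Singer/Bose construction from that reference, so there is nothing to compare beyond noting that you have supplied the standard proof the paper omits.
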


\subsection{A Lattice Coloring for a Given Shape}

In this subsection we will describe how we apply folding to obtain
a DDC with a shape $\cS$ and a multi periodic $\cS$-DDC. Let
$\Lambda$ be a lattice tiling for $\cS$ and let $\delta = ( d_1 ,
d_2 , \ldots , d_D )$ be a direction vector such that
$(\Lambda,\cS,\delta)$ defines a folding. We assign an integer
from $\Z_n$, $n= | \cS |$, to each point of $\Z^D$. The {\it
lattice coloring} $\cC( \Lambda , \delta )$ is defined as follows.
We assign 0 to the point $(0,0, \ldots ,0)$ and we color the next
element on the folded-row with 1 and so on until $|\cS|-1$ to the
last element on the folded-row. This complete the coloring of the
points of the shape $\cS$ whose center is the origin. To position
$(i_1,i_2, \ldots , i_D )$ we assign the color of position
$(i_1,i_2, \ldots , i_D )-c(i_1,i_2, \ldots , i_D )$. The color of
position $(i_1,i_2,\ldots,i_D)$ will be denoted by
$\cC(i_1,i_2,\ldots,i_D)$.

We will generalize the definition of folding a sequence into a
shape $\cS$ by the direction $\delta$, given the lattice tiling
$\Lambda$ for $\cS$. The folding of a sequence $\cB=b_0 b_1 ~
\ldots ~ b_{n-1}$ into an array colored by the elements of $\Z_n$
is defined by assigning the value $b_i$ to all the points of the
array colored with the color $i$. If the coloring was defined by
the use of the folding as described in this subsection, we say
that the array is defined by $(\Lambda,\cS,\delta,\cB)$. Note,
that we use the same notation for folding the sequence $\cB$ into
the shape $\cS$. The one to which we refer should be understood
from the context.

Given a point $(i_1,i_2,\ldots,i_D) \in \Z^D$, we say that the set
of points $\{ ( i_1 + \ell \cdot d_1 , i_2 + \ell \cdot d_2 ,
\ldots, i_D + \ell \cdot d_D ) ~:~ \ell \in \Z \}$ is a {\it row
of $\Z^D$ defined by $\delta$}. This is also the row of
$(i_1,i_2,\ldots,i_D)$ defined by $\delta= (d_1 , d_2 , \ldots ,
d_D )$.

\begin{lemma}
\label{lem:diag_lattice} If the triple $(\Lambda,\cS,\delta)$
defines a folding then in any row of $\Z^D$ defined by $\delta$
there are lattice points.
\end{lemma}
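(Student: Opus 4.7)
The plan is to prove that for any base point $P = (i_1, i_2, \ldots, i_D) \in \Z^D$, there is an integer $\ell$ such that $P + \ell \delta \in \Lambda$, where $\delta = (d_1, d_2, \ldots, d_D)$.

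First, I would invoke Lemma~\ref{lem:center} to normalize $P$: the point $P - c(P)$ lies in the copy $\cS_1$ of $\cS$ centered at the origin. Since $(\Lambda,\cS,\delta)$ defines a folding, by Lemma~\ref{lem:tile_fold1} the folded-row starting at the origin visits every point of $\cS_1$. Therefore there exists an integer $k$, with $0 \leq k < |\cS|$, such that the $k$-th point $g(k) = (kd_1,\ldots,kd_D) - c(kd_1,\ldots,kd_D)$ of the folded-row coincides with $P - c(P)$.

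The key identity is then the rearrangement
$$
k\delta - P \;=\; c(kd_1,\ldots,kd_D) \;-\; c(P),
$$
whose right-hand side is a difference of two lattice points and hence lies in $\Lambda$. Since $\Lambda$ is a subgroup of $\R^D$, the opposite vector $P - k\delta = P + (-k)\delta$ is also in $\Lambda$. Taking $\ell = -k$ produces the required lattice point on the row $\{P + \ell\delta : \ell \in \Z\}$.

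The whole argument is short and the only moving part is making the correspondence between ``visiting every point of $\cS_1$ along $\delta$ starting at the origin'' and ``reaching a lattice point along $\delta$ starting at $P$.'' The main obstacle, if any, is purely bookkeeping: keeping straight that $g(k)$ is the canonical representative of $k\delta$ modulo $\Lambda$, so that equality of representatives is equivalent to congruence modulo $\Lambda$. Corollary~\ref{cor:lattice_points} is not needed here, though it is consistent with the conclusion (it handles the special case $P = 0$ with $\ell = |\cS|$).
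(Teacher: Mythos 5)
Your proof is correct. The paper establishes this lemma inside the lattice-coloring framework: it notes that $\cC(i_1+d_1,\ldots,i_D+d_D)\equiv \cC(i_1,\ldots,i_D)+1 \pmod{|\cS|}$, so that along a row the colors run through $0,1,\ldots,|\cS|-1$ cyclically, and the points of color $0$ are exactly the lattice points. Your argument lands on the same point $P-k\delta$ (your $k$ is precisely $\cC(P)$) but certifies its membership in $\Lambda$ directly: you use Lemma~\ref{lem:tile_fold1} together with Lemma~\ref{lem:center} to find the index $k$ with $g(k)=P-c(P)$, and then observe that $k\delta-P=c(k\delta)-c(P)$ is a difference of centers, hence a lattice point by the subgroup property. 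The two arguments are the same mechanism in different clothing. Yours has the small advantage of not presupposing the increment property of the coloring, which the paper asserts ``by the definitions'' and which really rests on Lemma~\ref{lem:order_fold}; the paper's phrasing, on the other hand, sets up exactly the coloring facts it reuses immediately afterwards in Corollary~\ref{cor:eq_diff_col}. Your closing remark about Corollary~\ref{cor:lattice_points} being the special case $P=0$ is also accurate.
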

\begin{proof}
Given a point $(i_1,i_2,\ldots,i_D)$ and its color
$\cC(i_1,i_2,\ldots,i_D)$, then by the definitions of the folding
and the coloring we have that
$\cC(i_1+d_1,i_2+d_2,\ldots,i_D+d_D)\equiv
\cC(i_1,i_2,\ldots,i_D)+1 ~(\bmod ~ |\cS|$). Hence, the row
defined by $\delta$ has all the values between 0 and $|\cS|-1$ in
their natural order modulo $|\cS|$. Therefore, any row defined by
$\delta$ has lattice points (which are exactly the points of 
this row which are colored with {\it zeroes}).
\end{proof}

\begin{cor}
\label{cor:eq_diff_col} If  $(i_1,i_2,\ldots,i_D)$,
$(i_1+e_1,i_2+e_2,\ldots,i_D+e_D)$, $(j_1,j_2,\ldots,j_D)$, and
$(j_1+e_1,j_2+e_2,\ldots,j_D+e_D)$ are four points of $\Z^D$ then
$\cC(i_1+e_1,i_2+e_2,\ldots,i_D+e_D)-\cC(i_1,i_2,\ldots,i_D)
\equiv
\cC(j_1+e_1,j_2+e_2,\ldots,j_D+e_D)-\cC(j_1,j_2,\ldots,j_D)~(\bmod
~ |\cS|)$.
\end{cor}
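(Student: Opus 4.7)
The plan is to show something stronger than the stated congruence, namely that the difference $\cC(p+e)-\cC(p)$ depends only on the displacement vector $e=(e_1,\ldots,e_D)$ and not on the base point $p$. Once this is established, specialising to $p=(i_1,\ldots,i_D)$ and to $p=(j_1,\ldots,j_D)$ yields the corollary immediately. The key structural fact behind this is that the coloring $\cC$ factors through the quotient $\Z^D/\Lambda$ and becomes, after the identification $\cC(0,\ldots,0)=0$, essentially a group homomorphism onto $\Z_n$.

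First I would record two preliminary observations. (a) If $p,q\in\Z^D$ satisfy $p\equiv q \pmod{\Lambda}$, then $\cC(p)=\cC(q)$: indeed, by the definition of the coloring we have $\cC(p)=\cC(p-c(p))$ and $\cC(q)=\cC(q-c(q))$, and because $p$ and $q$ lie at the same relative position inside their copies of $\cS$, the vectors $p-c(p)$ and $q-c(q)$ coincide as points of $\cS_1$. (b) The identity used in the proof of Lemma~\ref{lem:diag_lattice}, namely
$$\cC(i_1+d_1,i_2+d_2,\ldots,i_D+d_D)\equiv \cC(i_1,i_2,\ldots,i_D)+1 \pmod{|\cS|},$$
holds for every point of $\Z^D$; iterating it gives $\cC(p+\ell\delta)\equiv \cC(p)+\ell \pmod{|\cS|}$ for every integer $\ell$.

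Next I would invoke the folding hypothesis to bring an arbitrary displacement $e$ into the form $\ell\delta$ modulo $\Lambda$. By Lemma~\ref{lem:tile_fold1} the points $\{\ell\delta-c(\ell\delta):0\le\ell<|\cS|\}$ form $|\cS|$ distinct elements of $\cS_1$, and since $|\Z^D/\Lambda|=V(\Lambda)=|\cS|$, the class $\bar\delta$ generates the cyclic group $\Z^D/\Lambda$. Consequently, for any vector $e\in\Z^D$ there exists an integer $\ell$ with $e\equiv\ell\delta\pmod{\Lambda}$. Combining observations (a) and (b),
$$\cC(p+e)\ =\ \cC(p+\ell\delta)\ \equiv\ \cC(p)+\ell \pmod{|\cS|},$$
so $\cC(p+e)-\cC(p)\equiv \ell\pmod{|\cS|}$, a quantity that depends on $e$ alone. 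Applying this to $p=(i_1,\ldots,i_D)$ and to $p=(j_1,\ldots,j_D)$ gives the claimed equality of differences.

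No step presents a serious obstacle; the only subtlety is justifying that $\bar\delta$ generates $\Z^D/\Lambda$, which is precisely the content of the folding hypothesis via Lemma~\ref{lem:tile_fold1}, together with verifying observation (a) from the definition of $\cC$. Both are essentially unwindings of definitions established earlier in the section.
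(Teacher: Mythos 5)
Your proof is correct, but it takes a genuinely different route from the paper's. The paper argues directly on the four points: by Lemma~\ref{lem:diag_lattice} each point $P$ has a lattice point $P+\alpha\cdot\delta$ on its row, so $\cC(P)\equiv-\alpha\ (\bmod\ |\cS|)$; writing $P_1$, $P_2$, $P_3$ for the lattice points on the rows of $(i_1,\ldots,i_D)$, $(j_1,\ldots,j_D)$, and $(i_1+e_1,\ldots,i_D+e_D)$, it observes that $P_2+P_3-P_1$ is again a lattice point lying on the row of $(j_1+e_1,\ldots,j_D+e_D)$, whence both color differences equal $\alpha_1-\alpha_3$. You instead prove the stronger statement that $\cC(p+e)-\cC(p)$ depends only on the displacement $e$ --- which is precisely the unlabelled corollary the paper deduces \emph{from} this one immediately afterwards --- by noting that $\cC$ factors through $\Z^D/\Lambda$ and that the folding hypothesis, via Lemma~\ref{lem:tile_fold1}, forces the class of $\delta$ to have order $|\cS|=|\Z^D/\Lambda|$, so that $\Z^D/\Lambda$ is cyclic generated by $\bar\delta$, every $e$ satisfies $e\equiv\ell\delta\ (\bmod\ \Lambda)$ for some $\ell$, and the difference is $\ell$. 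Both arguments rest on the same two ingredients (the identity $\cC(p+\delta)\equiv\cC(p)+1$ from the proof of Lemma~\ref{lem:diag_lattice}, and the additive closure of $\Lambda$); yours buys a cleaner, more structural statement and in effect reverses the logical order of the two corollaries --- which is legitimate, since your derivation of the stronger one does not use the weaker one --- while the paper's four-point computation is more elementary and never needs to mention the quotient group.
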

\begin{proof}
By Lemma~\ref{lem:diag_lattice} to each one of these four points
there exists a lattice point in its row defined by $\delta$. Let
\begin{itemize}
\item $P_1 = (i_1 + \alpha_1 \cdot d_1 ,i_2 + \alpha_1 \cdot d_2 ,
\ldots, i_D +  \alpha_1 \cdot d_D )$ be the lattice point in the
row of $(i_1,i_2,\ldots,i_D)$;

\item $P_2 = (j_1 + \alpha_2 \cdot d_1,j_2 + \alpha_2 \cdot d_2 ,
\ldots, j_D + \alpha_2 \cdot d_D )$ the lattice point in the row
of $(j_1,j_2,\ldots,j_D)$;

\item $P_3 = ((i_1 + e_1 ) + \alpha_3 \cdot d_1 ,(i_2 +e_2) +
\alpha_3 \cdot d_2 , \ldots, (i_D+e_D) + \alpha_3 \cdot  d_D )$
the lattice point in the row of
$(i_1+e_1,i_2+e_2,\ldots,i_D+e_D)$.
\end{itemize}
Therefore, $P_4 = P_2 + P_3 - P_1 =((j_1 + e_1
)+(\alpha_2+\alpha_3-\alpha_1) \cdot d_1 , (j_2 +e_2)
+(\alpha_2+\alpha_3-\alpha_1) \cdot d_2 , \ldots, (j_D+e_D) +
(\alpha_2+\alpha_3-\alpha_1) \cdot d_D )$ is also a lattice point.
$P_4$ is a lattice point in the row, defined by $\delta$, of
$(j_1+e_1,j_2+e_2,\ldots,j_D+e_D)$. All these four points are
colored with {\it zeroes}. Hence, $\cC (i_1,i_2,\ldots,i_D) \equiv
- \alpha_1~(\bmod~|\cS|)$, $\cC (i_1+e_1,i_2+e_2,\ldots,i_D+e_D)
\equiv - \alpha_3~(\bmod~|\cS|)$, $\cC (j_1,j_2,\ldots,j_D) \equiv
- \alpha_2~(\bmod~|\cS|)$, and $\cC
(j_1+e_1,j_2+e_2,\ldots,j_D+e_D) \equiv -
(\alpha_2+\alpha_3-\alpha_1)~(\bmod~|\cS|)$. Now, the claim of the
corollary is readily verified.
\end{proof}
\begin{cor}
If $\delta'$ is an integer vector of length $D$ then there exists
an integer $e(\delta')$ such that for any given point
$P=(i_1,i_2,\ldots,i_D)$ we have $\cC(P + \delta') =
\cC(P)+e(\delta')~(\bmod~|\cS|)$.
\end{cor}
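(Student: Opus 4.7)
The plan is to recognize this corollary as an immediate consequence of Corollary~\ref{cor:eq_diff_col}, applied with one of the two reference points taken to be the origin.

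First, I would set $\delta' = (e_1, e_2, \ldots, e_D)$ and simply define
$$e(\delta') \deff \cC(\delta') - \cC(0,0,\ldots,0) = \cC(e_1, e_2, \ldots, e_D),$$
using the fact that by construction of the lattice coloring $\cC(\Lambda,\delta)$ the origin receives color $0$.

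Next, for an arbitrary point $P = (i_1, i_2, \ldots, i_D)$, I would apply Corollary~\ref{cor:eq_diff_col} to the four points $(0,\ldots,0)$, $(e_1, \ldots, e_D)$, $(i_1, \ldots, i_D)$, and $(i_1 + e_1, \ldots, i_D + e_D)$, i.e., take the roles $(j_1,\ldots,j_D) \leftarrow (0,\ldots,0)$ and $(i_1,\ldots,i_D) \leftarrow P$ in that corollary. The conclusion gives exactly
$$\cC(P + \delta') - \cC(P) \equiv \cC(\delta') - \cC(0,\ldots,0) = e(\delta') \pmod{|\cS|},$$
which is the desired identity.

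There is essentially no obstacle here; the corollary is just the specialization of the previous corollary in which one of the two base points is fixed at the origin, turning a statement about equality of four-point color differences into the existence of a well-defined shift function $e(\delta')$ on integer vectors. The only thing worth emphasizing in the write-up is that $\delta'$ is allowed to be an arbitrary integer vector (not only a ternary direction), but this imposes no extra work since Corollary~\ref{cor:eq_diff_col} already takes $(e_1,\ldots,e_D) \in \Z^D$ without restriction.
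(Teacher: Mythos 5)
Your proof is correct and matches the paper's intent exactly: the paper states this corollary without proof as an immediate consequence of Corollary~\ref{cor:eq_diff_col}, and your specialization with one base point at the origin (so that $e(\delta')=\cC(\delta')$) is precisely the intended argument.
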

\begin{cor}
\label{cor:fold_periodic} If the triple $(\Lambda,\cS,\delta)$
defines a folding and $\cB$ is a $B_2$-sequence over $\Z_n$, where
$n=|\cS|$, then the array $\cA$ defined by
$(\Lambda,\cS,\delta,\cB)$ is multi periodic.
\end{cor}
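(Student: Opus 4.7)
The plan is to deduce multi periodicity of $\cA$ directly from the lattice-invariance of the coloring $\cC$, so the $B_2$-hypothesis will only come in insofar as it ensures that $\cB$ is a well-defined assignment of values to colors; the geometric content lies entirely in $\Lambda$ and $\cS$.

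First I would establish the key invariance: for every $P = (i_1,\ldots,i_D) \in \Z^D$ and every lattice point $v \in \Lambda$, $\cC(P+v) = \cC(P)$. This follows directly from the definition of $c(\cdot)$. If $P$ lies in the copy of $\cS$ with center $c(P)$, then $P+v$ lies in the copy of $\cS$ translated by $v$, whose center is $c(P)+v$; hence $c(P+v)=c(P)+v$ and therefore $(P+v)-c(P+v)=P-c(P)$. Since $\cC$ is defined by assigning to $P$ the color of the point $P-c(P)$ in $\cS_1$ (the copy centered at the origin), we get $\cC(P+v) = \cC(P)$.

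Next I would exhibit axis-aligned periods. Because $\Lambda$ is a sublattice of $\Z^D$ of finite index $|\cS|$, the quotient $\Z^D/\Lambda$ is finite, so for each $j$, $1 \le j \le D$, the cyclic subgroup generated by $\epsilon_j$ in $\Z^D/\Lambda$ has finite order $\eta_j$. Equivalently, $\eta_j \epsilon_j$ is a lattice point of $\Lambda$ for some positive integer $\eta_j$. (In the $D=2$ case this is essentially Lemma~\ref{lem:det_points}, and the same determinantal argument, applied to $D-1$ of the $D$ generators at a time, gives the general $\eta_j$; alternatively, finiteness of $\Z^D/\Lambda$ suffices.)

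Combining the two observations, for each $j$ and every $P \in \Z^D$, $\cC(P + \eta_j \epsilon_j) = \cC(P)$; hence $\cA(P+\eta_j \epsilon_j) = b_{\cC(P+\eta_j \epsilon_j)} = b_{\cC(P)} = \cA(P)$, so $\cA$ is multi periodic with period $(\eta_1,\ldots,\eta_D)$. The only even mildly subtle step is the observation that $c(P+v)=c(P)+v$ for $v\in\Lambda$, and from there the corollary is essentially immediate; the $B_2$-sequence hypothesis plays no role in the periodicity itself and is included only because it is the assumption needed for the subsequent use of this array as an $\cS$-DDC.
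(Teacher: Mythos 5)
Your proof is correct and follows essentially the same route as the paper, which simply asserts that the array has period $(|\cS|,|\cS|,\ldots,|\cS|)$ and leaves the lattice-invariance of the coloring implicit; you make that invariance explicit and take $\eta_j$ to be the order of $\epsilon_j$ in $\Z^D/\Lambda$ rather than using $|\cS|$ itself (which also works, since $|\cS|\cdot\epsilon_j\in\Lambda$ by Lagrange's theorem applied to the quotient of order $|\cS|$). Your observation that the $B_2$-hypothesis is irrelevant to periodicity is also consistent with the paper's proof, which never uses it.
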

\begin{proof}
Clearly, the array has period $(|\cS|,|\cS|,\ldots,|\cS|)$ and the
result follows.
\end{proof}
\begin{theorem}
\label{thm:fold_B2} If the triple $(\Lambda,\cS,\delta)$ defines a
folding and $\cB$ is a $B_2$-sequence over $\Z_n$, where
$n=|\cS|$, then the pattern of dots defined by
$(\Lambda,\cS,\delta,\cB)$ is a multi periodic $\cS$-DDC.
\end{theorem}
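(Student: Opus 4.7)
The plan is to combine three ingredients: periodicity from Corollary~\ref{cor:fold_periodic}, the ``equal differences of colors'' property from Corollary~\ref{cor:eq_diff_col}, and the $B_2$-defining property (Lemma~\ref{lem:B2_diff}). Periodicity is free, so the real work is to show that the dots in any translate $(i_1,\ldots,i_D)+\cS$ of $\cS$ form a DDC. Recall that a dot sits at position $P$ exactly when $\cC(P)\in\cB$.

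First I would establish an auxiliary fact: \emph{within any translate $(i_1,\ldots,i_D)+\cS$ of $\cS$, each color in $\Z_n$ appears exactly once.} Suppose $P,P'$ are two points of such a translate with $\cC(P)=\cC(P')$. Unwinding the definition of the coloring (the color of a point is determined by subtracting the center of its tile), this means $P-c(P)=P'-c(P')$, so $P-P'=c(P)-c(P')$ is a lattice point. But $P-(i_1,\ldots,i_D)$ and $P'-(i_1,\ldots,i_D)$ both lie in $\cS$, and by the corollary to the lattice-tiling lemma no copy of $\cS$ can contain two points whose difference is a nonzero lattice point. Hence $P=P'$. Since the translate has $|\cS|=n$ points and $n$ colors, each color is used exactly once.

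Next, take any two pairs of distinct dots $(P_1,P_2)$ and $(Q_1,Q_2)$ inside $(i_1,\ldots,i_D)+\cS$ with $P_2-P_1=Q_2-Q_1$; I want to conclude $\{P_1,P_2\}=\{Q_1,Q_2\}$. By Corollary~\ref{cor:eq_diff_col},
\[
\cC(P_2)-\cC(P_1)\equiv \cC(Q_2)-\cC(Q_1)\pmod n.
\]
All four colors lie in $\cB$ because the points are dots. By the auxiliary fact above, $\cC(P_1)\neq \cC(P_2)$ and $\cC(Q_1)\neq \cC(Q_2)$, so these are nonzero differences of distinct elements of $\cB$. By Lemma~\ref{lem:B2_diff}, distinct ordered pairs of distinct elements of a $B_2$-sequence produce distinct differences, hence $\cC(P_1)=\cC(Q_1)$ and $\cC(P_2)=\cC(Q_2)$. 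Applying the auxiliary fact once more (each color occurs at most once in the translate) gives $P_1=Q_1$ and $P_2=Q_2$, which is exactly the DDC property.

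The main obstacle I anticipate is the auxiliary injectivity fact for colors inside an \emph{arbitrary} translate of $\cS$; the coloring was defined via tiles, but a shift of $\cS$ is generally not a tile. The resolution is clean once one invokes the ``no two points in a copy of $\cS$ differ by a lattice point'' characterization of lattice tilings, so I would front-load that step. The remaining manipulation is really just packaging Corollary~\ref{cor:eq_diff_col} together with the $B_2$-property, and periodicity follows immediately from Corollary~\ref{cor:fold_periodic}, completing the proof.
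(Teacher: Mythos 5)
Your proof is correct and follows essentially the same route as the paper: periodicity from Corollary~\ref{cor:fold_periodic}, then Corollary~\ref{cor:eq_diff_col} combined with the $B_2$-property (Lemma~\ref{lem:B2_diff}) to rule out repeated difference vectors among the dots in any translate of $\cS$. Your auxiliary fact that every color occurs exactly once in an arbitrary translate is precisely the paper's Lemma~\ref{lem:disjoint_colors} (stated right after the theorem, with the remark that it yields the theorem immediately); you derive it directly from the lattice-tiling characterization instead of from Corollary~\ref{cor:eq_diff_col}, and your ordered-pair bookkeeping is a slightly more careful rendering of the paper's ``at most three of the four points carry dots'' step, but the substance is the same.
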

\begin{proof}
By Corollary~\ref{cor:fold_periodic} the constructed array is
multi periodic.

Since $(\Lambda,\cS,\delta)$ defines a folding it follows that the
$|\cS|$ colors inside the shape $\cS$ centered at the origin are
all distinct. By Corollary~\ref{cor:eq_diff_col}, for the four
positions $(i_1,i_2,\ldots,i_D)$,
$(i_1+e_1,i_2+e_2,\ldots,i_D+e_D)$, $(j_1,j_2,\ldots,j_D)$, and
$(j_1+e_1,j_2+e_2,\ldots,j_D+e_D)$ we have that
$\cC(i_1+e_1,i_2+e_2,\ldots,i_D+e_D)-\cC(i_1,i_2,\ldots,i_D)
\equiv
\cC(j_1+e_1,j_2+e_2,\ldots,j_D+e_D)-\cC(j_1,j_2,\ldots,j_D)~(\bmod~
|\cS|)$. Hence, at most three of these integers (colors) are
contained in $\cB$. It implies that if these four points belong to
the same copy of $\cS$ on the grid then at most three of these
points have dots, since the dots are distributed by the
$B_2$-sequence $\cB$. Thus, any shape $\cS$ on $\Z^D$ will define
a DDC and the theorem follows.
\end{proof}

\begin{cor}
\label{cor:fold_B2} If the triple $(\Lambda,\cS,\delta)$ defines a
folding and $\cB$ is a $B_2$-sequence over $\Z_n$, where
$n=|\cS|$, then the pattern of dots defined by
$(\Lambda,\cS,\delta,\cB)$ is a DDC.
\end{cor}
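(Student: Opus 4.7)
The plan is extremely short: the corollary is essentially a specialization of Theorem~\ref{thm:fold_B2}. That theorem already establishes that under the hypotheses, the infinite array defined by $(\Lambda,\cS,\delta,\cB)$ is a \emph{multi periodic $\cS$-DDC}, meaning that the dots contained in \emph{every} shift $(i_1,\ldots,i_D)+\cS$ form a DDC. Since the definition of ``multi periodic $\cS$-DDC'' quantifies over all shifts, it applies in particular to the shift by $(0,0,\ldots,0)$, i.e.\ to the copy of $\cS$ whose center is the origin.

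Thus the plan is: first invoke Theorem~\ref{thm:fold_B2} to obtain that $(\Lambda,\cS,\delta,\cB)$ is a multi periodic $\cS$-DDC. Second, pick any single shift of $\cS$ (for concreteness, the copy of $\cS$ centered at the origin, which is the shape on which the folding was originally performed). Third, observe that by definition the dots in this shift form a DDC, and that this is exactly what ``the pattern of dots defined by $(\Lambda,\cS,\delta,\cB)$ is a DDC'' means.

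There is no real obstacle: the hard work has already been done in Theorem~\ref{thm:fold_B2}, whose proof combined Corollary~\ref{cor:eq_diff_col} (to translate the $B_2$ difference condition from $\Z_n$ to the coloring) with the fact that the folding assigns all $|\cS|$ colors distinctly inside any single copy of $\cS$. The corollary just extracts the ``one copy'' case from the stronger ``all copies'' statement, so the proof is a single sentence of dereferencing.
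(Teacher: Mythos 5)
Your proposal is correct and matches the paper's intent exactly: the paper states the corollary without proof as an immediate restriction of Theorem~\ref{thm:fold_B2} to a single copy of $\cS$, and its subsequent remark ("the difference between Theorem~\ref{thm:fold_B2} and Corollary~\ref{cor:fold_B2} is related to the folding into $\Z^D$ and $\cS$, respectively") confirms that the only content is the dereferencing you describe. Your observation that the copy of $\cS$ centered at the origin carries precisely the dot pattern of the folding of $\cB$ into $\cS$ correctly handles the paper's deliberate overloading of the notation $(\Lambda,\cS,\delta,\cB)$.
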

Note, that the difference between Theorem~\ref{thm:fold_B2} and
Corollary~\ref{cor:fold_B2} is related to the folding into $\Z^D$
and $\cS$, respectively. The last lemma is given for completeness.

\begin{lemma}
\label{lem:disjoint_colors} If $(\Lambda,\cS,\delta)$ defines a
folding then the $|\cS|$ colors inside any copy of $\cS$ on a
$\Z^D$ are all distinct.
\end{lemma}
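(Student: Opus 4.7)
The plan is to unpack the definition of the coloring $\cC(\Lambda,\delta)$ and use the tiling structure to transport the distinctness of colors on $\cS_1$ (the copy of $\cS$ containing the origin) to every other copy of $\cS$ in the tiling.

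First I would recall that the coloring is constructed by walking along the folded-row starting at the origin and assigning the colors $0, 1, \ldots, |\cS|-1$ in order to the successive points of $\cS_1$. Since by hypothesis $(\Lambda,\cS,\delta)$ defines a folding, the folded-row visits every point of $\cS_1$, so the $|\cS|$ points of $\cS_1$ receive all $|\cS|$ distinct colors. This handles the case of $\cS_1$ itself.

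Next, let $\cS'$ be any other copy of $\cS$ in the tiling, with center $C=(c_1,\ldots,c_D)$, which is necessarily a lattice point. For any point $Q\in\cS'$ we have $c(Q)=C$ by the tiling property, and by Lemma~\ref{lem:center} the point $P\deff Q-C$ belongs to $\cS_1$. By definition of the coloring at non-origin-centered copies, $\cC(Q)=\cC(Q-c(Q))=\cC(P)$. Conversely, as $Q$ ranges over $\cS'$, the point $P=Q-C$ ranges over all of $\cS_1$ (since $\cS'=C+\cS_1$ as sets). Hence the multiset of colors appearing in $\cS'$ equals the multiset of colors appearing in $\cS_1$, which consists of $|\cS|$ distinct values.

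The argument is essentially bookkeeping via Lemma~\ref{lem:center}; the only thing to be careful about is confirming that \emph{every} copy of $\cS$ in the tiling is of the form $C+\cS_1$ for a lattice point $C$, which is immediate from the definition of a lattice tiling with the origin chosen as a center. There is no substantive obstacle beyond this identification.
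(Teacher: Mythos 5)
There is a genuine gap: you have proved a weaker statement than the lemma asserts. In this paper a ``copy of $\cS$ on $\Z^D$'' means an \emph{arbitrary} translate $(e_1,\ldots,e_D)+\cS$, not only the tiles of the lattice tiling. (This is how ``shifted copy'' is defined in Section~\ref{sec:bounds}, it is how the paper's own proof begins --- ``$\cS_2=(e_1,\ldots,e_D)+\cS_1$'' for a general shift --- and it is the only reading under which the remark that Theorem~\ref{thm:fold_B2} follows from this lemma makes sense.) Your argument handles only translates by lattice points: you write $\cC(Q)=\cC(Q-C)$ with a single center $C$ for all $Q\in\cS'$, which is valid precisely when $\cS'$ is a tile of the tiling. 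For that case the conclusion is essentially the definition of the coloring, since each tile is colored by copying the coloring of the origin tile. The substantive content of the lemma is the case of a translate that is \emph{not} aligned with the tiling: such a copy straddles several tiles, each of its points has a different $c(\cdot)$, and your identification breaks down.

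The missing ingredient is Corollary~\ref{cor:eq_diff_col}: for any fixed shift vector $(e_1,\ldots,e_D)$, the difference $\cC(P+e)-\cC(P)\pmod{|\cS|}$ is the same for all points $P$, i.e., translating by $e$ adds a \emph{constant} to every color modulo $|\cS|$. Since the origin-centered tile $\cS_1$ carries all $|\cS|$ distinct colors (because $(\Lambda,\cS,\delta)$ defines a folding), and adding a constant modulo $|\cS|$ is a bijection of the color set, the translate $e+\cS_1$ also carries $|\cS|$ distinct colors --- for every $e$, lattice point or not. That is the paper's one-line argument, and it is what your proof needs in place of the center-subtraction step.
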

\begin{proof}
Let $\cS_1$ and $\cS_2$ be two distinct copies of $\cS$ on $\Z^D$.
Clearly, $\cS_2 = (e_1 , \ldots , e_D ) + \cS_1$. By
Corollary~\ref{cor:eq_diff_col}, for each $(i_1 , \ldots , i_D
),(j_1 , \ldots , j_D ), \in \cS_1$ we have $\cC(i_1+e_1 , \ldots
, i_D + e_D) - \cC(i_1 , \ldots , i_D ) \equiv \cC(j_1+e_1 ,
\ldots , j_D + e_D) - \cC(j_1 , \ldots , j_D ) ~(\bmod~|\cS|)$.
Therefore, if $\cS_1$ contains $|\cS|$ distinct colors then also
$\cS_2$ contains $|\cS|$ distinct colors. The lemma follows now
from the fact that $(\Lambda,\cS,\delta)$ defines a folding and
therefore all the colors in the shape $\cS$ whose center is in the
origin are distinct.
\end{proof}
Note, that theorem~\ref{thm:fold_B2} is also an immediate
consequence of Lemma~\ref{lem:disjoint_colors}.

\section{Bounds for Specific Shapes}
\label{sec:DDAs}

In this section we will present some lower bounds on the number of
dots in some two-dimensional DDCs with specific shapes. In the
sequel we will use Theorem~\ref{thm:infinite},
Theorem~\ref{thm:fold_B2}, and Corollary~\ref{cor:fold_B2} to form
DDCs with various given shapes with a large number of dots. To
examine how good are our lower bounds on the number of dots, in a
DDC whose shape is $\cQ$, we should know what is the upper bound
on the number of dots in a DDC whose shape is $\cQ$. By
Theorem~\ref{thm:upper_bound} we have that for a DDC whose shape
is a regular polygon or a circle, an upper bound on the number of
dots is at most $\sqrt{s} +o(\sqrt{s})$, where the shape contains
$s$ points of the square grid and $s \rightarrow \infty$. One of
the main keys of our constructions, and the usage of the given
theory, is the ability to produce a multi periodic $\cS$-DDC,
where $\cS$ is a rectangle, the ratio between its sides is close
to any given number $\gamma$, and if its area is $s$ then the
number of dots in it is approximately $\sqrt{s} +o(\sqrt{s})$.

For the construction we will need the well known Dirichlet's
Theorem~\cite[p. 27]{NiZu79}.

\begin{theorem}
\label{thm:Dirichlet} If $a$ and $b$ are two positive relatively
primes integers then the arithmetic progression of terms $ai+b$,
for $i=1, 2, ...$, contains an infinite number of primes.
\end{theorem}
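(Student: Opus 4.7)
The plan is to prove Dirichlet's theorem by the classical analytic method, using Dirichlet characters and $L$-functions to isolate the primes in the arithmetic progression $an+b$ and then showing that a certain series diverges as $s \to 1^+$.

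First, I would introduce the group $G = (\Z/a\Z)^*$ of reduced residues modulo $a$; since $\gcd(a,b)=1$, the class of $b$ lies in $G$. The Dirichlet characters modulo $a$ are the $\phi(a)$ homomorphisms $\chi\colon G \to \C^*$, extended to $\Z$ by setting $\chi(n)=0$ when $\gcd(n,a)>1$. The orthogonality relation
$$\frac{1}{\phi(a)} \sum_{\chi} \overline{\chi(b)}\chi(n) = \begin{cases} 1 & \text{if } n \equiv b \pmod a, \\ 0 & \text{otherwise,} \end{cases}$$
will let me pick out exactly the integers (and in particular primes) in the congruence class $b \pmod a$.

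Second, I would introduce the Dirichlet $L$-functions $L(s,\chi) = \sum_{n \geq 1} \chi(n)/n^s$ for $\mathrm{Re}(s) > 1$. Since each $\chi$ is completely multiplicative, one obtains the Euler product $L(s,\chi) = \prod_{p} (1 - \chi(p) p^{-s})^{-1}$. Taking the principal branch of the logarithm and expanding,
$$\log L(s,\chi) = \sum_p \frac{\chi(p)}{p^s} + R(s,\chi),$$
where $R(s,\chi)$ is bounded as $s \to 1^+$ (the contribution from higher prime powers). Combining with orthogonality,
$$\frac{1}{\phi(a)} \sum_{\chi} \overline{\chi(b)} \log L(s,\chi) = \sum_{p \equiv b \pmod a} \frac{1}{p^s} + O(1).$$

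Third, I would analyze the behavior as $s \to 1^+$. For the principal character $\chi_0$, $L(s,\chi_0) = \zeta(s) \prod_{p \mid a}(1-p^{-s})$, which has a pole at $s=1$, so $\log L(s,\chi_0) \to +\infty$. For each non-principal $\chi$, the partial sums of $\chi(n)$ are bounded, so $L(s,\chi)$ extends continuously to $s=1$. Hence the whole left-hand side tends to infinity provided that $\log L(1,\chi)$ is finite for every non-principal $\chi$, i.e., provided $L(1,\chi) \neq 0$. Once divergence of $\sum_{p \equiv b (a)} p^{-s}$ is established, there must be infinitely many such primes, giving the theorem.

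The main obstacle, as always with this proof, is the non-vanishing $L(1,\chi) \neq 0$ for non-principal $\chi$. I would split into two cases. If $\chi$ is complex (i.e.\ $\chi \neq \bar\chi$), then the pair $\{\chi,\bar\chi\}$ contributes $|L(1,\chi)|^2$ to a product that must be nonzero, since $\prod_\chi L(s,\chi)$ has a known behavior near $s=1$ (via the zeta function of the cyclotomic field, whose Laurent expansion has a simple pole) incompatible with a double zero. The genuinely hard case is a real non-principal character, where I would use the classical argument of considering $\sum_n (\chi \ast 1)(n)/\sqrt{n}$, showing on one hand that its partial sums are unbounded by comparing with a divergent series of nonnegative terms, and on the other hand bounding them by something finite if $L(1,\chi) = 0$, obtaining a contradiction. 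Having disposed of this, the divergence argument above completes the proof.
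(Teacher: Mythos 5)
The paper does not prove this statement at all: Dirichlet's theorem is quoted as a classical result with a citation to Niven and Zuckerman and is used as a black box (together with the Euclidean algorithm) to manufacture primes of the form $\alpha\beta R + c_\alpha\alpha + 1$ in the rectangle construction of Section~\ref{sec:DDAs}. So there is no proof in the paper to compare yours against. What you have written is an outline of the standard analytic proof via Dirichlet characters and $L$-functions, and as an outline it is correct: the orthogonality relation isolating the class of $b$ modulo $a$, the Euler product and the expansion of $\log L(s,\chi)$ with bounded contribution from higher prime powers, the pole of $L(s,\chi_0)$ at $s=1$, the convergence of $L(s,\chi)$ at $s=1$ for non-principal $\chi$ via bounded partial sums, and the reduction of the whole theorem to the non-vanishing $L(1,\chi)\neq 0$ are exactly the right steps, and your split into complex and real characters (the former handled by the fact that $\prod_\chi L(s,\chi)$ has only a simple pole at $s=1$ and is at least $1$ for real $s>1$, the latter by the Mertens-style argument with $\sum_n (\chi\ast 1)(n)/\sqrt{n}$) is the standard resolution. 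Be aware, however, that what you have submitted is a plan rather than a proof: the decisive and genuinely difficult step, $L(1,\chi)\neq 0$ for a real non-principal character, is only gestured at, and none of the estimates are actually carried out. For the purposes of this paper none of this machinery is needed, since the author simply invokes the theorem; if you intend your argument to stand on its own, the real-character case must be written out in full.
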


The following theorem is a well known consequence of the well
known Euclidian algorithm~\cite[p. 11]{NiZu79}.

\begin{theorem}
\label{thm:Euclid} If $\alpha$ and $\beta$ are two integers such
that $\text{g.c.d.}(\alpha,\beta)=1$ then there exist two integers
$c_\alpha$ and $c_\beta$ such that $c_\alpha \alpha + c_\beta
\beta =1$.
\end{theorem}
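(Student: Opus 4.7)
The plan is to invoke the Euclidean algorithm directly and then run it backwards to produce the desired linear combination. Without loss of generality, I would assume $\alpha \geq \beta > 0$ (the case of negative or zero arguments being handled by sign flips and by the trivial identities $1 \cdot 1 + 0 \cdot 0 = 1$).

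First I would apply the Euclidean algorithm to obtain a sequence of remainders $r_0 = \alpha$, $r_1 = \beta$, and $r_{i+1} = r_{i-1} - q_i r_i$ with $0 \leq r_{i+1} < r_i$, defined until some $r_{k+1} = 0$. A standard induction on $i$ shows that $\gcd(r_{i-1}, r_i) = \gcd(\alpha, \beta)$ for every $i$, so $r_k = \gcd(\alpha, \beta) = 1$ by hypothesis. This is the easy half of the argument.

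The main step is then the back-substitution: I would prove by downward induction on $i$ that each remainder $r_i$ can be written as $r_i = x_i \alpha + y_i \beta$ for some integers $x_i, y_i$. The base cases $i = 0, 1$ are immediate ($(x_0,y_0) = (1,0)$ and $(x_1,y_1) = (0,1)$), and the inductive step follows by substituting $r_{i+1} = r_{i-1} - q_i r_i = (x_{i-1} - q_i x_i) \alpha + (y_{i-1} - q_i y_i) \beta$. Setting $i = k$ yields $1 = r_k = x_k \alpha + y_k \beta$, so taking $c_\alpha = x_k$ and $c_\beta = y_k$ completes the proof.

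There is no real obstacle here; the only mild subtlety is handling the sign and degenerate cases at the outset, but the coefficients $c_\alpha, c_\beta$ are allowed to be any integers (positive, negative, or zero), so once the positive case is established, the general case follows by trivial sign adjustments. An alternative, equally short route is to let $d$ be the smallest positive element of $\{c_\alpha \alpha + c_\beta \beta : c_\alpha, c_\beta \in \Z\} \cap \Z_{>0}$ and to show via the division algorithm that $d$ divides both $\alpha$ and $\beta$, hence divides $\gcd(\alpha,\beta) = 1$, forcing $d = 1$; I would mention this as a cleaner alternative if space permits.
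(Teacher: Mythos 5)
Your proposal is a correct and complete proof of B\'ezout's identity via the extended Euclidean algorithm, and the alternative argument you sketch (the least positive element of $\{c_\alpha \alpha + c_\beta \beta\}$) is also sound. Note that the paper itself offers no proof of this statement --- it is quoted as a classical consequence of the Euclidean algorithm with a citation to Niven and Zuckerman --- so there is nothing to compare against; either of your two routes is the standard textbook argument being referenced.
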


The next theorem makes usage of these well known old foundations.

\begin{theorem}
\label{thm:ratio_rec} For each positive number $\gamma$ and any
$\epsilon > 0$, there exist two integers $n_1$ and $n_2$ such that
$\gamma \leq \frac{n_1}{n_2} < \gamma + \epsilon$; and there
exists a multi periodic $\cS$-DDC with $\sqrt{a \cdot b} R + o(R)$
dots whose shape is an $n_1 \times n_2 = (a R +o(R)) \times (b R
+o(R))$ rectangle, where $n_1 n_2 =p^2-1$ for some prime $p$, and
$n_1$ is an even integer.
\end{theorem}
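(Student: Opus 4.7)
The plan is to produce, for infinitely many $R$, a prime $p$ together with a factorization $p^2 - 1 = n_1 n_2$ of the prescribed shape; the folding machinery then does the rest. Specifically, once $n_1$ is even and $n_1/n_2 \in [\gamma,\gamma+\epsilon)$ with $n_1 = aR+o(R)$, $n_2 = bR+o(R)$, Theorem~\ref{thm:Bose} supplies a $B_2$-sequence $\cB$ of size $p$ over $\Z_{n_1 n_2}$, and combining it with an F2-style lattice tiling of the $n_1 \times n_2$ rectangle via Theorem~\ref{thm:fold_B2} yields a multi periodic $\cS$-DDC with $p = \sqrt{ab}\,R$ dots.

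The decisive arithmetic idea is to rationally approximate $\sqrt{\gamma}$ rather than $\gamma$ itself. Pick positive integers $a_0$ even and $b_0$ odd with $\gcd(a_0,b_0)=1$ and $a_0/b_0 \in (\sqrt{\gamma}, \sqrt{\gamma+\epsilon/2})$; this is possible because fractions of the form $2m/n$ with $n$ odd and $\gcd(2m,n)=1$ are dense in $\R_{>0}$. Set $a = a_0^2$ and $b = b_0^2$, so that $a$ is even and $a/b = (a_0/b_0)^2 \in (\gamma,\gamma+\epsilon/2)$. By the Chinese Remainder Theorem, $p \equiv 1 \pmod{a_0}$ and $p \equiv -1 \pmod{b_0}$ defines a single residue class modulo $a_0 b_0$, coprime to $a_0 b_0$, so by Theorem~\ref{thm:Dirichlet} infinitely many primes $p$ lie in this progression. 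For each such $p$, let $\alpha = (p-1)/a_0$ and $\beta = (p+1)/b_0$, and set
\[
n_1 = a_0 \beta, \qquad n_2 = b_0 \alpha.
\]
Then $n_1 n_2 = a_0 b_0 \alpha \beta = (p-1)(p+1) = p^2-1$, and $n_1$ is a multiple of the even number $a_0$. The ratio
\[
\frac{n_1}{n_2} \;=\; \frac{a_0^2}{b_0^2}\cdot\frac{p+1}{p-1} \;=\; \frac{a}{b}\cdot\frac{p+1}{p-1}
\]
exceeds $a/b > \gamma$ and tends to $a/b < \gamma+\epsilon/2$ from above, so it lies in $[\gamma,\gamma+\epsilon)$ for all sufficiently large $p$. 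Taking $R = p/(a_0 b_0)$ gives $n_1 = a_0(p+1)/b_0 = aR + O(1) = aR+o(R)$, $n_2 = b_0(p-1)/a_0 = bR+o(R)$, and $\sqrt{ab}\,R = a_0 b_0\, R = p$, matching the dot count.

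To produce the DDC, let $\cS$ be the $n_1 \times n_2$ rectangle and let $\Lambda$ be the lattice with generator matrix $\bigl(\begin{smallmatrix} n_1 & 0 \\ -1 & n_2 \end{smallmatrix}\bigr)$; a direct check shows $\Lambda$ tiles $\cS$ (determinant $n_1 n_2 = |\cS|$, and distinct points of $\cS$ differ by no nonzero lattice vector). With $v_{11}=n_1$ and $v_{21}=-1$, Theorem~\ref{thm:cond_fold2D} shows that the direction $\delta=(0,+1)$ defines a folding, since $\gcd(n_1,-1)=1$. Theorem~\ref{thm:fold_B2} applied to $(\Lambda,\cS,\delta,\cB)$, with $\cB$ the Bose sequence, then gives the required multi periodic $\cS$-DDC with $p = \sqrt{ab}\,R + o(R)$ dots. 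The main obstacle in this program is the arithmetic one of producing divisors of $p^2-1$ of the right magnitude; the trick that unlocks it is to approximate $\sqrt{\gamma}$ (instead of $\gamma$) by a rational $a_0/b_0$ with $a_0$ even, and then ``cross-pair'' the Dirichlet-induced factorizations of $p-1$ and $p+1$ so that the resulting ratio tends to $(a_0/b_0)^2$.
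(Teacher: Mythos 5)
Your proof is correct and follows essentially the same route as the paper's: both approximate $\sqrt{\gamma}$ by a rational with one even term, use Dirichlet's theorem on a progression forcing one approximant to divide $p-1$ and the other to divide $p+1$, cross-pair the resulting factorization of $p^2-1$ into $n_1\times n_2$, and fold a Bose $B_2$-sequence via Theorem~\ref{thm:fold_B2}. The only differences are cosmetic: you set up the congruences $p\equiv 1\ (\bmod\ a_0)$, $p\equiv -1\ (\bmod\ b_0)$ by CRT with $\gcd(a_0,b_0)=1$ where the paper uses a Bezout relation with $\gcd(\alpha,\beta)=2$ (these impose the same conditions), and you use a different but equally valid generator matrix and direction for the rectangle tiling --- noting only that the paper's specific lattice, with off-diagonal entry $n_1/2+\theta$, is chosen so it can be reused verbatim for the quasi-regular hexagon in the argument that follows the theorem.
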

\begin{proof}
Given a positive number $\gamma$ and an $\epsilon > 0$, it is easy
to verify that there exist two integers $\alpha$ and $\beta$ such that
$\sqrt{\gamma} \leq \frac{\beta}{\alpha} < \sqrt{\gamma + \epsilon}$ and
$\text{g.c.d.}(\alpha,\beta)=2$. By Theorem~\ref{thm:Euclid} there
exist two integers $c_\alpha$, $c_\beta$ such that either
$c_\alpha \alpha +2 = c_\beta \beta >0$ or $c_\beta \beta +2 =
c_\alpha \alpha >0$.

Assume $c_\alpha \alpha +2 = c_\beta \beta
>0$ (the case where $c_\beta \beta +2 =
c_\alpha \alpha >0$ is handled similarly). Clearly, any factor of
$\alpha$ cannot divide $c_\alpha \alpha +1$. Since $\beta$ divides
$c_\alpha \alpha +2$, it follows that a factor of $\beta$ cannot
divide $c_\alpha \alpha +1$. Hence,
$\text{g.c.d.}(\alpha\beta,c_\alpha \alpha +1)=1$. Therefore, by
Theorem~\ref{thm:Dirichlet} there exist infinitely many primes in
the sequence $\alpha \beta R +c_\alpha \alpha +1$,
$R=1,2,\ldots$~.

Let $p$ be a prime number of the form $\alpha \beta R +c_\alpha
\alpha +1$. Now, $p^2-1=(p+1)(p-1)=(\alpha \beta R +c_\alpha
\alpha +2)(\alpha \beta R +c_\alpha \alpha)=(\alpha \beta R
+c_\beta \beta)(\alpha \beta R +c_\alpha \alpha)=(\alpha^2 R +
\alpha c_\beta)(\beta^2 R + \beta c_\alpha)$. Thus, a $(\beta^2 R
+ \beta c_\alpha) \times (\alpha^2 R + \alpha c_\beta)$ rectangle
satisfies the size requirements for the $n_1 \times n_2$ rectangle
of the Theorem.

Let $a=\beta^2$,
$b=\alpha^2$, $n_1 = \beta^2 R + \beta c_\alpha$,
$n_2 = \alpha^2 R + \alpha c_\beta$, and let $\cS$ be an $n_1 \times n_2$ rectangle.
Let $\Lambda$ be the a lattice
tiling for $\cS$ with the generator matrix
$$
G=\left[\begin{array}{cc}
n_2 & \frac{n_1}{2}+\theta \\
0 & n_1
\end{array}\right]~,
$$
where $\theta=1$ if $n_1 \equiv 0~(\bmod 4)$ and $\theta=2$ if $n_1
\equiv 2~(\bmod 4)$. By Theorem~\ref{thm:cond_fold2D},
$(\Lambda,\cS,\delta)$, $\delta=(+1,0)$, defines a folding.

The existence of a multi periodic $\cS$-DDC with $\sqrt{a
\cdot b} R + o(R)$ dots follows now
from Theorems~\ref{thm:Bose} and~\ref{thm:fold_B2}.
\end{proof}

The next key structure in our constructions is a certain family of
hexagons defined next. A {\it centroid hexagon} is an hexagon with
three disjoint pairs of parallel sides. If the four angles of two
parallel sides (called the {\it bases} of the hexagon) are equal
and the four other sides are equal, the hexagon will be called a
{\it quasi-regular hexagon} and will be denoted by QRH($w,b,h$),
where $b$ is the length of a base, $h$ is the distance between the
two bases, and $b+2w$ is the length between the two vertices not
on the bases. We will call the line which connects these two
vertices, the {\it diameter} of the hexagon (even if it might not
be the longest line between two points of the hexagon).
Quasi-regular hexagon will be the shape $\cS$ that will have the
role of $\cS$ when we will apply Theorem~\ref{thm:infinite} to
obtain a lower bound on the number of dots in a shape $\cQ$ which
usually will be a regular polygon. In the sequel we will say that
$\frac{\beta}{\alpha} \approx \gamma$, when we means that $\gamma
\leq \frac{\beta}{\alpha} < \gamma + \epsilon$.

We want to show that there exists a quasi-regular hexagon
QRH($w,b,h$) with approximately $\sqrt{(b+w)h} + o(\sqrt{(b+w)h})$
dots. By Theorem~\ref{thm:ratio_rec}, there exists a doubly
periodic $\cS$-DCC, where $\cS$ is an $n_1 \times n_2 = (\alpha R
+o(R)) \times (\beta R+o(R))$ rectangle, such that
$\frac{n_2}{n_1} \approx \frac{b+w}{h}$, $n_1 n_2 = p^2 -1$ for
some prime $p$, and $n_1$ is an even integer. The lattice
$\Lambda$ of Theorem~\ref{thm:ratio_rec} is also a lattice tiling
for a a shape $\cS'$, where $\cS'$ is "almost" a a quasi-regular
hexagon QRH($w,b,h$) (part of this lattice tiling is depicted in
Figure~\ref{fig:rec_hex}). By Theorem~\ref{thm:cond_fold2D},
$(\Lambda,\cS,\delta)$, $\delta=(+1,0)$, defines a folding for
this shape too. Hence, we obtain a doubly periodic $\cS'$-DCC,
where $\cS'$ is "almost" a a quasi-regular hexagon QRH($w,b,h$)
with approximately $\sqrt{(b+w)h} + o(\sqrt{(b+w)h})$ dots. This
construction implies the following theorem.

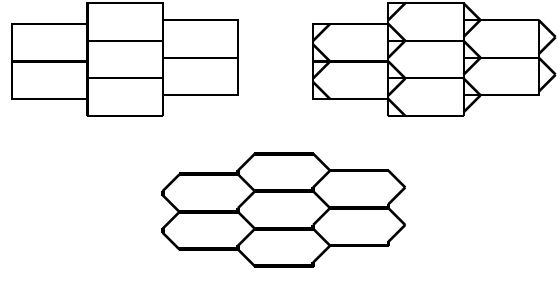
\begin{figure}[tb]

\vspace{2cm}


\setlength{\unitlength}{.5mm}
\begin{picture}(40,40)(40,0)
\linethickness{.4 pt}

\put(50,50){\line(1,0){20}} \put(50,50){\line(0,1){20}}
\put(50,70){\line(1,0){20}} \put(50,60){\line(1,0){20}}
\put(70,45.5){\line(0,1){30}} \put(90,45.5){\line(0,1){30}}
\put(70,45.5){\line(1,0){20}} \put(70,55.5){\line(1,0){20}}
\put(70,65.5){\line(1,0){20}} \put(70,75.5){\line(1,0){20}}
\put(110,51){\line(0,1){20}} \put(90,51){\line(1,0){20}}
\put(90,61){\line(1,0){20}} \put(90,71){\line(1,0){20}}

\put(130,50){\line(1,0){20}} \put(130,50){\line(0,1){20}}
\put(130,70){\line(1,0){20}} \put(130,60){\line(1,0){20}}
\put(150,45.5){\line(0,1){30}} \put(170,45.5){\line(0,1){30}}
\put(150,45.5){\line(1,0){20}} \put(150,55.5){\line(1,0){20}}
\put(150,65.5){\line(1,0){20}} \put(150,75.5){\line(1,0){20}}
\put(190,51){\line(0,1){20}} \put(170,51){\line(1,0){20}}
\put(170,61){\line(1,0){20}} \put(170,71){\line(1,0){20}}

\linethickness{1.0 pt}

\put(130,55.5){\line(1,1){4.5}} \put(130,54.5){\line(1,-1){4.5}}
\put(130,65.5){\line(1,1){4.5}} \put(130,64.5){\line(1,-1){4.5}}
\put(150,51){\line(1,1){4.5}} \put(150,50){\line(1,-1){4.5}}
\put(150,61){\line(1,1){4.5}} \put(150,60){\line(1,-1){4.5}}
\put(150,71){\line(1,1){4.5}} \put(150,70){\line(1,-1){4.5}}
\put(170,56.5){\line(1,1){4.5}} \put(170,55.5){\line(1,-1){4.5}}
\put(170,66.5){\line(1,1){4.5}} \put(170,65.5){\line(1,-1){4.5}}
\put(170,46.5){\line(1,1){4.5}} \put(170,75.5){\line(1,-1){4.5}}
\put(190,52){\line(1,1){4.5}} \put(190,61){\line(1,-1){4.5}}
\put(190,62){\line(1,1){4.5}} \put(190,71){\line(1,-1){4.5}}

\linethickness{1.0 pt}

\put(94.5,10){\line(1,0){15.5}} \put(90,14.5){\line(0,1){1}}
\put(90,24.5){\line(0,1){1}} \put(94.5,30){\line(1,0){15.5}}
\put(94.5,20){\line(1,0){15.5}} \put(110,10){\line(0,1){1}}
\put(110,20){\line(0,1){1}} \put(110,30){\line(0,1){1}}
\put(130,5.5){\line(0,1){1}} \put(130,15.5){\line(0,1){1}}
\put(130,25.5){\line(0,1){1}} \put(114.5,5.5){\line(1,0){15.5}}
\put(114.5,15.5){\line(1,0){15.5}}
\put(114.5,25.5){\line(1,0){15.5}}
\put(114.5,35.5){\line(1,0){15.5}} \put(150,11){\line(0,1){1}}
\put(150,21){\line(0,1){1}} \put(134.5,11){\line(1,0){15.5}}
\put(134.5,21){\line(1,0){15.5}} \put(134.5,31){\line(1,0){15.5}}

\put(90,15.5){\line(1,1){4.5}} \put(90,14.5){\line(1,-1){4.5}}
\put(90,25.5){\line(1,1){4.5}} \put(90,24.5){\line(1,-1){4.5}}
\put(110,11){\line(1,1){4.5}} \put(110,10){\line(1,-1){4.5}}
\put(110,21){\line(1,1){4.5}} \put(110,20){\line(1,-1){4.5}}
\put(110,31){\line(1,1){4.5}} \put(110,30){\line(1,-1){4.5}}
\put(130,16.5){\line(1,1){4.5}} \put(130,15.5){\line(1,-1){4.5}}
\put(130,26.5){\line(1,1){4.5}} \put(130,25.5){\line(1,-1){4.5}}
\put(130,6.5){\line(1,1){4.5}} \put(130,35.5){\line(1,-1){4.5}}
\put(150,12){\line(1,1){4.5}} \put(150,21){\line(1,-1){4.5}}
\put(150,22){\line(1,1){4.5}} \put(150,31){\line(1,-1){4.5}}

%
%

\end{picture}

\caption{From rectangle to "almost" quasi-perfect hexagon with the
same lattice tiling} \label{fig:rec_hex}
\end{figure}

\begin{theorem}
\label{thm:reg_hex}
A lower bound on the number of dots
in a regular hexagon with sides of length $R$ is approximately
$\frac{\sqrt{3\sqrt{3}}}{\sqrt{2}} R+o(R)$.
\end{theorem}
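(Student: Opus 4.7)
The plan is to observe that a regular hexagon of side $R$ is itself a quasi-regular hexagon QRH($w,b,h$). Indeed, requiring the four non-base sides to equal $R$ forces the corner right triangles to have legs of length $w = R/2$ and $h/2 = R\sqrt{3}/2$, so the regular hexagon coincides with QRH($R/2, R, R\sqrt{3}$). Its area then equals $(b+w)h = \frac{3R}{2} \cdot R\sqrt{3} = \frac{3\sqrt{3}}{2}R^2$, and the ratio $\frac{b+w}{h}$ equals $\frac{\sqrt{3}}{2}$. So the construction of the preceding paragraph should be specialized to this value of the ratio.

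Next, I would invoke Theorem~\ref{thm:ratio_rec} with $\gamma$ chosen close to $\sqrt{3}/2$ to extract integers $n_1, n_2$ with $n_1 n_2 = p^2-1$ for some prime $p$, $n_1$ even, and $\frac{n_2}{n_1} \approx \sqrt{3}/2$. The lattice $\Lambda$ of Theorem~\ref{thm:ratio_rec} tiles not only the $n_1 \times n_2$ rectangle but also (as illustrated in Figure~\ref{fig:rec_hex}) an ``almost'' quasi-regular hexagon $\cS'$ whose parameters $w', b', h'$ approximate those of the regular hexagon of side $R$ up to $o(R)$. Theorem~\ref{thm:cond_fold2D} guarantees that $(\Lambda,\cS',(+1,0))$ defines a folding, and Theorems~\ref{thm:Bose} and~\ref{thm:fold_B2} then produce, via folding of a Bose $B_2$-sequence over $\Z_{p^2-1}$, a doubly periodic $\cS'$-DDC of density $\mu = p/|\cS'| = 1/\sqrt{(b+w)h} + o(1/R)$.

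Finally, I would apply Theorem~\ref{thm:infinite} with $\cQ$ the regular hexagon of side $R$. Because $\cS'$ is designed to approximate $\cQ$ itself (not merely some enclosing shape), a suitable shift makes the intersection $\Delta(\cS',\cQ)$ equal to $|\cQ| - o(R^2) = \frac{3\sqrt{3}}{2}R^2 + o(R^2)$. Thus there is a copy of $\cQ$ in $\Z^2$ carrying at least
$$\mu \cdot \Delta(\cS',\cQ) = \sqrt{(b+w)h} + o(R) = \sqrt{\tfrac{3\sqrt{3}}{2}}\,R + o(R) = \frac{\sqrt{3\sqrt{3}}}{\sqrt{2}}\,R + o(R)$$
dots, as claimed.

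The main obstacle is the discretization gap: the lattice $\Lambda$ does not tile an exact regular hexagon but only an ``almost'' QRH obtained from a rectangle by shifting columns (Figure~\ref{fig:rec_hex}), so I must verify that the symmetric difference between $\cS'$ and a true regular hexagon of side $R$ has area $o(R^2)$, which is what keeps the error term $o(R)$ in the final count. A secondary technical point is choosing the parameters $\alpha,\beta$ inside Theorem~\ref{thm:ratio_rec} with $\text{g.c.d.}(\alpha,\beta) = 2$ and $\frac{\beta}{\alpha}$ arbitrarily close to $\sqrt{\sqrt{3}/2}$, so that Dirichlet's theorem supplies infinitely many primes of the required form while simultaneously forcing the hexagon's aspect ratio, but this is routine once the ratio is identified.
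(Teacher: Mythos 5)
Your proposal follows the paper's own route: identify the regular hexagon of side $R$ as the quasi-regular hexagon QRH($R/2,R,\sqrt{3}R$), invoke Theorem~\ref{thm:ratio_rec} to get a rectangle of the matching aspect ratio with $n_1n_2=p^2-1$, transfer the lattice tiling and the $(+1,0)$-folding to the ``almost'' quasi-regular hexagon as in Figure~\ref{fig:rec_hex}, and fold a Bose $B_2$-sequence to get $\sqrt{(b+w)h}+o(R)=\frac{\sqrt{3\sqrt{3}}}{\sqrt{2}}R+o(R)$ dots. The only (harmless) additions are your explicit final appeal to Theorem~\ref{thm:infinite} to absorb the $o(R^2)$ discrepancy between the discrete shape and the true hexagon, which the paper leaves implicit, and a minor bookkeeping slip in whether $\beta/\alpha$ approximates $\sqrt{\sqrt{3}/2}$ or its reciprocal (this depends only on which of $n_1,n_2$ is taken as the height).
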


Now, we can give a few examples for other specific shapes, mostly,
regular polygons. To have some comparison between the bounds for
various shapes we will assume that the radius of the circle or the
regular polygons is $R$ (the {\it radius} is the distance from the
center of the regular polygon to any one its vertices). We also
define the {\it packing ratio} as the ratio between the lower and
the upper bounds on the number of dots. The shape $\cS$ that we
use will always by a multi periodic $\cS$-DDC on a multi periodic
array $\cA$.

\subsection{Circle}
\label{subsec:CircSqu}

We apply Theorem~\ref{thm:infinite}, where $\cS$ is a regular
hexagon with radius $\rho$ and $\cQ$ is a circle with radius $R$,
sharing the same center. The upper bound on the number of dots in
$\cQ$ is $\sqrt{\pi} R + o(R)$. A lower bound on the number of
dots in $\cS$ is approximately $\frac{\sqrt{3\sqrt{3}}}{\sqrt{2}}
\rho +o(\rho)$ and hence the density of $\cA$ is approximately
$\frac{\sqrt{2}}{\sqrt{3 \sqrt{3}}\rho}$. Let $\theta$ be the
angle between two radius lines to the two intersection points of
the hexagon and the circle on one edge of the hexagon. We have
that $\Delta ( \cS , \cQ )=(\pi - 3 \theta + 3 \sin \theta) R^2$
and $\rho = \frac{\cos \frac{\theta}{2}}{\cos \frac{\pi}{6}} R$.
Thus, a lower bound on the number of dots in $\cQ$ is
$\frac{\sqrt{3\sqrt{3}} \rho+o(\rho)}{\sqrt{2}|\cS|} \Delta
(\cS,\cQ)$. The maximum is obtained when $\theta = 0.536267$
yielding a lower bound of $1.70813R + o(R)$ on the number of dots
in $\cQ$ and a packing ratio of 0.9637.


We must note again, that even so this construction works for
infinitely many values of $R$, the density of these values is
quite low. This is a consequence of Theorem~\ref{thm:ratio_rec}
which can be applied for an arbitrary ratio $\gamma$ only when the
corresponding integers obtained by Dirichlet's Theorem are primes.
Of course, there are many possible ratios between the sides of the
rectangle that can be obtained for infinitely many values. A
simple example is for any factorization of $p^2 -1 = n_1 n_2$ we
can form an $n_1 \times n_2$ DDC and from its related
quasi-regular hexagons. We won't go into details to obtain bounds
which hold asymptotically for any given $R$ as we conjecture that
the construction for quasi-regular hexagon can be strengthen
asymptotically for almost all parameters. Nevertheless, we will
show briefly how we can use a doubly periodic $\cS$-DDC, where
$\cS$ is a square to obtain a lower bound for the number of dots
in a DDC whose shape is a circle. We use a doubly periodic
$\cS$-DDC, where $\cS$ is a $(p+1) \times (p-1)$ rectangle. For a
lattice tiling of $\cS$ we use a lattice $\Lambda$ with the
generator matrix

$$
G=\left[\begin{array}{cc}
p-1 & \frac{p+1}{2}+\theta \\
0 & p+1
\end{array}\right]~,
$$
where $\theta=1$ if $p+1 \equiv 0~(\bmod~4)$ and $\theta=2$ if
$p+1 \equiv 2~(\bmod~4)$. By Theorem~\ref{thm:cond_fold2D},
$(\Lambda,\cS,\delta)$, $\delta=(+1,0)$, defines a folding. We can
use Theorem~\ref{thm:trans_shape} to obtain a new shape $\cS'$
which produces better intersection with a circle, and a better
lower bound on the number of dots in it (the previous best packing
ratio obtained with the method implied only by
Theorem~\ref{thm:infinite} (without using
Theorem~\ref{thm:trans_shape} and better multi periodic
$\cS$-DDCs) was $0.91167$ and it was given in~\cite{BEMP08a}).

\subsection{Regular Polygon}

For regular polygons with small number of sides we will use
specific constructions which are given in Appendix C. If the
number of sides is large we will use Theorem~\ref{thm:infinite},
where $\cQ$ will be the regular polygon and $\cS$ is a regular
hexagon. Assume that the regular polygon has $n$ sides, $R$ is its
radius, and $\rho$ is the radius of the regular hexagon. The area
of the hexagon is $\frac{3 \sqrt{3}}{2} \rho^2$ and hence the
density of the doubly periodic array $\cA$ is approximately
$\frac{\sqrt{2}}{\sqrt{3 \sqrt{3}}\rho}$. The area of the regular
polygon is $\frac{n \cdot R^2 \sin \frac{2 \pi}{n}}{2}$ and hence
an upper bound on the number of dots in $\cQ$ is $\frac{\sqrt{n
\cdot \sin \frac{2 \pi}{n}}}{\sqrt{2}}R +o(R)$. For simplicity we
will further assume that $n=12k$ (the results for other values of
$n$ are similar, but the constructions become slightly more
complicated for short description. We will choose a regular
hexagon which has a joint center with the regular polygon. We
further choose it in a way that $\cS$ and $\cQ$ intersect in
exactly 12 vertices of $\cQ$ equally spread. We will also make
sure that each side of $\cS$ intersects exactly two vertices of
$\cQ$ with equal distance from the nearest vertices of $\cS$ to
these two intersection points. It implies that $\Delta (\cS,\cQ) =
\frac{6+ n \cdot \sin \frac{2 \pi}{n}}{4} R^2$ and hence  a lower
bound on the number of dots is $\frac{6+n \cdot \sin \frac{2
\pi}{n}}{2 \cdot 3^{\frac{1}{4}} ( \sqrt{3}+1)} R +o(R)$. Some
values obtained from this construction are given in
Table~\ref{tab:boundsummary}.

For small values of $n$, specific constructions are given in
Appendix C. For some constructions we need DDCs with other shapes
like a Corner and a Flipped T which are defined in Appendix B,
where also constructions of multi periodic $\cS$-DDCs for these
shapes are given. Table~\ref{tab:boundsummary} summarizes the
bounds we obtained for regular polygons and a circle in the square
grid. The same techniques can be used for any $D$-dimensional
shape. Finally, we note that the problem is of interest also from
discrete geometry point of view. Some similar questions can be
found in~\cite{LeTh95}.

\begin{table}
\caption{Bounds on the number of dots in an $n$-gon
DDC}\label{tab:boundsummary}
\begin{equation*}
\begin{array}{lccc}
\hline
$n$ & \text{upper bound} & \text{lower bound} & \text{packing ratio} \\
\hline
3 & 1.13975R & 1.02462R & 0.899   \\
4 &1.41421R & 1.41421R &  1  \\
5 & 1.54196R & 1.45992R  & 0.9468  \\
6 &1.61185R & \approx 1.61185R & \approx 1  \\
7 & 1.65421R & 1.58844R & 0.960241  \\
8 &1.68179R & 1.62625R & 0.966977 \\
9 & 1.70075R & 1.63672R & 0.96235  \\
10 &1.71433R & 1.64786R & 0.961229  \\
12 &1.73205R & 1.66871R & 0.963433  \\
24 &1.76234R & 1.69815R & 0.963578  \\
36 &1.76796R & 1.70367R & 0.963636  \\
48 &1.76992R & 1.7056R & 0.963658  \\
60 &1.77083R & 1.7065R & 0.963669  \\
72 &1.77133R & 1.70699R & 0.963675  \\
84 &1.77163R & 1.70728R & 0.963679  \\
96 &1.77182R & 1.70747R & 0.963681  \\
\text{circle} &1.77245R & 1.70813R & 0.963708 \\
\hline
\end{array}
\end{equation*}
\end{table}

\section{Folding in the Hexagonal Grid}
\label{sec:hex}

The questions concerning DDCs can be asked in the hexagonal grid
in the same way that they are asked in the square grid. Similarly,
they can be asked in dense $D$-dimensional lattices. In this
section we will consider some part of our discussion related to
the hexagonal grid. The hexagonal grid is a two-dimensional grid
and hence we will compare it to $\Z^2$. In $\Z^2$ there are four
different ternary direction vectors, while in the hexagonal grid
there are three different related directions. But, the total
number of directions depend on the shape in both grids (see
Subsection~\ref{sec:further_generalize} and especially
Corollary~\ref{cor:num_direct}). We can define a folded-row and
folding in the hexagonal grid in the same way as they are defined
in $\Z^2$. To prove that the results remain unchanged we will
describe the well known transformation between the hexagonal grid
and $\Z^2$.

The {\it hexagonal grid} is defined as follows. We start by tiling
the plane $\R^2$ with regular hexagons whose sides have length
$1/\sqrt{3}$ (so that the centers of hexagons that share an edge
are at distance $1$). The center points of the hexagons are the
points of the grid. The hexagons tile $\R^2$ in a way that each
point $(i,0)$, $i \in \Z$, is a center of some hexagon.

The transformation uses an isomorphic representation of the
hexagonal grid. Each point $(x,y)\in\Z^2$ has the following
neighboring vertices,
$$\{(x+a,y+b) ~|~ a,b\in\{-1,0,1\}, a+b\neq 0\}.$$ It may be shown
that the two representations are isomorphic by using the mapping
$\xi: \R^2 \rightarrow \R^2$, which is defined by
$\xi(x,y)=(x+\frac{y}{\sqrt{3}},\frac{2y}{\sqrt{3}})$. The effect
of the mapping on the neighbor set is shown in
Fig.~\ref{fig:hexmodel}. From now on, slightly changing notation,
we will also refer to this representation as the hexagonal grid.
Using this new representation the neighbors of point $(i,j)$ are
\begin{multline*}
\{(i-1,j-1),(i-1,j),(i,j-1),(i,j+1),\hfill\\
\hfill (i+1,j),(i+1,j+1)\}.
\end{multline*}

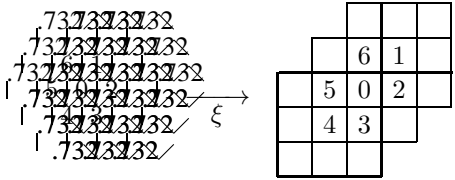
\begin{figure}[t]
\centering \setlength{\unitlength}{.4mm}
\begin{picture}(150,70)(-10,-40)
\put(0,5){\line(0,1){5.77350269}}
\put(10,5){\line(0,1){5.77350269}}
\put(20,5){\line(0,1){5.77350269}}
\put(30,5){\line(0,1){5.77350269}}
\put(10,10.77350269){\line(-1.732,1){5.0}}
\put(0,10.77350269){\line(1.732,1){5.0}}
\put(20,10.77350269){\line(-1.732,1){5.0}}
\put(10,10.77350269){\line(1.732,1){5.0}}
\put(30,10.77350269){\line(-1.732,1){5.0}}
\put(20,10.77350269){\line(1.732,1){5.0}}

\put(-5,-3.66033872){\line(0,1){5.77350269}}
\put(5,-3.66033872){\line(0,1){5.77350269}}
\put(15,-3.66033872){\line(0,1){5.77350269}}
\put(25,-3.66033872){\line(0,1){5.77350269}}
\put(35,-3.66033872){\line(0,1){5.77350269}}
\put(5,2.11316397){\line(-1.732,1){5}}
\put(-5,2.11316397){\line(1.732,1){5}}
\put(15,2.11316397){\line(-1.732,1){5}}
\put(5,2.11316397){\line(1.732,1){5}}
\put(25,2.11316397){\line(-1.732,1){5}}
\put(15,2.11316397){\line(1.732,1){5}}
\put(35,2.11316397){\line(-1.732,1){5}}
\put(25,2.11316397){\line(1.732,1){5}}

\put(-10,-12.3206774){\line(0,1){5.77350269}}
\put(0,-12.3206774){\line(0,1){5.77350269}}
\put(10,-12.3206774){\line(0,1){5.77350269}}
\put(20,-12.3206774){\line(0,1){5.77350269}}
\put(30,-12.3206774){\line(0,1){5.77350269}}
\put(40,-12.3206774){\line(0,1){5.77350269}}
\put(0,-6.54717471){\line(-1.732,1){5}}
\put(-10,-6.54717471){\line(1.732,1){5}}
\put(10,-6.54717471){\line(-1.732,1){5}}
\put(0,-6.54717471){\line(1.732,1){5}}
\put(20,-6.54717471){\line(-1.732,1){5}}
\put(10,-6.54717471){\line(1.732,1){5}}
\put(30,-6.54717471){\line(-1.732,1){5}}
\put(20,-6.54717471){\line(1.732,1){5}}
\put(40,-6.54717471){\line(-1.732,1){5}}
\put(30,-6.54717471){\line(1.732,1){5}}

\put(-5,-15.20751351){\line(0,-1){5.77350269}}
\put(5,-15.20751351){\line(0,-1){5.77350269}}
\put(15,-15.20751351){\line(0,-1){5.77350269}}
\put(25,-15.20751351){\line(0,-1){5.77350269}}
\put(35,-15.20751351){\line(0,-1){5.77350269}}
\put(5,-15.20751351){\line(-1.732,1){5}}
\put(-5,-15.20751351){\line(-1.732,1){5}}
\put(-5,-15.20751351){\line(1.732,1){5}}
\put(5,-15.20751351){\line(1.732,1){5}}
\put(15,-15.20751351){\line(-1.732,1){5}}
\put(25,-15.20751351){\line(-1.732,1){5}}
\put(25,-15.20751351){\line(1.732,1){5}}
\put(15,-15.20751351){\line(1.732,1){5}}
\put(35,-15.20751351){\line(-1.732,1){5}}
\put(35,-15.20751351){\line(1.732,1){5}}

\put(0,-23.86785221){\line(0,-1){5.77350269}}
\put(10,-23.86785221){\line(0,-1){5.77350269}}
\put(20,-23.86785221){\line(0,-1){5.77350269}}
\put(30,-23.86785221){\line(0,-1){5.77350269}}
\put(10,-23.86785221){\line(-1.732,1){5}}
\put(0,-23.86785221){\line(-1.732,1){5}}
\put(0,-23.86785221){\line(1.732,1){5}}
\put(10,-23.86785221){\line(1.732,1){5}}
\put(30,-23.86785221){\line(-1.732,1){5}}
\put(20,-23.86785221){\line(-1.732,1){5}}
\put(20,-23.86785221){\line(1.732,1){5}}
\put(30,-23.86785221){\line(1.732,1){5}}

\put(15,-32.52819091){\line(-1.732,1){5}}
\put(5,-32.52819091){\line(-1.732,1){5}}
\put(5,-32.52819091){\line(1.732,1){5}}
\put(15,-32.52819091){\line(1.732,1){5}}
\put(25,-32.52819091){\line(-1.732,1){5}}
\put(25,-32.52819091){\line(1.732,1){5}}

\put(15,-9.43392605){\makebox(0,0){$0$}}
\put(5,-9.43392605){\makebox(0,0){$5$}}
\put(25,-9.43392605){\makebox(0,0){$2$}}
\put(10,-0.77358733){\makebox(0,0){$6$}}
\put(20,-0.77358733){\makebox(0,0){$1$}}
\put(10,-18.0942648){\makebox(0,0){$4$}}
\put(20,-18.0942648){\makebox(0,0){$3$}}

\thinlines

\put(60,-16.43392605){\makebox(0,0){$\overrightarrow{\quad\xi\quad}$}}

\put(80,-38.3014395){\line(1,0){34.6410162}}
\put(80,-26.7544341){\line(1,0){46.1880215}}
\put(80,-15.2074287){\line(1,0){57.7350269}}
\put(80.5470054,-3.66042332){\line(1,0){57.7350269}}
\put(91.5470054,7.88658206){\line(1,0){46.1880215}}
\put(103.094011, 19.4335874){\line(1,0){34.6410162}}

\put(80,-38.3014395){\line(0,1){34.6410162}}
\put(91.5470054,-38.3014395){\line(0,1){46.1880215}}
\put(103.094011,-38.3014395){\line(0,1){57.7350269}}
\put(114.641016,-38.3014395){\line(0,1){57.7350269}}
\put(126.188022,-26.7544341){\line(0,1){46.1880215}}
\put(137.735027,-15.2074287){\line(0,1){34.6410162}}

\put(108.867513,-9.43392605){\makebox(0,0){$0$}} \put(
97.3205081,-9.43392605){\makebox(0,0){$5$}}
\put(120.414519,-9.43392605){\makebox(0,0){$2$}}
\put(108.867513,2.11307933){\makebox(0,0){$6$}}
\put(97.3205081,-20.9809314){\makebox(0,0){$4$}}
\put(120.414519,2.11307933){\makebox(0,0){$1$}}
\put(108.867513,-20.9809314){\makebox(0,0){$3$}}
\end{picture}
\caption{The hexagonal model translation} \label{fig:hexmodel}
\end{figure}

\begin{lemma}
\label{lem:trans_lines} Two lines differ in length or slope in one
representation if and only if they differ in length or slope in
the other representation.
\end{lemma}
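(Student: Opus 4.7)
The plan is to reduce the lemma to a simple linear-algebra statement about the map $\xi$. The crucial observation is that $\xi(x,y)=(x+y/\sqrt{3},\,2y/\sqrt{3})$ is a \emph{linear} bijection of $\R^2$ (its matrix has determinant $2/\sqrt{3}\neq 0$). Since $\xi$ is linear, it commutes with taking differences of points, and since it is a bijection it is injective on such differences.

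I would then reformulate the hypothesis and conclusion purely in terms of difference vectors. A line segment between two grid points $P$ and $Q$ is characterized (up to translation) by the vector $Q-P$, and two segments $P_1P_2$ and $P_3P_4$ agree in both length and slope precisely when $Q_2:=P_2-P_1$ and $Q_4:=P_4-P_3$ satisfy $Q_2=\pm Q_4$. Thus the negation of ``differ in length or slope'' in either representation is the single algebraic condition $P_2-P_1=\pm(P_4-P_3)$.

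Now I would apply $\xi$: because $\xi$ is linear,
\[
\xi(P_2)-\xi(P_1)=\xi(P_2-P_1),\qquad \xi(P_4)-\xi(P_3)=\xi(P_4-P_3),
\]
and because $\xi$ is a linear bijection, $\xi(P_2-P_1)=\pm\,\xi(P_4-P_3)$ if and only if $P_2-P_1=\pm(P_4-P_3)$. Hence the two segments coincide in both length and slope in one representation iff they do in the other, which is the contrapositive of the lemma. Running this equivalence in both directions completes the proof.

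There is no real obstacle here; the only thing worth being careful about is the $\pm$ ambiguity coming from the fact that ``slope'' identifies a line with its reverse. This is handled uniformly by working with the two difference vectors up to sign, as above, and it is preserved by the linear bijection $\xi$ just as equality is.
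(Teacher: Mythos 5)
Your proof is correct and follows essentially the same route as the paper, which simply asserts the contrapositive (``two lines are equal in length and slope in one representation if and only if they are equal in length and slope in the other'') and leaves the verification to the reader. You supply exactly the missing verification: since $\xi$ is a linear bijection it preserves difference vectors, and agreement in both length and slope is precisely the condition that the two difference vectors coincide up to sign, which a linear bijection preserves.
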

\begin{proof}
This claim can be verified easily by observing that two lines are
equal in length and slope in one representation if and only if
they are equal in length and slope in the other representation.
\end{proof}
\begin{cor}
A shape $\cS$ is a DDC in the hexagonal grid if and only if $\xi
(\cS) = \{ \xi ( p ) ~:~ p \in \cS \}$ is a DDC in $\Z^2$.
\end{cor}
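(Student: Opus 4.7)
The plan is to show this is essentially a direct unpacking of the definition of DDC combined with Lemma~\ref{lem:trans_lines}. Recall that a set of dots forms a DDC exactly when, for every pair of pairs of distinct dots, the two lines they determine differ either in length or in slope. So the DDC property is a purely relational condition on the set of $\binom{m}{2}$ lines determined by the dots.

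First, I would observe that $\xi$ is a bijection from the hexagonal grid (in the isomorphic $\Z^2$ representation used just above the lemma) to $\xi(\Z^2)$, so it induces a bijection between dot sets and, consequently, a bijection between the collections of lines determined by pairs of dots in $\cS$ and in $\xi(\cS)$. In particular, $|\xi(\cS)|=|\cS|$, so both configurations have the same number of dot pairs to compare.

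Next, I would apply Lemma~\ref{lem:trans_lines} pointwise: for any two pairs of dots $\{p_1,p_2\}$ and $\{p_3,p_4\}$ in $\cS$, the line from $p_1$ to $p_2$ and the line from $p_3$ to $p_4$ agree in both length and slope if and only if the corresponding images under $\xi$ (the lines between $\xi(p_1),\xi(p_2)$ and between $\xi(p_3),\xi(p_4)$) agree in both length and slope. Equivalently, the first pair of lines differ in length or slope iff the second pair do.

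Therefore, all $\binom{m}{2}$ lines in $\cS$ are pairwise distinct in length or slope precisely when all $\binom{m}{2}$ lines in $\xi(\cS)$ are, establishing both directions of the corollary simultaneously. There is really no obstacle here; the content of the statement is entirely packed into Lemma~\ref{lem:trans_lines}, and the corollary amounts to noting that the DDC property is defined solely through the length/slope comparisons that the lemma preserves.
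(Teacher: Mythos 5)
Your proof is correct and follows exactly the route the paper intends: the corollary is stated as an immediate consequence of Lemma~\ref{lem:trans_lines}, and your unpacking of the DDC definition as a pairwise length/slope comparison preserved by the bijection $\xi$ is precisely that argument made explicit.
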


Clearly, the representation of the hexagonal grid in terms of
$\Z^2$ implies that all the results on folding in the square grid
hold also in the hexagonal grid. We will consider now the most
important families of DDCs in the hexagonal grid, regular hexagons
and circles. A regular hexagon in the hexagonal grid is also
called an {\it hexagonal sphere} with radius $R$. It is a shape
with a center hexagon which includes all the points in the
hexagonal grid which are within Manhattan distance $R$ from the
center point. Applying the transformation $\xi$ on this sphere we
obtain a new shape in the square grid. This shape is a $(2R+1)
\times (2R+1)$ square from which isosceles right triangle with
sides of length $R$ are removed from the left upper corner and the
right lower corner. For the construction we use as our shape
$\cS$, in Theorem~\ref{thm:infinite}, a corner
CR($2R,w_1+w_2;R,w_2$), where $\frac{R}{w_2} \approx 1$, $|w_1 -
w_2 | \leq 3$ and $\text{g.c.d.}(w_1,w_2)=1$. In Appendix B a
construction for doubly periodic $\cS$-DDC, where $\cS$ is such
corner, is given where the number of dots in $\cS$ is
approximately $\sqrt{|\cS|} +o(\sqrt{|\cS|})$. By
Theorem~\ref{thm:trans_shape} the lattice tiling for $\cS$ is also
a lattice tiling for the shape $\cS'$ obtained from $\cS$ by
removing an isosceles right triangle with sides of length $R$ from
the lower left corner and adding it to the upper right corner of
the $\cS$ (see Figure~\ref{fig:corner_sphere}). The constructed
doubly periodic $\cS'$-DDC can be rotated by 90 degrees or flipped
either horizontally or vertically to obtain a doubly periodic
$\cQ$-DDC, where $\cQ$ is approximately an hexagonal sphere with
radius $R$. This yields a packing ratio approximately 1 between
the lower bound and the upper bound on the number of dots. Now, it
is easy to verify that the same construction, for a DDC with a
circle shape, given in Subsection~\ref{subsec:CircSqu} for the
square grid will work in the hexagonal grid. For this construction
we will use regular hexagon and a circle in the hexagonal grid to
obtain a packing ratio between the lower bound and the upper bound
on the number of dots in the circle which is the same as in the
square grid.

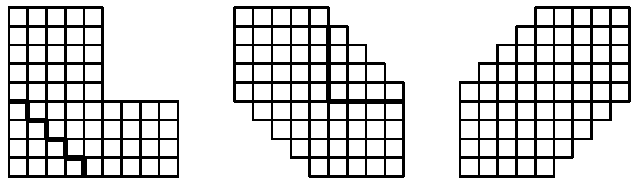
\begin{figure}[tb]

\setlength{\unitlength}{.5mm}
\begin{picture}(40,40)(40,0)
\linethickness{.5 pt}

\put(40,0){\line(1,0){45}} \put(40,5){\line(1,0){45}}
\put(40,10){\line(1,0){45}} \put(40,15){\line(1,0){45}}
\put(40,20){\line(1,0){45}} \put(40,25){\line(1,0){25}}
\put(40,30){\line(1,0){25}} \put(40,35){\line(1,0){25}}
\put(40,40){\line(1,0){25}} \put(40,45){\line(1,0){25}}

\put(40,0){\line(0,1){45}} \put(45,0){\line(0,1){45}}
\put(50,0){\line(0,1){45}} \put(55,0){\line(0,1){45}}
\put(60,0){\line(0,1){45}} \put(65,0){\line(0,1){45}}
\put(70,0){\line(0,1){20}} \put(75,0){\line(0,1){20}}
\put(80,0){\line(0,1){20}} \put(85,0){\line(0,1){20}}

\put(120,0){\line(1,0){25}} \put(115,5){\line(1,0){30}}
\put(110,10){\line(1,0){35}} \put(105,15){\line(1,0){40}}
\put(100,20){\line(1,0){45}} \put(100,25){\line(1,0){45}}
\put(100,30){\line(1,0){40}} \put(100,35){\line(1,0){35}}
\put(100,40){\line(1,0){30}} \put(100,45){\line(1,0){25}}

\put(100,20){\line(0,1){25}} \put(105,15){\line(0,1){30}}
\put(110,10){\line(0,1){35}} \put(115,5){\line(0,1){40}}
\put(120,0){\line(0,1){45}} \put(125,0){\line(0,1){45}}
\put(130,0){\line(0,1){40}} \put(135,0){\line(0,1){35}}
\put(140,0){\line(0,1){30}} \put(145,0){\line(0,1){25}}

\put(160,0){\line(1,0){25}} \put(160,5){\line(1,0){30}}
\put(160,10){\line(1,0){35}} \put(160,15){\line(1,0){40}}
\put(160,20){\line(1,0){45}} \put(160,25){\line(1,0){45}}
\put(165,30){\line(1,0){40}} \put(170,35){\line(1,0){35}}
\put(175,40){\line(1,0){30}} \put(180,45){\line(1,0){25}}

\put(160,0){\line(0,1){25}} \put(165,0){\line(0,1){30}}
\put(170,0){\line(0,1){35}} \put(175,0){\line(0,1){40}}
\put(180,0){\line(0,1){45}} \put(185,0){\line(0,1){45}}
\put(190,5){\line(0,1){40}} \put(195,10){\line(0,1){35}}
\put(200,15){\line(0,1){30}} \put(205,20){\line(0,1){25}}

\linethickness{1.5 pt} \put(125,20){\line(0,1){20}}
\put(125,20){\line(1,0){20}}

\put(40,20){\line(1,0){5}} \put(45,15){\line(1,0){5}}
\put(50,10){\line(1,0){5}} \put(55,5){\line(1,0){5}}
\put(45,20){\line(0,-1){5}} \put(50,15){\line(0,-1){5}}
\put(55,10){\line(0,-1){5}} \put(60,5){\line(0,-1){5}}

\end{picture}

\caption{From a corner CR$(9,9;5,4)$ to hexagonal sphere with
radius 4}\label{fig:corner_sphere}

\end{figure}

\section{Application for Error-Correction}
\label{sec:ECC}

In this section we will discuss the usage of folding to design
optimal (or "almost" optimal) codes which can correct adjacent
errors in a multidimensional array, i.e., a multidimensional
2-burst-correcting code. The construction is a generalization of
the construction of optimal one-dimensional 2-burst-correcting
codes given by Abramson~\cite{Abr59}. His construction was
generalized for larger bursts by~\cite{ElSh} and~\cite{AMOT} who
gave a comprehensive treatment for this topic. Multidimensional
generalization for the 2-burst-correcting codes were given
in~\cite{ScEt05,YaEt09}. We will give a multidimensional
generalization only for the 2-burst-correcting codes. The
parity-check matrix of a code of length $2^m-1$ and redundancy
$m+1$, consists of the $2^m-1$ consecutive nonzero elements
(powers of a primitive element $\alpha$) of GF($2^m$) followed by
a row of {\it ones}. The received word has one or two errors
depending if the last entry of its syndrome is {\it one} or {\it
zero}, respectively. The position of the error is determined by
the first $m$ entries of the syndrome.

The generalization of this idea is done by folding the nonzero
elements of GF($2^m$) into the parity-check matrix of a
multidimensional code row by row, dimension by dimension. Assume
that we have a $D$-dimensional array of size $n_1 \times n_2
\times \cdots \times n_D$ and we wish to correct any
$D$-dimensional burst of length 2 (at most two adjacent positions
are in error). The following construction given in~\cite{YaEt09}
is based on folding the nonzero elements of a Galois field with
characteristic 2 into a parity check matrix, where the order of
the elements of the field is determined by a primitive element of
the field.

\noindent {\bf Construction A:} Let $\alpha$ be a primitive
element in GF($2^m$) for the smallest integer $m$ such that $2^m-1
\geq \prod_{\ell=1}^D n_\ell$. Let $d=\lceil \log_2 D \rceil$ and
$\bi=(i_1,i_2,\ldots,i_D)$, where $0 \leq i_\ell \leq n_\ell-1$.
Let $A$ be a $d\times D$ matrix containing distinct binary
$d$-tuples as columns. We construct the following $n_1 \times n_2
\times \cdots \times n_D \times (m+d+1)$ parity check matrix $H$.

$$h_{\bi}=
\left[\begin{array}{c}
1 \\
A\bi^T \bmod 2 \\
\alpha^{\sum_{j=1}^D i_j (\prod_{\ell=j+1}^D n_\ell)}
\end{array}\right].
$$
for all $\bi=(i_1,i_2,\ldots,i_D)$, where $0 \leq i_\ell \leq
n_\ell-1$.

The following two theorems were given in~\cite{YaEt09}.
\begin{theorem}
The code constructed in Construction A can correct any 2-burst in
an $n_1 \times n_2 \times \cdots \times n_D$ array codeword.
\end{theorem}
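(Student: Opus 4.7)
The plan is to show that the syndromes of all distinguishable error patterns (the zero pattern, the $\prod_\ell n_\ell$ single-error patterns, and the $D$-dimensional 2-burst patterns) are pairwise distinct, which is equivalent to unique decodability. Let $k(\bi) = \sum_{j=1}^D i_j \prod_{\ell=j+1}^D n_\ell$ denote the one-dimensional ``folded'' index of position $\bi$, so the third block of $h_\bi$ is $\alpha^{k(\bi)}$. By construction $k$ is a bijection from the set of positions onto $\{0,1,\ldots,\prod_\ell n_\ell -1\}$, and this range is contained in $\{0,1,\ldots,2^m-2\}$, which is exactly the exponent range where the powers of $\alpha$ are distinct.

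For the analysis I would split on the first coordinate of the syndrome. The top row of $H$ is all-ones, so that coordinate equals the Hamming weight of the error modulo $2$. A zero error gives the all-zero syndrome, a single error gives top coordinate $1$, and a $2$-burst gives top coordinate $0$, immediately separating these three classes. For a single error at position $\bi$ the last $m$ coordinates of the syndrome are $\alpha^{k(\bi)}$, and by the bijectivity of $k$ together with the order of $\alpha$ these uniquely determine $\bi$.

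For the $2$-burst case, I would parametrize the burst by its earliest position $\bi$ and a direction index $r\in\{1,\ldots,D\}$ with the second position being $\bi'=\bi+e_r$ (both positions must lie in the array, but that only restricts $\bi$, not the argument). Then the middle $d$ coordinates of the syndrome are
\[
A\bi^T+A(\bi+e_r)^T \equiv Ae_r^T \pmod{2},
\]
which is the $r$-th column of $A$; since the columns of $A$ are distinct, this recovers $r$ unambiguously. The last $m$ coordinates equal
\[
\alpha^{k(\bi)}+\alpha^{k(\bi)+\kappa_r}
=\alpha^{k(\bi)}\bigl(1+\alpha^{\kappa_r}\bigr),
\qquad \kappa_r:=\prod_{\ell=r+1}^{D} n_\ell.
\]
If $1+\alpha^{\kappa_r}\neq 0$, dividing yields $\alpha^{k(\bi)}$ and hence $\bi$, completing the decoding.

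The only real obstacle is ruling out $1+\alpha^{\kappa_r}=0$, i.e.\ $\alpha^{\kappa_r}=1$. Since $\alpha$ has order $2^m-1$, this requires $(2^m-1)\mid\kappa_r$. But $\kappa_r \le \prod_{\ell=2}^{D} n_\ell \le (\prod_{\ell=1}^{D} n_\ell)/n_1 \le (2^m-1)/n_1$, so the only way to get $(2^m-1)\mid\kappa_r$ with $\kappa_r\ge 1$ is $n_1=1$, a degenerate case in which the first coordinate of $\bi$ is constant and can be deleted; in all nontrivial cases $\alpha^{\kappa_r}\neq 1$ and the division step is well defined. Thus every syndrome comes from a unique error pattern among the ones considered, and the code corrects every $2$-burst.
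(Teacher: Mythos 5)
Your argument is correct, but note that the paper itself gives no proof of this theorem: it is stated as a result quoted from the reference [YaEt09] (Yaakobi and Etzion), so there is no in-paper proof to compare against. Your syndrome-separation argument is the natural one and matches the evident design of Construction A: the all-ones row separates odd-weight from even-weight errors, the block $A\bi^T$ isolates the burst direction $r$ via the distinctness of the columns of $A$, and the power of $\alpha$ recovers the folded index $k(\bi)$, with the only delicate point being that $1+\alpha^{\kappa_r}\neq 0$. Two small remarks. First, your claim that the top coordinate "immediately separates the three classes" is slightly overstated, since the zero error and a genuine two-error burst both give top coordinate $0$; the separation of these two really rests on your later observation that $\alpha^{k(\bi)}\bigl(1+\alpha^{\kappa_r}\bigr)\neq 0$, so it would be cleaner to defer that claim. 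Second, the degenerate case can be dispatched more uniformly: a burst in direction $r$ exists only if $n_r\geq 2$, and then $\kappa_r=\prod_{\ell>r}n_\ell\leq\bigl(\prod_\ell n_\ell\bigr)/n_r\leq(2^m-1)/2<2^m-1$, so $\alpha^{\kappa_r}\neq 1$ without any case split on $n_1$. With those cosmetic adjustments the proof is complete.
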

\begin{theorem}
The code constructed by Construction A has redundancy which is
greater by at most one from the trivial lower bound on the
redundancy.
\end{theorem}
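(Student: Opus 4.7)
The plan is to compare the constructed redundancy $m+d+1$ directly with the sphere-packing (Hamming-type) lower bound on $r$. Any code that corrects every 2-burst in an $n_1 \times \cdots \times n_D$ array must assign distinct syndromes to the zero pattern, to each of the $N := \prod_{\ell=1}^{D} n_\ell$ single-error patterns, and to each 2-burst pattern. Working in the cyclic adjacency regime (which gives the strongest lower bound), each cell has exactly $D$ forward neighbours, yielding $DN$ distinct 2-burst patterns, so any 2-burst-correcting code on such an array satisfies
\[
2^r \;\ge\; 1 + N + DN \;=\; 1 + N(D+1),
\]
giving the trivial lower bound $r^{*} = \lceil \log_2(1+N(D+1)) \rceil$.

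Next I would exploit the definitions of $m$ and $d$ to reduce the theorem to a concrete inequality. The minimality of $m$ in the condition $2^m - 1 \ge N$ gives $N \ge 2^{m-1}$; the definition $d = \lceil \log_2 D \rceil$ gives $D = 2^d$ when $D$ is a power of two, and $2^{d-1} < D < 2^d$ otherwise. The goal is to verify
\[
1 + N(D+1) \;>\; 2^{m+d-1},
\]
which forces $r^{*} \ge m+d$ and hence $m+d+1 \le r^{*}+1$, exactly the claim of the theorem.

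To establish the inequality I would split into two cases. When $D = 2^d$, then $N(D+1) \ge 2^{m-1}(2^d+1) > 2^{m+d-1}$ and the inequality is immediate. When $2^{d-1} < D < 2^d$, I would use $D+1 \ge 2^{d-1}+2$, so that $1 + N(D+1) \ge 1 + 2^{m-1}(2^{d-1}+2) = 1 + 2^{m+d-2} + 2^{m}$, and then check (separately for $d=1$, $d=2$, and $d \ge 3$, or by a uniform estimate using $N$ closer to its upper extreme $2^m-1$) that this quantity exceeds $2^{m+d-1}$. The degenerate case $D=1$, $d=0$ recovers Abramson's one-dimensional bound and is handled by noticing that $1 + 2N > 2^{m-1} \cdot 2 = 2^m$ already.

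The main obstacle I anticipate is the tightness of the inequality when both $N$ and $D$ are at the low end of their respective ranges (i.e., $N$ near $2^{m-1}$ and $D$ only slightly above $2^{d-1}$); here the additive $+1$ from the zero-error syndrome and the correct strictness of the $\lceil \cdot \rceil$ in $r^{*}$ are both essential, and the case analysis cannot be avoided. Once this inequality is verified in all regimes, the conclusion $m+d+1 \le r^{*}+1$ is immediate, and the theorem follows.
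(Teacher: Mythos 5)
First, a structural point: the paper does not prove this theorem at all --- it is quoted verbatim from~\cite{YaEt09} (``The following two theorems were given in [YaEt09]''), so there is no in-paper argument to compare yours against, and in particular the paper never pins down what ``the trivial lower bound'' means. Your proposal interprets it as the full sphere-packing count $r^{*}=\lceil\log_2(1+N(D+1))\rceil$ and reduces everything to the inequality $1+N(D+1)>2^{m+d-1}$. That inequality is genuinely false in exactly the regime you flag as the ``main obstacle.'' Take $D$ just above a power of two and $N$ at the bottom of its range: with $D=5$ (so $d=3$) and $N=2^{m-1}$ (e.g.\ $n_1=\cdots=n_5=2$, $N=32$, $m=6$), one has $1+N(D+1)=1+3\cdot 2^{m}$ while $2^{m+d-1}=4\cdot 2^{m}$, so $r^{*}=m+2$ whereas the construction uses $m+d+1=m+4$ --- a gap of $2$, and in fact $3$ once you count adjacent pairs non-cyclically ($\lceil\log_2 113\rceil=7$ versus redundancy $10$ in the concrete example). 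The same failure occurs for every $d\ge 3$ with $D=2^{d-1}+1$ and $N$ near $2^{m-1}$: your estimate gives $N(D+1)\ge 2^{m+d-2}+2^{m}$, which exceeds $2^{m+d-1}$ only when $d<2$. So the case analysis you defer cannot be completed; the route collapses rather than merely becoming tight.

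Two further remarks. (i) Your use of the \emph{cyclic} adjacency count $DN$ is logically backwards for this purpose: the array is not cyclic, so the true number of 2-burst patterns is $\sum_{\ell}(n_\ell-1)\prod_{j\neq\ell}n_j<DN$, and since you need to show the \emph{true} trivial lower bound is at least $m+d$, inflating the pattern count only proves a statement about a quantity that is not a valid lower bound for the problem. (ii) The fact that the statement fails under your reading for small $n_\ell$ strongly suggests that the intended ``trivial lower bound'' in~\cite{YaEt09} is not the sphere-packing count but a Reiger--Abramson-type bound of the form $\lceil\log_2(\text{number of burst locations})\rceil+(b-1)$ with $b=2$; under that reading the arithmetic $DN>2^{m+d-2}$ (from $N\ge 2^{m-1}$ and $D\ge 2^{d-1}+1$) does close and yields redundancy at most one above the bound. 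If you want to salvage the proof you should first fix the definition of the bound being compared against, because the theorem is not true for the bound you chose.
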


The same construction will work if instead of a $D$-dimensional
array our codewords will have have a shape $\cS$ of size $2^m-1$,
there is a lattice tiling $\Lambda$ for $\cS$, and there is a
direction vector $\delta$ such that $(\Lambda,\cS,\delta)$ defines
a folding. The nonzero elements of GF($2^m$) will be ordered along
the folded-row of $\cS$. Since usually the number of elements in
$\cS$ is not $2^m-1$ we should find a shape $\cS'$ which contains
$\cS$ and $| \cS' |=2^m-1$. We design a code with the shape of
$\cS'$ and since $\cS \subset \cS'$ the code will be able to
correct the same type of errors in $\cS$.

Finally, the construction can be generalized in a way that the
multidimensional code will be able to correct other types of two
errors in a multidimensional array~\cite{YaEt09}.

\section{Application for Pseudo-Random Arrays}
\label{sec:pseudo-random}

MacWilliams and Sloane~\cite{McSl76} gave the name {\it
pseudo-random sequence} to a maximal length sequence obtained from
a linear feedback shift register. These sequences called also PN
{Pseudo Noise} sequences or M-sequences have many desired
properties as described in~\cite{Golomb,McSl76}. The term
pseudo-random array was given by MacWilliams and
Sloane~\cite{McSl76} to a rectangular array obtained by folding a
pseudo-random sequence $S$ into its entries. The constructed
arrays can be obtained also as what is called maximum-area
matrices~\cite{NMIF}. In~\cite{McSl76} it was proved that if a
pseudo-random sequence of length $n=2^{k_1 k_2}-1$ is folded into
an $n_1 \times n_2$ array such that $n_1 = 2^{k_1}-1
>1$, $n_2 = \frac{n}{n_1} > 1$, and g.c.d$(n_1 , n_2) >1$ then the
constructed array has many desired properties and hence they
called this array $\cA$ a {\it pseudo-random array}. Some of the
properties they mentioned are as follows:
\begin{enumerate}
\item {\it Recurrences} - the entries satisfy a recurrence
relation along the folding.

\item {\it Balanced} - $2^{k_1 k_2 -1}$ entries in the array are
{\it ones} and $2^{k_1 k_2 -1}-1$ entries in the array are {\it
zeroes}.

\item {Shift-and-Add} - the sum of $\cA$ with any of its cyclic
shifts is another cyclic shift of $\cA$.

\item {\it Autocorrelation Function} - has two values: $n$
in-phase and -1 out-of-phase.

\item {\it Window property} - each of the $2^{k_1 k_2}-1$ nonzero
matrices of size $k_1 \times k_2$ is seen exactly once as a window
in the array.
\end{enumerate}

All these properties except for the window property are a
consequence of the fact that the elements in the folded-row are
consecutive elements of an M-sequence $S$. Before we examine
whether an array of any shape, obtained by folding $S$ into it,
has these properties we have to define what is a cyclic shift of
any given shape $\cS$ (even so we used the term without definition
before). Our definition will assume again that there exists a
lattice tiling $\Lambda$ for $\cS$ and a direction $\delta$ such
that $( \Lambda , \cS , \delta )$ defines a folding. A {\it cyclic
shift} of the shape $\cS$ (placed on the grid) is obtained by
taking the set of elements $\{ x + \delta ~:~ x \in \cS \}$.

\begin{lemma}
The shape of a cyclic shift of $\cS$ is $\cS$.
\end{lemma}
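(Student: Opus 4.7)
The definition of cyclic shift is only sensible once we read the set $\{x+\delta : x \in \cS\}$ modulo the lattice $\Lambda$: a point $x+\delta$ that falls outside $\cS$ is to be identified with its unique representative in the copy of $\cS$ centered at the origin, namely $(x+\delta)-c(x+\delta)$, which indeed lies in $\cS$ by Lemma~\ref{lem:center}. My plan is to show that, under this identification, the resulting set is exactly $\cS$.

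First, I would define the map $\phi: \cS \to \Z^D$ by $\phi(x) = (x+\delta)-c(x+\delta)$; note that when $x+\delta \in \cS$ we have $c(x+\delta)=(0,\dots,0)$, so $\phi(x)=x+\delta$, while otherwise $\phi(x)$ is the wrapped-around point in $\cS$. Comparing with the recursive definition of the folded-row in Section~\ref{sec:folding}, one sees that $\phi$ is precisely the ``successor'' operation along the folded-row: if $x$ is the $i$-th entry of the folded-row starting at the origin, then $\phi(x)$ is the $(i+1)$-th entry (indices mod $|\cS|$). In particular $\phi$ maps $\cS$ into $\cS$.

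Second, since $(\Lambda,\cS,\delta)$ defines a folding, the folded-row visits each element of $\cS$ exactly once and, by Lemma~\ref{lem:tile_fold2}, closes up after $|\cS|$ steps (because $|\cS|\cdot \delta$ reduces to the origin while no smaller multiple does). Thus $\phi$ is a cyclic permutation of $\cS$ of order $|\cS|$; in particular it is a bijection of $\cS$ onto itself. It follows that the cyclic shift $\{x+\delta : x \in \cS\}$, interpreted modulo $\Lambda$, equals $\phi(\cS)=\cS$, which is the statement of the lemma.

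The only real subtlety here is interpretive rather than technical: one must recognise that the set $\{x+\delta : x \in \cS\}$ is to be taken cyclically with respect to the tiling, as opposed to simply being the translate $\cS+\delta$ in $\Z^D$. Once this is fixed, the lemma reduces to the observation that the folded-row is a single cycle through $\cS$, and no further machinery beyond Lemma~\ref{lem:center} and Lemma~\ref{lem:tile_fold2} is required.
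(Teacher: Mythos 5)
Your argument is correct, but it is not the paper's argument, and the difference stems from how you read the definition. The paper takes the definition of a cyclic shift literally: it is the translate $\{x+\delta : x\in\cS\}$ of $\cS$ on the grid, with no reduction modulo $\Lambda$, and its proof is the one-line observation that a translate of a shape is the same shape (shapes being understood up to position). You instead insist on reading the set modulo the lattice, replacing each $x+\delta$ by $(x+\delta)-c(x+\delta)$, and then prove the stronger statement that the resulting point set is \emph{identically} $\cS$, by recognising the map $\phi(x)=(x+\delta)-c(x+\delta)$ as the successor operation of the folded-row and hence a cyclic permutation of $\cS$. Both statements are true and your reasoning is sound (indeed $\phi$ is a bijection of $\cS$ onto itself even without the folding hypothesis, since $\phi(x)=\phi(y)$ forces $x-y\in\Lambda$ and hence $x=y$ for $x,y$ in a single tile). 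What your version buys is that it makes explicit the wrap-around structure that the subsequent theorem actually uses when it speaks of the two cyclic shifts having the same folded-row up to a cyclic shift; what it costs is that it proves more than the lemma as stated asks for, and replaces a trivial observation with machinery (Lemma~\ref{lem:center}, Lemma~\ref{lem:tile_fold2}) that the paper reserves for later. Neither reading invalidates the other, but you should be aware that the paper's own proof is the trivial translation argument.
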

\begin{proof}
The cyclic shift is just a shift by $\delta$ of $\cS$ on the grid.
Therefore, the shape obtained is also $\cS$.
\end{proof}

\begin{theorem}
Let $\Lambda$ be a lattice tiling for a shape $\cS$ and let
$\delta$ be a direction such that $( \Lambda , \cS , \delta )$
defines a folding. If an M-sequence $S$ is folded into $\cS$ in
the direction $\delta$ then the Recurrences, Balanced,
Shift-and-Add, and the Autocorrelation Function properties hold
for the constructed array.
\end{theorem}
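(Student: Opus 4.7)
The plan is to push all four properties back to the corresponding one-dimensional properties of $S = s_0 s_1 \cdots s_{n-1}$, where $n = |\cS| = 2^k-1$, through the lattice coloring $\cC$ of Section~\ref{sec:bounds}. Extend $\cA$ to all of $\Z^D$ by setting $\cA(x) = s_{\cC(x)}$ (indices read cyclically); the extension is $\Lambda$-periodic by construction. The single identity that drives everything is
$$
\cC(x + \delta) \equiv \cC(x) + 1 \pmod{n} \qquad \text{for every } x \in \Z^D,
$$
which follows from Corollary~\ref{cor:eq_diff_col} applied with displacement $e = \delta$ (so the increment is constant over $\Z^D$), together with the fact that inside the folded-row of the central copy of $\cS$ that constant equals $+1$ by the very definition of the coloring.

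With this identity in hand, \emph{Recurrences} is immediate: for any base point $x$, the trace $\cA(x),\cA(x+\delta),\cA(x+2\delta),\ldots$ is a cyclic shift of $S$ and therefore satisfies the linear recurrence that generates $S$; transported to $\Z^D$ this is the recurrence along direction $\delta$. \emph{Balanced} follows from Lemma~\ref{lem:disjoint_colors}: any placed copy of $\cS$ contains each of the $n$ colors exactly once, so the multiset of $\cA$-values on $\cS$ is exactly $\{s_0,\dots,s_{n-1}\}$ and inherits the split into $2^{k-1}$ ones and $2^{k-1}-1$ zeroes.

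For \emph{Shift-and-Add}, let $T$ denote cyclic shift by $\delta$, so that $T^j\cA(x) = \cA(x - j\delta) = s_{\cC(x)-j}$. Then $(\cA + T^j\cA)(x) = s_{\cC(x)} + s_{\cC(x)-j}$, and the shift-and-add property of the M-sequence supplies an $m = m(j)$ with $s_i + s_{i-j} = s_{i-m}$ for every $i$; hence $\cA + T^j\cA = T^m\cA$, another cyclic shift of $\cA$. For the \emph{Autocorrelation} at displacement $j\delta$ with $1 \le j < n$, the sum array $\cA + T^j\cA$ is $T^m\cA$, which by the already-established Balanced property contains $2^{k-1}$ ones and $2^{k-1}-1$ zeroes; converting to the $\pm 1$ representation yields the out-of-phase value $(2^{k-1}-1) - 2^{k-1} = -1$, while the in-phase value is trivially $n$.

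The only real technical obstacle is the first step, namely verifying that $\cC$ extends consistently to $\Z^D$ and transforms covariantly under a step in direction $\delta$; after that, each of the four properties is a one-line transfer of the corresponding M-sequence property, reusing Corollary~\ref{cor:eq_diff_col} and Lemma~\ref{lem:disjoint_colors} as needed. The Window property is deliberately not claimed here and would require additional hypotheses on $\Lambda$ (for instance, compatibility of the window shape with the lattice), so I would not attempt to include it within this proof.
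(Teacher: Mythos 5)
Your proof is correct and follows essentially the same route as the paper's (much terser) argument: both reduce each property to the corresponding one-dimensional property of the M-sequence via the fact that the folded-row reads off consecutive elements of $S$ and that a cyclic shift of the shape corresponds to a cyclic shift of the folded-row. Your version merely makes explicit the key identity $\cC(x+\delta)\equiv\cC(x)+1\pmod{n}$ and the use of Lemma~\ref{lem:disjoint_colors}, which the paper leaves implicit.
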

\begin{proof}
These properties follows immediately from the fact that the
entries of $\cS$ by the order of the folded-row are consecutive
elements of the M-sequence $S$. The two cyclic shifts of $\cS$
have the same folded-row up to a cyclic shift. Therefore, these
four properties are a direct consequence from the related
properties of the M-sequence.
\end{proof}

\begin{lemma}
\label{lem:related_window} Let $\Lambda$ be a lattice tiling for
the shape $\cS$ and $\delta$ be a direction for which the triple
$( \Lambda , \cS , \delta )$ defines a folding. Let $\cB$ be a
binary sequence of length $| \cS |$. Let $P_1$ and $P_2$ be two
points for which $P_1 - c(P_1) = P_2 - c(P_2)$. Then, for any two
positive integers $k_1$ and $k_2$ the two $k_1 \times k_2$ windows
of $( \Lambda , \cS , \delta, \cB )$ whose leftmost bottom points
are $P_1$ and $P_2$ are equal.
\end{lemma}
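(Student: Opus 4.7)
The plan is to reduce the equality of windows to an equality of colorings, and then to exploit the translation-invariance of the coloring already established in Corollary~\ref{cor:eq_diff_col}.

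First, I would recall that by the construction of the lattice coloring $\cC(\Lambda,\delta)$ in Section~\ref{sec:bounds}, the color of an arbitrary point $P \in \Z^D$ is defined as the color of $P - c(P)$ in the copy of $\cS$ centered at the origin. Consequently, the hypothesis $P_1 - c(P_1) = P_2 - c(P_2)$ immediately yields $\cC(P_1) = \cC(P_2)$. Since the array $(\Lambda,\cS,\delta,\cB)$ is defined by placing $b_j$ at every point whose color is $j$, the windows are determined entry-by-entry by the colors, so it suffices to prove that the colors of the $k_1 \times k_2$ windows with leftmost bottom points $P_1$ and $P_2$ agree position-by-position.

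Next, for any displacement $(e_1,e_2)$ with $0 \le e_1 < k_1$ and $0 \le e_2 < k_2$, I would apply Corollary~\ref{cor:eq_diff_col} to the four points $P_1$, $P_1 + (e_1,e_2)$, $P_2$, $P_2 + (e_1,e_2)$. That corollary gives
$$\cC(P_1+(e_1,e_2)) - \cC(P_1) \equiv \cC(P_2+(e_1,e_2)) - \cC(P_2) \pmod{|\cS|}.$$
Combined with $\cC(P_1) = \cC(P_2)$ from the previous step, this forces $\cC(P_1+(e_1,e_2)) = \cC(P_2+(e_1,e_2))$, and hence the entries of the two windows at corresponding positions $(e_1,e_2)$ are both equal to $b_{\cC(P_1+(e_1,e_2))}$.

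There is no real obstacle here; the statement is essentially bookkeeping on the coloring, and the two ingredients needed (the definition of $\cC$ at a general point via $P - c(P)$ and the shift-equivariance given by Corollary~\ref{cor:eq_diff_col}) are already in place. The only point that deserves care is noting that the argument is oblivious to whether $P_1+(e_1,e_2)$ and $P_2+(e_1,e_2)$ lie in the same copy of $\cS$ as $P_1$ and $P_2$ respectively — Corollary~\ref{cor:eq_diff_col} applies uniformly to all four points in $\Z^D$, so no case analysis on the location of the window relative to the tiling is required.
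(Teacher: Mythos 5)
Your proof is correct and follows essentially the same route the paper intends: the paper's own proof simply declares the lemma an immediate consequence of the definitions of the lattice coloring and of $(\Lambda,\cS,\delta,\cB)$, and your write-up supplies exactly the missing bookkeeping (the hypothesis $P_1-c(P_1)=P_2-c(P_2)$ gives $\cC(P_1)=\cC(P_2)$, and Corollary~\ref{cor:eq_diff_col} propagates this equality of colors to every offset in the window, hence to the entries $b_{\cC(\cdot)}$). No gap; your closing remark that no case analysis on copies of $\cS$ is needed is also accurate, since the underlying fact is just invariance of the coloring under the lattice translation $P_2-P_1=c(P_2)-c(P_1)$.
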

\begin{proof}
The lemma is an immediate consequence from the definition of the
lattice coloring induced by $( \Lambda , \cS , \delta )$ and the
definition of $( \Lambda , \cS , \delta, \cB )$.
\end{proof}

\begin{theorem}
\label{thm:window} Assume $\Lambda$ define a lattice tiling for an
$n_1 \times n_2$ array $\cA$, such that $n_1 n_2 = 2^{k_1k_2}-1$.
Assume further that $\Lambda$ defines a lattice tiling for the
shape $\cS$ and $(\Lambda,\cS,\delta)$ defines a folding for the
direction $\delta$. Then, if we fold an M-sequence $S$ into $\cS$
in the direction $\delta$, the resulting shape $\cS$ has the $k_1
\times k_2$ window property if and only if the $n_1 \times n_2$
array $\cA$ has the $k_1 \times k_2$ window property by folding
$S$ into $\cA$ in the direction $\delta$.
\end{theorem}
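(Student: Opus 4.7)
The plan is to exploit the fact that the folding into $\cS$ and the folding into $\cA$ share exactly the same underlying lattice coloring $\cC(\Lambda,\delta)$ of $\Z^2$; the shapes $\cS$ and $\cA$ differ only in which transversal of color classes we designate as the ``array.'' Write $n=n_1 n_2=|\cS|=2^{k_1k_2}-1$, and write $W=\{0,\ldots,k_1-1\}\times\{0,\ldots,k_2-1\}$ for the set of offsets of a $k_1\times k_2$ window.

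First, I would invoke Corollary~\ref{cor:eq_diff_col} (together with the immediate corollary that follows it) to obtain, for every offset $e\in W$, a constant $\tau(e)\in\Z_n$ such that
$$
\cC(P+e)\equiv \cC(P)+\tau(e)\pmod n \quad\text{for every }P\in\Z^2.
$$
Since the sequence $\cB=b_0b_1\cdots b_{n-1}$ is the same M-sequence $S$ in both foldings, the bit at position $Q$ is $b_{\cC(Q)}$ in either array. Hence the $k_1\times k_2$ window with lower-left corner at $P$ has entries $\{b_{\cC(P)+\tau(e)}\}_{e\in W}$, a pattern that depends only on the single value $\cC(P)\in\Z_n$ and not on which of the two foldings we are considering.

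Second, I would apply Lemma~\ref{lem:disjoint_colors}: any copy of a shape that is tiled by $\Lambda$ contains each of the $n$ colors $0,1,\ldots,n-1$ exactly once. Applied to $\cS$ and to $\cA$, this says that as $P$ ranges over $\cS$, the value $\cC(P)$ takes each residue in $\Z_n$ exactly once, and the same is true as $P$ ranges over $\cA$. Combining the two observations, the multiset of $k_1\times k_2$ window patterns appearing in the folding into $\cS$ equals the multiset $\{W(i):i\in\Z_n\}$, where $W(i)$ is the unique window pattern associated with color $i$; and the multiset for the folding into $\cA$ is the identical set. The $k_1\times k_2$ window property says precisely that this multiset consists of each of the $2^{k_1k_2}-1=n$ nonzero binary $k_1\times k_2$ patterns exactly once, so the property holds for one folding if and only if it holds for the other.

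The main obstacle is not computational but definitional: one must pin down what a ``window'' in a non-rectangular shape $\cS$ should mean. The natural and consistent reading, which the argument above uses, is to extend the folded array periodically to all of $\Z^2$ via the tiling $(\cT,\cS)$ induced by $\Lambda$, and to let $P$ range over any transversal of color classes (for example over $\cS$ itself, or equivalently over $\cA$). Under this reading the proof is short because everything reduces to the translation-invariance of color differences established in Corollary~\ref{cor:eq_diff_col}; if one preferred a non-cyclic definition of windows for $\cA$ or $\cS$, some additional bookkeeping at the boundary would be needed, but the window-property equivalence would still follow from the same color-coincidence argument.
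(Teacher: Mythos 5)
Your proof is correct, but it takes a noticeably different route from the paper's. The paper proves the equivalence by constructing a chain of intermediate shapes $\cA_0=\cA,\cA_1,\ldots,\cA_r=\cS$, each tiled by $\Lambda$ and each obtained from its predecessor by relocating a single cell to the lattice-equivalent position (in the spirit of Theorem~\ref{thm:trans_shape}), and then applying Lemma~\ref{lem:related_window} at every step to conclude by induction that the multiset of $k_1\times k_2$ windows never changes. You instead observe that the lattice coloring $\cC(\Lambda,\delta)$ is one and the same function on $\Z^2$ for both foldings (a point $P$ gets color $i$ exactly when $P-i\delta\in\Lambda$, independently of the shape), that by Corollary~\ref{cor:eq_diff_col} the window at $P$ is a function of $\cC(P)$ alone, and that by Lemma~\ref{lem:disjoint_colors} both $\cA$ and $\cS$ are transversals of the $n$ color classes, so the two multisets of windows coincide outright. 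Both arguments hinge on the same key fact --- windows depend only on color --- but yours applies it globally in one step, which buys you a cleaner proof that sidesteps the existence of the interpolating chain of shapes (a step the paper asserts as ``easy to verify'' without detail); what the paper's version buys in exchange is the slightly stronger, reusable statement that the window multiset is invariant under each elementary one-cell deformation. The only item you should make explicit is the shape-independence of the coloring itself, since the paper defines $\cC(\Lambda,\delta)$ via the folded-row of a particular shape; the one-line justification is that $\cC(P)=i$ iff $P-i\delta$ is a lattice point, which involves only $\Lambda$ and $\delta$.
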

\begin{proof}
Since $\Lambda$ is a lattice tiling for both $\cA$ and $\cS$ there
is a sequence of arrays $\cA_0 = \cA$, $\cA_1$,...,$\cA_r = \cS$,
such that $| \cA_{i+1} \setminus \cA_i | = | \cA_i \setminus
\cA_{i+1} | =1$, $0 \leq i \leq r-1$, $\Lambda$ is a lattice
tiling for $\cA_i$, $0 \leq i \leq r$, and the origin is contained
in $\cA_i$, $0 \leq i \leq r$. Moreover, it is easy to verify that
given the shape $\cA_i$, $P_1 = \cA_{i+1} \setminus \cA_i$, $P_2 =
\cA_i \setminus \cA_{i+1}$, we have that $P_2 = P_1 -c(P_1)$ with
respect to $\cA_i$. The theorem follows now by induction and using
Lemma~\ref{lem:related_window}.
\end{proof}

Theorem~\ref{thm:window} does not give any new information about
window sizes which are not covered in~\cite{McSl76,NMIF}. The
following lemma provides such information. We say that a shape
$\cS$ of size $2^n-1$ has the {$\cQ$ window property} if $|\cQ|=n$
and each nonzero value for $\cQ$ appears exactly once in a copy of
$\cS$, where $\cS$ is considered to be a cyclic shape.

\begin{lemma}
Let $\Lambda$ be a lattice tiling for a shape $\cS$,
$|\cS|=2^n-1$, $\delta$ be a direction vector, and $S$ be an
M-sequence of length $2^n-1$. Let $\cQ$ be a shape with volume
$n$. If in the array $\cS'$ defined by $(\Lambda , \cS , \delta ,
S)$ there is no copy of $\cQ$ which contains only {\it zeroes}
then $\cS$ has the $\cQ$ window property.
\end{lemma}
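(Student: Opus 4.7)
The plan is a counting argument combined with an injectivity argument that exploits the shift-and-add structure inherited from the M-sequence. Since $|\cS|=2^n-1$ equals the number of nonzero binary $n$-tuples indexed by $\cQ$, while the periodic array admits exactly $|\cS|$ distinct cyclic translates of $\cQ$ within $\cS$, it suffices to show that these translates carry pairwise distinct binary patterns. Combined with the hypothesis that none of them is the all-zero pattern, the pigeonhole principle then forces each nonzero pattern to appear exactly once, which is precisely the $\cQ$ window property.

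First I would extend $\cS'$ by the lattice $\Lambda$ to an infinite periodic array $\cA$ on $\Z^D$, with $\cA(P)=S_{\cC(P)}$, where $\cC$ is the lattice coloring $\cC(\Lambda,\delta)$ introduced in Section~\ref{sec:bounds}. By the corollary immediately following Corollary~\ref{cor:eq_diff_col}, for every $v\in\Z^D$ there is a constant $e(v)\in\Z_{|\cS|}$ with $\cC(P+v)=\cC(P)+e(v)\pmod{|\cS|}$ for all $P$; moreover $e(v)=0$ exactly when $v$ is a lattice vector, since $e(v)=0$ forces $\cC(v)=\cC(0)=0$. Writing $\cA^{(v)}(P)=\cA(P-v)$, this gives $\cA^{(v)}(P)=S_{\cC(P)-e(v)}$, so every spatial shift of $\cA$ is realized by a cyclic shift of the underlying M-sequence by $-e(v)$.

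Next I would invoke the shift-and-add property of the M-sequence $S$: for any $k\not\equiv 0\pmod{2^n-1}$ there exists $\ell(k)$ such that $S_i+S_{i-k}=S_{i+\ell(k)}$ holds identically in $i$. Applied pointwise to $\cA$ and $\cA^{(v)}$ for any non-lattice $v$, this lifts to the array-level identity $\cA+\cA^{(v)}=\cA^{(w)}$ for some $w$ depending on $v$.

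Finally, suppose two distinct cyclic translates $\cQ+u_1$ and $\cQ+u_2$ in $\cS'$ carry the same binary pattern; up to the periodicity of $\cA$ I may assume $u_1-u_2$ is not a lattice vector. Evaluating at $P=q+u_1$ for $q\in\cQ$ gives $\cA^{(u_1-u_2)}(q+u_1)=\cA(q+u_2)=\cA(q+u_1)$, so $\cA+\cA^{(u_1-u_2)}$ vanishes on $\cQ+u_1$; hence $\cA^{(w)}$ vanishes there, which is equivalent to $\cA$ vanishing on $\cQ+(u_1-w)$. Reducing this translate modulo $\Lambda$ produces a copy of $\cQ$ in $\cS'$ that is entirely zero, contradicting the hypothesis. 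The main delicacies will be verifying the equivalence $e(v)=0\Leftrightarrow v\in\Lambda$ and carefully tracking how the shift-and-add constant from $S$ propagates through the coloring $\cC$ to produce a genuine spatial shift of $\cA$; once those are in place, the injectivity step and the pigeonhole conclusion are immediate.
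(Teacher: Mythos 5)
Your proposal is correct and follows essentially the same route as the paper: the paper's proof is a one-line appeal to the Shift-and-Add property (two identical copies of $\cQ$ exist if and only if some copy of $\cQ$ is all zeroes), followed by the implicit pigeonhole count of $2^n-1$ translates against $2^n-1$ nonzero patterns. Your write-up simply fills in the details the paper leaves implicit — the lattice coloring, the fact that spatial shifts act as cyclic shifts of the M-sequence via $e(v)$, and the lift of Shift-and-Add to the array level — all of which check out.
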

\begin{proof}
By the Shift-and-Add property, $\cS'$ has two identical copies of
$\cQ$ if and only if $\cS'$ has a copy of $\cQ$ which contains
only {\it zeroes}. Thus, $\cS'$ has the $\cQ$ window property if
and only if there is no copy of $\cQ$ in $\cS'$ which contains
only {\it zeroes}.
\end{proof}
We can use now the properties we have found for the generalized
folding to obtain various results. An example is given in the
following corollary.
\begin{cor}
Let $\Lambda$ be a lattice tiling for a shape $\cS$,
$|\cS|=2^n-1$, and $S$ be an M-sequence of length $2^n-1$. If
$2^n-1$ is a Mersenne prime then $(\Lambda,\cS,\delta,S)$ has the
$1 \times n$ and the $n \times 1$ window property for any given
direction vector $\delta$.
\end{cor}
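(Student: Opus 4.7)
The plan is to invoke the lemma immediately preceding the corollary: it suffices to verify that the array $(\Lambda,\cS,\delta,S)$ contains no all-zero $1\times n$ window and no all-zero $n\times 1$ window, whence the two window properties follow. First, since $|\cS|=2^n-1$ is prime, Corollary~\ref{cor:prime_fold} shows that $(\Lambda,\cS,\delta)$ defines a folding for every $\delta$ that is not a lattice point, so the coloring $\cC$ and the array $(\Lambda,\cS,\delta,S)$ are well defined.

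Next I would fix the horizontal shift $(1,0)$ and invoke Corollary~\ref{cor:eq_diff_col}: the color increment $e=\cC(x+1,y)-\cC(x,y)\pmod{2^n-1}$ is independent of the base point and depends only on $\Lambda$ and $\delta$. Hence a horizontal $1\times n$ window starting at a cell of color $c$ carries the colors $c,\ c+e,\ c+2e,\ \dots,\ c+(n-1)e\pmod{2^n-1}$, so its values are $S_c,S_{c+e},\dots,S_{c+(n-1)e}$. Exactly the same argument applies vertically with a shift $e'$ coming from the displacement $(0,1)$.

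The key step is the decimation argument. Represent $S$ in trace form $S_k=\mathrm{Tr}(\beta\alpha^{k})$ with $\alpha$ a primitive element of $\mathrm{GF}(2^n)$. Because $2^n-1$ is a Mersenne prime, any $e\not\equiv 0\pmod{2^n-1}$ satisfies $\text{g.c.d.}(e,2^n-1)=1$, so $\alpha^{e}$ is again a primitive element of $\mathrm{GF}(2^n)$. Then the decimated sequence $T_k=S_{c+ke}=\mathrm{Tr}\bigl(\beta\alpha^{c}(\alpha^{e})^{k}\bigr)$ is itself an M-sequence of period $2^n-1$, so by its $n$-window property the tuple $(T_0,\dots,T_{n-1})$ is nonzero. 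This is exactly the statement that the horizontal window is not all-zero; the vertical case is identical with $e'$ in place of $e$.

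The only obstacle is the possibility that $e=0$ or $e'=0$. If $e=0$ then moving by $(1,0)$ never changes the color, which forces $(1,0)$ to be a lattice point; similarly $e'=0$ forces $(0,1)\in\Lambda$. Both cannot happen simultaneously, because then $\Lambda\supseteq\Z^{2}$ would give $V(\Lambda)=1$, contradicting $V(\Lambda)=|\cS|=2^n-1\geq 3$. Thus at least one of $e,e'$ is invertible modulo $2^n-1$ and the corresponding direction enjoys the window property; in the generic (and intended) situation in which $\cS$ is genuinely two-dimensional, that is, neither $(1,0)$ nor $(0,1)$ lies in $\Lambda$, both shifts are nonzero and the decimation argument delivers both the $1\times n$ and the $n\times 1$ window properties at once.
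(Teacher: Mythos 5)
Your proposal is correct, and it takes the route the paper intends: the corollary is stated there without proof as a direct application of the preceding lemma (no all-zero copy of $\cQ$ implies the $\cQ$ window property), and you supply exactly the missing details --- primality of $|\cS|$ guarantees the folding exists for every non-lattice-point $\delta$, the colors along a row of $\Z^2$ in direction $(1,0)$ or $(0,1)$ form an arithmetic progression with constant increment by Corollary~\ref{cor:eq_diff_col}, and decimation of an M-sequence by a residue that is invertible modulo the Mersenne prime $2^n-1$ is again an M-sequence, whose $n$-windows are never all-zero. Your closing caveat is well taken: if $\cS$ degenerates to a single row or column, one of $(1,0),(0,1)$ can be a lattice point, the corresponding increment vanishes, and that orientation of the window property genuinely fails; this is a defect of the corollary's unqualified statement rather than of your argument, which handles every genuinely two-dimensional $\cS$.
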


\begin{example}
Consider the following M-sequence
$S=0000100101100111110001101110101$ of length 31. Let $\Lambda$ be
a lattice tiling for a corner CR(5,7;1,4) with the generator
matrix
$$
G_2=\left[\begin{array}{cc}
3 & 4 \\
10 & 3
\end{array}\right]~.
$$
By folding of $S$ in the direction $(+1,0)$ we obtain the
following pseudo-random array

\vspace{0.8cm}

\setlength{\unitlength}{.75mm}
\begin{picture}(40,25)(10,-5)
\linethickness{.5 pt}

\put(30,0){\framebox(35,15){}} \put(30,5){\framebox(35,5){}}
\put(30,10){\framebox(35,10){}}

\put(50,0){\framebox(15,20){}} \put(55,0){\framebox(5,20){}}
\put(30,0){\framebox(15,25){}} \put(35,0){\framebox(5,25){}}

\put(30,0){\makebox(5,5){0}} \put(35,0){\makebox(5,5){0}}
\put(40,0){\makebox(5,5){0}} \put(45,0){\makebox(5,5){0}}
\put(50,0){\makebox(5,5){1}} \put(55,0){\makebox(5,5){0}}
\put(60,0){\makebox(5,5){0}}

\put(30,5){\makebox(5,5){1}} \put(35,5){\makebox(5,5){0}}
\put(40,5){\makebox(5,5){1}} \put(45,5){\makebox(5,5){1}}
\put(50,5){\makebox(5,5){0}} \put(55,5){\makebox(5,5){0}}
\put(60,5){\makebox(5,5){1}}

\put(30,10){\makebox(5,5){1}} \put(35,10){\makebox(5,5){1}}
\put(40,10){\makebox(5,5){1}} \put(45,10){\makebox(5,5){1}}
\put(50,10){\makebox(5,5){0}} \put(55,10){\makebox(5,5){0}}
\put(60,10){\makebox(5,5){0}}

\put(30,15){\makebox(5,5){1}} \put(35,15){\makebox(5,5){1}}
\put(40,15){\makebox(5,5){0}} \put(45,15){\makebox(5,5){1}}
\put(50,15){\makebox(5,5){1}} \put(55,15){\makebox(5,5){1}}
\put(60,15){\makebox(5,5){0}}

\put(30,20){\makebox(5,5){1}} \put(35,20){\makebox(5,5){0}}
\put(40,20){\makebox(5,5){1}}

\end{picture}

This array has the $5 \times 1$ and $1 \times 5$ window
properties. Out of the 19 shapes of size 5 with exactly two rows
it does not have the window property only for the following three
shapes:

\vspace{-0.2cm}

\setlength{\unitlength}{.75mm}
\begin{picture}(40,25)(10,-5)
\linethickness{.5 pt}

\put(20,4){\framebox(16,4){}} \put(24,4){\framebox(4,4){}}
\put(32,0){\framebox(4,8){}}

\put(50,0){\framebox(12,4){}} \put(54,0){\framebox(4,8){}}
\put(62,0){\framebox(4,4){}}

\put(80,0){\framebox(8,4){}} \put(80,4){\framebox(12,4){}}
\put(84,0){\framebox(4,8){}}

\end{picture}

The pseudo-random array obtained by folding $S$ by the direction
$(0,+1)$ is

\vspace{0.8cm}

\setlength{\unitlength}{.75mm}
\begin{picture}(40,25)(10,-5)
\linethickness{.5 pt}

\put(30,0){\framebox(35,15){}} \put(30,5){\framebox(35,5){}}
\put(30,10){\framebox(35,10){}}

\put(50,0){\framebox(15,20){}} \put(55,0){\framebox(5,20){}}
\put(30,0){\framebox(15,25){}} \put(35,0){\framebox(5,25){}}

\put(30,0){\makebox(5,5){0}} \put(35,0){\makebox(5,5){1}}
\put(40,0){\makebox(5,5){0}} \put(45,0){\makebox(5,5){0}}
\put(50,0){\makebox(5,5){0}} \put(55,0){\makebox(5,5){1}}
\put(60,0){\makebox(5,5){0}}

\put(30,5){\makebox(5,5){0}} \put(35,5){\makebox(5,5){1}}
\put(40,5){\makebox(5,5){0}} \put(45,5){\makebox(5,5){1}}
\put(50,5){\makebox(5,5){0}} \put(55,5){\makebox(5,5){1}}
\put(60,5){\makebox(5,5){1}}

\put(30,10){\makebox(5,5){0}} \put(35,10){\makebox(5,5){0}}
\put(40,10){\makebox(5,5){0}} \put(45,10){\makebox(5,5){0}}
\put(50,10){\makebox(5,5){1}} \put(55,10){\makebox(5,5){1}}
\put(60,10){\makebox(5,5){1}}

\put(30,15){\makebox(5,5){0}} \put(35,15){\makebox(5,5){0}}
\put(40,15){\makebox(5,5){1}} \put(45,15){\makebox(5,5){1}}
\put(50,15){\makebox(5,5){0}} \put(55,15){\makebox(5,5){1}}
\put(60,15){\makebox(5,5){1}}

\put(30,20){\makebox(5,5){1}} \put(35,20){\makebox(5,5){1}}
\put(40,20){\makebox(5,5){1}}

\end{picture}

It has the $5 \times 1$ and $1 \times 5$ window properties. But,
out of the 19 shapes of size 5 with exactly two rows it does not
have the window property for eight shapes.

Both pseudo-random arrays have a window property for the star
shape given by

\vspace{-0.4cm}

\setlength{\unitlength}{.75mm}
\begin{picture}(40,25)(10,-5)
\linethickness{.5 pt}

\put(55,0){\framebox(12,4){}} \put(59,-4){\framebox(4,12){}} ~.

\end{picture}

\end{example}

\section{Conclusion and Open Problems}
\label{sec:conclude}

The well-known definition of folding was generalized. The
generalization and its applications led to several new results
summarized as follows:

\begin{enumerate}
\item The generalization is based on a lattice tiling for a shape
$\cS$ and a direction $\delta$. The number of possible
nonequivalent directions is $\frac{\mu ( | \cS |)}{2}$. Necessary
and sufficient conditions that a direction defines a folding are
derived.

\item Folding a $B_2$-sequence into a shape $\cS$ result in a
distinct difference configuration with the shape $\cS$.

\item Lower bounds on the number of dots in a distinct difference
configuration with shape of regular polygon, circle, and other
interesting geometrical shapes are derived.

\item Low redundancy multidimensional codes for correcting a burst
of length two are obtained.

\item New pseudo-random arrays with window and correlation
properties are derived. These arrays differ from known arrays
either in their shape or the shape of their window property.
\end{enumerate}

The discussion on these results leads to many new interesting open
problems. We conclude with a list of six open problems related to
our discussion.

\begin{enumerate}
\item We have discussed several applications for the folding
operation in general and for the new generalization of folding in
particular. We believe that there are more interesting
applications for this operation and we would like to see them
explored.

\item The construction for DDCs whose shape is a quasi-perfect
hexagon works for infinite number of parameters. But, the set of
parameters is very sparse. Its density depends on the number of
primes obtained by Dirichlet's Theorem. This immediately implies
the same for the parameters of DDCs whose shape is a regular
polygon. We would like to see a construction of such DDCs with a
dense set of parameters.

\item What is the lower bound on the number of dots in a DDC whose
shape is a circle with radius $R$? We conjecture that the lower
bound is $\sqrt{\pi}R +o(R)$.

\item We would like to see an asymptotic improvement on the lower
bounds on the number of dots in a DDC whose shape is a regular
$n$-gon with radius $R$.

\item Are there cases where we can improve the upper bound on the
number of dots in these DDCs asymptotically?

\item We would like to see a more general theorem which connects
folding of M-sequences and general window property.
\end{enumerate}

\section*{Appendix A}

In this Appendix we prove the necessary and sufficient condition
for a triple $(\Lambda , \cS , \delta )$ to define a folding. For
the proof of the theorem we use the well known Cramer's
rule~\cite{MacBi88} which is given first.

\begin{theorem}
\label{thm:cramer} Given the  following system with the $n$ linear
equations and the variables $x_1 , x_2 , \ldots , x_n$
$$
\left[\begin{array}{cccc}
a_{11} & a_{12} & \ldots & a_{1n} \\
a_{21} & a_{22} & \ldots & a_{2n} \\
\vdots & \vdots & \ddots & \vdots\\
a_{n1} & a_{n2} & \ldots & a_{nn} \end{array}\right]
\left[\begin{array}{c} x_1 \\ x_2 \\ \vdots \\ x_n \end{array}
\right] = \left[\begin{array}{c} b_1 \\ b_2 \\ \vdots \\ b_n
\end{array} \right] ~.
$$
If
$$
A= \det \left|\begin{array}{cccc}
a_{11} & a_{12} & \ldots & a_{1n} \\
a_{21} & a_{22} & \ldots & a_{2n} \\
\vdots & \vdots & \ddots & \vdots\\
a_{n1} & a_{n2} & \ldots & a_{nn} \end{array}\right| ~,
$$
then $x_k = \frac{A_k}{A}$ for $1 \leq k \leq n$, where
$$
A_k = \det \left|\begin{array}{ccccccc}
a_{11} & \ldots & a_{1(k-1)} & b_1 & a_{1(k+1)} & \ldots & a_{1n} \\
a_{21} & \ldots & a_{2(k-1)} & b_2 & a_{2(k+1)} & \ldots & a_{2n} \\
\vdots & \ddots & \vdots & \dots & \vdots & \ddots & \vdots\\
a_{n1} & \ldots & a_{n(k-1)} & b_n & a_{n(k+1)} & \ldots & a_{nn}
\end{array}\right| ~.
$$
\end{theorem}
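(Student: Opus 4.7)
The plan is to prove Cramer's rule by the classical determinant-multiplication trick: embed the coordinates of the unknown solution vector as a single column of an auxiliary matrix, and then exploit the multiplicativity of the determinant. Throughout, let $A$ denote the coefficient matrix, $\mathbf{b}$ the right-hand side column, and $\mathbf{x}=(x_1,x_2,\ldots,x_n)^{T}$ the solution. Since the system is assumed to have a (unique) solution, $\det(A)\neq 0$, which will be essential at the last step.

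First, for each $k$ with $1\le k\le n$, I would introduce the $n\times n$ matrix $M_k$ obtained from the identity matrix by replacing its $k$-th column with $\mathbf{x}$. Cofactor expansion of $\det(M_k)$ along the $k$-th column yields immediately $\det(M_k)=x_k$, because the only surviving minor is an $(n-1)\times(n-1)$ identity block contributing $1$, multiplied by the single entry $x_k$ sitting on the diagonal of that column.

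Next, I would compute $A\cdot M_k$ one column at a time. The $j$-th column of $AM_k$ is $A$ applied to the $j$-th column of $M_k$: for $j\neq k$ this is $A e_j$, which is simply the $j$-th column of $A$; for $j=k$ it is $A\mathbf{x}=\mathbf{b}$. Consequently $AM_k$ is the matrix obtained from $A$ by replacing its $k$-th column with $\mathbf{b}$, which is precisely the matrix appearing under $A_k$ in the statement. Taking determinants of both sides and using multiplicativity gives $\det(A)\cdot\det(M_k)=\det(AM_k)$, i.e.\ $A\cdot x_k=A_k$. Dividing by $A\neq 0$ produces $x_k=A_k/A$ for every $k$, as claimed.

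There is essentially no serious obstacle in this argument; the main things to be careful with are the cofactor expansion that identifies $\det(M_k)$ with $x_k$ (one must verify the sign is correct and that the surviving minor really is an identity block, which follows from the fact that $M_k$ agrees with the identity outside column $k$), and the column-by-column bookkeeping that identifies $AM_k$ with $A_k$. Both are elementary linear-algebra computations. If one wished to avoid assuming $\det(A)\neq 0$ a priori, the same identity $A\cdot x_k=A_k$ is still valid and can be used to conclude either existence-and-uniqueness (when $A\neq 0$) or, combined with rank considerations, degeneracy of the system.
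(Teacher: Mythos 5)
Your proof is correct; it is the standard argument via multiplicativity of the determinant, and both key computations (the cofactor expansion showing $\det(M_k)=x_k$ and the column-by-column identification of $AM_k$ with the matrix defining $A_k$) are carried out accurately. Note, however, that the paper does not prove this statement at all: it is quoted as the ``well known Cramer's rule'' with a citation to MacLane and Birkhoff, so there is no in-paper proof to compare against. Your closing remark about $\det(A)\neq 0$ is apt, since the theorem as stated leaves that hypothesis implicit; in the paper's application the matrix is a lattice generator matrix with $|\det G|=|\cS|>0$, so the nondegeneracy is automatic there.
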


\vspace{1.3cm}

Let $\Lambda$ be a $D$-dimensional lattice tiling for the shape
$\cS$. Let $G$ be the following generator matrix of $\Lambda$:

$$
G=\left[\begin{array}{cccc}
v_{11} & v_{12} & \ldots & v_{1D} \\
v_{21} & v_{22} & \ldots & v_{2D} \\
\vdots & \vdots & \ddots & \vdots\\
v_{D1} & v_{D2} & \ldots & v_{DD} \end{array}\right]~.
$$

Given the direction vector $\delta = (d_1 ,d_2, \ldots , d_D)$,
w.l.o.g. we assume that the first $\ell_1 \geq 1$ values of
$\delta$ are positives, the next $\ell_2$ values are negatives,
and the last $D-\ell_1-\ell_2$ values are 0's. By
Lemma~\ref{lem:tile_fold2} and Corollary~\ref{cor:lattice_points},
if $(\Lambda,\cS,\delta)$ defines a folding then there exist $D$
integer coefficients $\alpha_1,\alpha_2,\ldots,\alpha_D$ such that
\begin{eqnarray*}
\sum_{j=1}^D \alpha_j
(v_{j1},v_{j2},\ldots,v_{jD})=~~~~~~~~~~~~~~~~~~~~~~~~~~\\
(|\cS| d_1 ,\ldots,|\cS| d_{\ell_1},-|\cS| d_{\ell_1
+1},\ldots,-|\cS| d_{\ell_1 + \ell_2},0,\ldots,0),
\end{eqnarray*}
and there is no integer $i$, $0 < i < |\cS|$, and $D$ integer
coefficients $\beta_1,\beta_2,\ldots,\beta_D$ such that
\begin{eqnarray*}
\sum_{j=1}^D \beta_j
(v_{j1},v_{j2},\ldots,v_{jD})~~~~~~~~~~~~~~~~~~\\
=(i \cdot d_1,\ldots,i \cdot d_{\ell_1} ,-i \cdot d_{\ell_1
+1},\ldots,-i \cdot d_{\ell_1 + \ell_2},0,\ldots,0)~.
\end{eqnarray*}
Hence we have the following $D$ equations:
\begin{equation}
\label{eq:Gset+1} \sum_{j=1}^D \alpha_j v_{jr} = |\cS| \cdot d_r
,~~~ 1 \leq r \leq \ell_1 ,
\end{equation}
\begin{equation}
\label{eq:Gset-1} \sum_{j=1}^D \alpha_j v_{jr} = -|\cS| \cdot
d_r,~~~ \ell_1 +1 \leq r \leq \ell_1 + \ell_2 ,
\end{equation}
\begin{equation}
\label{eq:Gset0} \sum_{j=1}^D \alpha_j v_{jr} = 0,~~~
\ell_1+\ell_2+1 \leq r \leq D ~.
\end{equation}
Let $\tau = d_1$ if $\ell_1 + \ell_2 =1$ and $\tau =
\text{g.c.d.}(d_1 , d_2, \ldots , d_{\ell_1 + \ell_2} )$ if
$\ell_1 + \ell_2 >1$. The $D$ equations in (\ref{eq:Gset+1}),
(\ref{eq:Gset-1}), (\ref{eq:Gset0}) are equivalent to the
following $D$ equations:
$$
\sum_{j=1}^D \alpha_j v_{j1} = |\cS| \cdot d_1,
$$
\begin{equation*}
\label{eq:Gset+1a} \sum_{j=1}^D \alpha_j \frac{d_1 v_{jr}-d_r
v_{j1}}{\tau} = 0,~~~ 2 \leq r \leq \ell_1 ,
\end{equation*}
\begin{equation*}
\label{eq:Gset-1a} \sum_{j=1}^D \alpha_j \frac{d_1 v_{jr}+d_r
v_{j1}}{\tau} = 0,~~~ \ell_1 +1 \leq r \leq \ell_1 + \ell_2,
\end{equation*}
\begin{equation*}
\label{eq:Gset0a} \sum_{j=1}^D \alpha_j v_{jr} = 0,~~~
\ell_1+\ell_2+1 \leq r \leq D .
\end{equation*}
We define now a set of $D(D-1)$ new coefficients $u_{rj}$, $2 \leq
r \leq D$, $1 \leq j \leq D$, as follows:

$$
u_{rj}=\frac{d_1 v_{jr}- d_r v_{j1}}{\tau}~~ \text{for}~ 2 \leq r
\leq \ell_1 ,
$$
$$
u_{rj}=\frac{d_1 v_{jr}+ d_r v_{j1}}{\tau}~~ \text{for}~ \ell_1+1
\leq r \leq \ell_1+\ell_2 ,
$$
$$
u_{rj}=v_{jr}~~ \text{for}~ \ell_1+\ell_2+1 \leq r \leq D .
$$
Consider the $(D-1) \times D$ matrix
$$
H=\left[\begin{array}{cccc}
u_{21} & u_{22} & \ldots & u_{2D} \\
u_{31} & u_{32} & \ldots & u_{3D} \\
\vdots & \vdots & \ddots & \vdots\\
u_{D1} & u_{D2} & \ldots & u_{DD} \end{array}\right] ~.
$$
Using Theorem~\ref{thm:cramer} it is easy to verify that the
unique solution for the $\alpha_k$'s is
\begin{equation}
\label{eq:alpha} \alpha_k = (-1)^{k-1} \frac{d_1 \tau^{\ell_1 +
\ell_2 -1} \det H_k }{d_1^{\ell_1 + \ell_2 -1}}
\end{equation}
%
where $H_k$ is the $(D-1) \times (D-1)$ matrix obtained from $H$
by deleting column $k$ of $H$.

\begin{lemma}
For each $k$, $1 \leq k \leq D$, $\tau$ divides $\alpha_k$ defined
in (\ref{eq:alpha}).
\end{lemma}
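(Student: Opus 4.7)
The plan is to expand $\alpha_k$ via multilinearity of the determinant so that every surviving summand carries a factor $d_s$ with $1\leq s\leq \ell_1+\ell_2$; since each such $d_s$ is divisible by $\tau$ by the definition of $\tau$, divisibility of $\alpha_k$ by $\tau$ then follows. The edge case $\ell_1+\ell_2=1$ is immediate, because then $\tau=d_1$ and formula (\ref{eq:alpha}) collapses to $\alpha_k=(-1)^{k-1}d_1\det H_k$, which is trivially a multiple of $\tau$. From here on assume $\ell_1+\ell_2\geq 2$.

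Let $w_r=(v_{r1},\ldots,v_{rD})$ denote the $r$-th row of $G$. First I would introduce the $(D-1)\times D$ matrix $\tilde{H}$ obtained from $H$ by multiplying each of its first $\ell_1+\ell_2-1$ rows by $\tau$: its $r$-th row (for $2\leq r\leq \ell_1+\ell_2$) is $d_1 w_r+\sigma_r d_r w_1$, where $\sigma_r=-1$ for $r\leq\ell_1$ and $\sigma_r=+1$ otherwise, while for $r>\ell_1+\ell_2$ the row remains $w_r$. Scaling rows multiplies the determinant, so $\det\tilde{H}_k=\tau^{\ell_1+\ell_2-1}\det H_k$, and substituting this into (\ref{eq:alpha}) yields the cleaner identity
\begin{equation*}
\alpha_k=(-1)^{k-1}\,\frac{\det\tilde{H}_k}{d_1^{\ell_1+\ell_2-2}}.
\end{equation*}

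Next I would apply multilinearity of the determinant to the first $\ell_1+\ell_2-1$ rows of $\tilde{H}_k$, expanding into $2^{\ell_1+\ell_2-1}$ terms in which, for each $r\in\{2,\ldots,\ell_1+\ell_2\}$, one chooses from row $r$ either the summand $d_1 w_r$ or the summand $\sigma_r d_r w_1$. Whenever $w_1$ is chosen in two or more rows the resulting matrix has a repeated row and its determinant vanishes, so only two families of terms survive. The all-$d_1 w_r$ term contributes $d_1^{\ell_1+\ell_2-1}\det \hat G^{(1)}_k$, where $\hat G^{(s)}_k$ denotes the determinant of the submatrix of $G$ obtained by deleting row $s$ and column $k$; and for each $s\in\{2,\ldots,\ell_1+\ell_2\}$, choosing $\sigma_s d_s w_1$ in row $s$ and $d_1 w_r$ in every other row contributes $\sigma_s d_s d_1^{\ell_1+\ell_2-2}(-1)^{s-2}\det \hat G^{(s)}_k$, with the factor $(-1)^{s-2}$ coming from the adjacent row swaps needed to restore the natural row order after $w_1$ has displaced $w_s$.

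Dividing through by $d_1^{\ell_1+\ell_2-2}$ now gives
\begin{equation*}
\alpha_k=(-1)^{k-1}\Bigl[\,d_1\det\hat G^{(1)}_k+\sum_{s=2}^{\ell_1+\ell_2}(-1)^s\sigma_s d_s\det\hat G^{(s)}_k\,\Bigr],
\end{equation*}
so every term in the bracket carries a factor $d_s$ with $1\leq s\leq \ell_1+\ell_2$, and each such $d_s$ is divisible by $\tau$. Hence $\tau\mid\alpha_k$, as claimed. The main point of care is simply tracking the sign $(-1)^{s-2}$ from the row rearrangement and verifying that the higher-order contributions truly vanish; both are routine consequences of multilinearity and of the fact that $\det$ is zero on matrices with two equal rows, so the argument presents no substantive obstacle.
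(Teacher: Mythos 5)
Your proof is correct and follows essentially the same route as the paper's: the paper likewise clears the denominators $\tau$ from the rows of $H$, observes that $\det \tilde{G} = |\cS|\left(d_1/\tau\right)^{\ell_1+\ell_2-1}$, and reads off $\alpha_k=\tau Y$ from Cramer's rule, and your multilinear expansion is precisely the computation that justifies the paper's unproved assertion that the Cramer numerator equals $|\cS|\, d_1 \left(d_1/\tau\right)^{\ell_1+\ell_2-2} Y$ for an integer $Y$. The only blemish is notational: since the coefficient matrix of the system is $G^T$ (the $r$-th equation reads $\sum_j \alpha_j v_{jr}$), your $w_r$ should be the $r$-th \emph{column} of $G$ rather than its $r$-th row, and $\hat{G}^{(s)}_k$ the corresponding minor of $G^T$; with that relabeling every step goes through unchanged.
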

\begin{proof}
Consider the  following $D \times D$ matrix
$$
\tilde{G}=\left[\begin{array}{cccc}
v_{11} & v_{21} & \ldots & v_{D1} \\
u_{21} & u_{22} & \ldots & u_{2D} \\
u_{31} & u_{32} & \ldots & u_{3D} \\
\vdots & \vdots & \ddots & \vdots\\
u_{D1} & u_{D2} & \ldots & u_{DD} \end{array}\right] ~.
$$
By the definition of the entries in the matrix $H$ and since $\det
G = |\cS|$ it follows that that $\det \tilde{G} = |\cS| \left(
\frac{d_1}{\tau} \right)^{\ell_1+\ell_2-1}$. $\det \tilde{G}$ in
Theorem~\ref{thm:cramer} is equal $A$, while $A_k$ is equal $|\cS|
\cdot d_1 \left( \frac{d_1}{\tau} \right)^{\ell_1+\ell_2-2} Y$,
for some integer $Y$. Therefore, $\alpha_k = \tau Y$ and the lemma
follows.
\end{proof}

This analysis leads to the following theorem.
\begin{theorem}
\label{thm:new_general_condition} If $\Lambda$ is a lattice tiling
for the shape $\cS$ then the triple $(\Lambda,\cS,\delta)$ defines
a folding if and only if $\text{g.c.d.}(\frac{\alpha_1}{\tau} ,
\frac{\alpha_2}{\tau} , \ldots , \frac{\alpha_D}{\tau})=1$ and
$\text{g.c.d.}( \tau , | \cS | )=1$.
\end{theorem}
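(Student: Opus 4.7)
The plan is to leverage the setup already in place and reduce the statement to Lemma~\ref{lem:tile_fold2}, which says $(\Lambda,\cS,\delta)$ defines a folding iff $|\cS|\cdot\delta$ represents a lattice point while no smaller $i\cdot\delta$ with $0<i<|\cS|$ does. The displayed coefficients $\alpha_1,\ldots,\alpha_D$ already witness the first condition (with the sign adjustments on each coordinate block), so the whole theorem reduces to understanding when a representation with smaller $i$ exists.

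The key observation is that for any fixed $i$, the $D\times D$ system $\sum_j \beta_j v_{jr}=b_r(i)$, whose right-hand side equals $\pm i d_r$ or $0$ on the three coordinate blocks, has a nonsingular coefficient matrix of absolute determinant $|\cS|$, so its unique real solution is $\beta_j=(i/|\cS|)\alpha_j$ by linearity. Hence $i\cdot\delta$ admits an integer lattice representation iff $|\cS|$ divides $i\alpha_j$ for every $j$, equivalently $|\cS|\mid i\cdot\gcd(\alpha_1,\ldots,\alpha_D)=i\tau\cdot\gcd(\alpha_1/\tau,\ldots,\alpha_D/\tau)$, using $\tau\mid\alpha_j$ from the lemma just above. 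From this, sufficiency is immediate: if both gcd conditions hold and some $0<i<|\cS|$ gave a lattice representation, then $|\cS|\mid i\tau$, and coprimality of $\tau$ with $|\cS|$ forces $|\cS|\mid i$, a contradiction.

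For necessity, I would assume the folding is valid but one of the conditions fails, and exhibit in each case a forbidden smaller representation. If $\nu_1:=\gcd(\alpha_1/\tau,\ldots,\alpha_D/\tau)>1$, set $\gamma_j=\alpha_j/(\tau\nu_1)$, which are integers because $\tau\nu_1$ divides $\alpha_j$; the identity $\sum_j\gamma_j v_{jr}=|\cS|d_r/(\tau\nu_1)$, together with the integrality of the left-hand side on every coordinate with nonzero target, forces $\tau\nu_1\mid|\cS|\tau$ and hence $\nu_1\mid|\cS|$, making $i:=|\cS|/\nu_1\in(0,|\cS|)$ an integer and $\beta_j:=\tau\gamma_j=\alpha_j/\nu_1$ an integer lattice representation of $i\cdot\delta$. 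If instead $\nu_2:=\gcd(\tau,|\cS|)>1$, take $\beta_j=\alpha_j/\nu_2$, which are integers since $\nu_2\mid\tau\mid\alpha_j$ and immediately yield a representation with $i=|\cS|/\nu_2\in(0,|\cS|)$. Either case contradicts Lemma~\ref{lem:tile_fold2}. The only mild obstacle is the first of these two constructions: dividing naively by $\nu_1$ is not enough to guarantee integrality of $i$, and one must first pass through the auxiliary $\gamma_j=\alpha_j/(\tau\nu_1)$ exactly as in the 2D argument of Theorem~\ref{thm:new_cond_fold2D}.
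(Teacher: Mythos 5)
Your proposal is correct and follows essentially the same route as the paper: reduce to Lemma~\ref{lem:tile_fold2}, use the uniqueness of the solution $\beta_j=(i/|\cS|)\alpha_j$ coming from the linear independence of the rows of $G$, and split the necessity direction into the two cases $\nu_1>1$ and $\nu_2>1$. Your handling of Case~1 (deriving $\nu_1\mid|\cS|$ from the integrality of $|\cS|d_r/(\tau\nu_1)$ before dividing) is in fact slightly more careful than the paper's terse $D$-dimensional version, matching the detail given in the two-dimensional Theorem~\ref{thm:new_cond_fold2D}.
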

\begin{proof}
Assume first that $(\Lambda,\cS,\delta)$ defines a folding.

Now, assume for the contrary that
$\text{g.c.d.}(\frac{\alpha_1}{\tau} , \frac{\alpha_2}{\tau} ,
\ldots , \frac{\alpha_D}{\tau})= \nu_1 > 1$ or $\text{g.c.d.}(
\tau , | \cS | )= \nu_2 > 1$. We distinguish between two cases.

\noindent {\bf Case 1:}

Assume that $\text{g.c.d.}(\frac{\alpha_1}{\tau} ,
\frac{\alpha_2}{\tau} , \ldots , \frac{\alpha_D}{\tau})= \nu_1
>1$.

Equations (\ref{eq:Gset+1}), (\ref{eq:Gset-1}), and
(\ref{eq:Gset0}) have exactly one solution for the $\alpha_i$'s
given in (\ref{eq:alpha}). Since
$\text{g.c.d.}(\frac{\alpha_1}{\tau} , \frac{\alpha_2}{\tau} ,
\ldots , \frac{\alpha_D}{\tau})= \nu_1$, it follows that $\beta_i
= \frac{\alpha_i}{\tau \nu_1}$, $1 \leq i \leq D$, are integers.
Therefore, we have

\begin{equation*}
\label{eq:Gbset+1} \sum_{j=1}^D \beta_j v_{jr} = \frac{|\cS|}{\tau
\nu_1} d_r,~~~ 1 \leq r \leq \ell_1 ,
\end{equation*}
\begin{equation*}
\label{eq:Gbset-1} \sum_{j=1}^D \beta_j v_{jr} =
\frac{-|\cS|}{\tau \nu_1} d_r,~~~ \ell_1 +1 \leq r \leq \ell_1 +
\ell_2 ,
\end{equation*}
\begin{equation*}
\label{eq:Gbset0} \sum_{j=1}^D \beta_j v_{jr} = 0,~~~
\ell_1+\ell_2+1 \leq r \leq D ,
\end{equation*}
i.e.,
\begin{eqnarray*}
\sum_{j=1}^D \beta_j
(v_{j1},v_{j2},\ldots,v_{jD})=~~~~~~~~~~~~~~~~~~~~~~~~~~\\
(\frac{|\cS|}{\tau \nu_1} d_1,\ldots,\frac{|\cS|}{\tau \nu_1}
d_{\ell_1},-\frac{|\cS|}{\tau \nu_1}
d_{\ell_1+1},\ldots,-\frac{|\cS|}{\tau \nu_1} d_{\ell_1
+\ell_2},0,\ldots,0),
\end{eqnarray*}
and as a consequence by Lemma~\ref{lem:tile_fold2} we have that
$(\Lambda,\cS,\delta)$ does not define a folding, a contradiction.

\noindent {\bf Case 2:}

Assume that $\text{g.c.d.}( \tau , | \cS | )= \nu_2 >1$.

Let $\beta_i = \frac{\alpha_i}{\nu_2}$, $1 \leq i \leq \ell_1 +
\ell_2$. Therefore,
\begin{eqnarray*}
\sum_{j=1}^D \beta_j
(v_{j1},v_{j2},\ldots,v_{jD})=~~~~~~~~~~~~~~~~~~~~~~~~~~\\
(\frac{|\cS|}{\nu_2} d_1,\ldots,\frac{|\cS|}{\nu_2}
d_{\ell_1},-\frac{|\cS|}{\nu_2}
d_{\ell_1+1},\ldots,-\frac{|\cS|}{\nu_2} d_{\ell_1
+\ell_2},0,\ldots,0),
\end{eqnarray*}
and as a consequence by Lemma~\ref{lem:tile_fold2} we have that
$(\Lambda,\cS,\delta)$ does not define a folding, a contradiction.

\vspace{0.15cm}

As a consequence of Case 1 and Case 2 we have that if
$(\Lambda,\cS,\delta)$ defines a folding with the ternary vector
$\delta$ then $\text{g.c.d.}(\frac{\alpha_1}{\tau} ,
\frac{\alpha_2}{\tau} , \ldots , \frac{\alpha_D}{\tau})=1$ and
$\text{g.c.d.}( \tau , | \cS | )=1$.

\vspace{0.3cm}

Now assume that $\text{g.c.d.}(\frac{\alpha_1}{\tau} ,
\frac{\alpha_2}{\tau} , \ldots , \frac{\alpha_D}{\tau})=1$ and
$\text{g.c.d.}( \tau , | \cS | )=1$. Consider the set of $D$
equations defined by
\begin{equation}
\label{eq:Gset_alpha} \sum_{j=1}^D \alpha_j
(v_{j1},v_{j2},\ldots,v_{jD})=~~~~~~~~~~~~~~~~~~~~~~~~~~
\end{equation}
\begin{equation*}
(|\cS| d_1,\ldots,|\cS| d_{\ell_1},-|\cS|
d_{\ell_1+1},\ldots,-|\cS| d_{\ell_1+\ell_2},0,\ldots,0),
\end{equation*}
Since the rows of $G$ are linearly independent, it follows that
this set of equations has a unique solution for the $\alpha_i$'s
(but, these coefficients are not necessary integers). Using the
same analysis proceeding the theorem, we have by the Cramer's rule
that this solution is given by (\ref{eq:alpha}) and hence the
$\alpha_i$'s are integers. Assume for the contrary that
$(\Lambda,\cS,\delta)$ does not define a folding. Then, by
Lemma~\ref{lem:tile_fold2} we have that there exist $D$ integers
$\beta_i$, $1 \leq i \leq D$, such that
\begin{equation}
\label{eq:Gset_beta} \sum_{j=1}^D \beta_j
(v_{j1},v_{j2},\ldots,v_{jD})=~~~~~~~~~~~~~~~~~~~~~~~~~~
\end{equation}
\begin{equation*}
(\ell \cdot d_1 ,\ldots,\ell \cdot d_{\ell_1},-\ell \cdot
d_{\ell_1 +1} ,\ldots,-\ell \cdot d_{\ell_1 + \ell_2}
,0,\ldots,0), \label{eq:set_beta}
\end{equation*}
for some integer $0 < \ell < |\cS|$.

Since the rows of $G$ are linearly independent then there exists
exactly one set of $\beta_i$'s (integers or non-integers) which
satisfies (\ref{eq:Gset_beta}). Let $\nu
=\text{g.c.d.}(\ell,|\cS|)$, where clearly $1 \leq \nu \leq \ell <
|\cS|$. From equations (\ref{eq:Gset_alpha}) and
(\ref{eq:Gset_beta}) we obtain

\begin{eqnarray*}
\sum_{j=1}^D (\ell \alpha_j )
(v_{j1},v_{j2},\ldots,v_{jD})=~~~~~~~~~~~~~~~~~~~~~~~~~~\\
(\ell |\cS| d_1,\ldots,\ell |\cS| d_{\ell_1},-\ell |\cS|
d_{\ell_1+1},\ldots,-\ell |\cS| d_{\ell_1+\ell_2},0,\ldots,0) \\
=\sum_{j=1}^D (|\cS| \beta_j )
(v_{j1},v_{j2},\ldots,v_{jD})~,~~~~~~~~~~~~~~
\end{eqnarray*}
Since the rows of $G$ are linearly independent it implies that
$\ell \alpha_i = |\cS| \beta_i$ for each $1 \leq i \leq D$, i.e.,
$\beta_i = \frac{\ell \alpha_i}{|\cS|}$. $\beta_i = \frac{\ell
\alpha_i}{|\cS|}$ is an integer and $\nu
=\text{g.c.d.}(\ell,|\cS|)$ implies that $\beta_i = \frac{\ell /
\nu}{|\cS| / \nu} \alpha_i$, $1 \leq i \leq D$.
$\text{g.c.d.}(\ell / \nu,|\cS| / \nu)=1$ and hence
$\frac{|\cS|}{\nu}$ divides $\alpha_i$ for each $i$, $1 \leq i
\leq D$. $\text{g.c.d.}( \tau , | \cS | )=1$, $\tau$ divides
$\alpha_i$, and hence $\frac{|\cS|}{\nu}$ divides
$\frac{\alpha_i}{\tau}$ for each $i$, $1 \leq i \leq D$. Hence,
$\text{g.c.d.}(\frac{\alpha_1}{\tau} , \frac{\alpha_2}{\tau} ,
\ldots , \frac{\alpha_D}{\tau}) \geq \frac{|\cS|}{\nu}$. But,
$\text{g.c.d.}(\frac{\alpha_1}{\tau} , \frac{\alpha_2}{\tau} ,
\ldots , \frac{\alpha_D}{\tau})=1$ and hence $\nu =|\cS|$, i.e.,
$\ell \geq | \cS |$, a contradiction. Thus, $(\Lambda,\cS,\delta)$
defines a folding.
\end{proof}

\section*{Appendix B}

In this appendix we consider DDCs with two special shapes, called
corner and flipped T. The DDCs with these shapes and special
parameters are important in applying Theorem~\ref{thm:infinite} to
obtain other DDCs such as triangles in the square grid and
hexagonal spheres in the hexagonal grid.

\subsection{Corner}

A {\it corner}, CR$(h_1+h_2,w_1+w_2;h_2,w_2)$, is an $(h_1 + h_2)
\times (w_1 + w_2)$ rectangle from which an $h_2 \times w_2$
rectangle was removed from its right upper corner. An example is
given in Figure~\ref{fig:corner}. Let $\cS$ be a
CR$(h_1+h_2,w_1+w_2;h_2,w_2)$ and let $\Lambda$ the lattice with
the following generator matrix
$$
G=\left[\begin{array}{cc}
w_1 & h_1 \\
-w_2 & h_1 +h_2
\end{array}\right] ~.
$$

\begin{figure}[tb]
\centering

\setlength{\unitlength}{.5mm}
\begin{picture}(30,40)(40,0)
\linethickness{.5 pt}

\put(30,0){\framebox(55,5){}} \put(30,10){\framebox(55,5){}}
\put(30,20){\framebox(55,5){}} \put(30,30){\framebox(35,5){}}
\put(30,0){\framebox(35,35){}} \put(35,0){\framebox(25,35){}}
\put(40,0){\framebox(15,35){}} \put(45,0){\framebox(5,35){}}
\put(70,0){\framebox(15,25){}} \put(75,0){\framebox(5,25){}}

\end{picture}

\caption{A corner CR$(7,11;2,4)$}\label{fig:corner}

\end{figure}
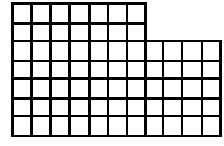

Clearly, $\Lambda$ is a lattice tiling for $\cS$. A general result
concerning DDCs whose shape is a corner seems to be quite
difficult. We will consider the case which seems to be the most
useful for our purpose. First note, that by
Theorem~\ref{thm:cond_fold2D}, $\delta = (0,+1)$ defines a folding
for $\Lambda$ if and only if $\text{g.c.d.}(w_1,w_2)=1$. Assume
first that $h_1=h_2$ and $|w_1-w_2| \leq 3$. By
Theorem~\ref{thm:ratio_rec}, we have an $n_1 \times n_2$ rectangle
$\cQ$ such that $n_1 n_2 = p^2 -1$ for some prime $p$,
$\frac{2n_1}{n_2} \approx \frac{h_1+h_2}{2w_1+w_2}$, and $n_1$ is
even. Now, we will make new choices for $h_1$, $h_2$, $w_1$, and
$w_2$, which are close to the old ones. Let $h_1=h_2= n_1$; we
distinguish between three cases of $n_2$:
\begin{enumerate}
\item [(W.1)] If $n_2 = 3 \omega+1$ then $w_1 = \omega$ and
$w_2=\omega+1$.

\item [(W.2)] If $n_2 = 3 \omega+2$ then $w_1 = \omega +1$ and
$w_2=\omega$.

\item [(W.3)] If $n_2 = 3 \omega$ then we distinguish between two
cases:
\begin{itemize}
\item if $\omega -1 \equiv 0~(\bmod 3)$ then $w_1 = \omega+1$ and
$w_2=\omega -2$.

\item  if $\omega -1 \not\equiv 0~(\bmod 3)$ then $w_1 = \omega-1$
and $w_2=\omega +2$.
\end{itemize}
\end{enumerate}

It is easy to verify that the size of the new corner
CR$(h_1+h_2,w_1+w_2;h_2,w_2)$, $\cS'$, is $n_1 n_2 = p^2-1$,
$\Lambda$ is a lattice tiling for $\cS'$, $(\Lambda,\cS',\delta)$,
$\delta = (0,+1)$, defines a folding, and we can form a doubly
periodic $\cS'$-DDC with it. Hence, we have the following theorem.

\begin{theorem}
Let $n_1$ and $n_2$ be two integers such that $n_1 n_2 = p^2 -1$
for some prime number $p$, $n_2 = 2w_1+w_2$, where $n_1$ is an
even integer, $w_1$, $w_2$, are defined by (W.1), (W.2), (W.3).
Then there exists a doubly periodic $\cS$-DDC, whose shape is a
corner, CR$(2 n_1,w_1+w_2;n_1,w_2)$, with $p$ dots.
\end{theorem}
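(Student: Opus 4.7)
The plan is to invoke Theorem~\ref{thm:fold_B2} on the lattice $\Lambda$ defined by the generator matrix
\[
G = \left[\begin{array}{cc} w_1 & n_1 \\ -w_2 & 2n_1 \end{array}\right],
\]
the corner $\cS = \mathrm{CR}(2n_1, w_1+w_2; n_1, w_2)$, the direction $\delta=(0,+1)$, and the Bose $B_2$-sequence of length $p$ in $\Z_{p^2-1}$ given by Theorem~\ref{thm:Bose}. Three preconditions must be verified.

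First, I would check that $\Lambda$ is a lattice tiling for $\cS$. A direct count gives $|\cS| = 2n_1(w_1+w_2) - n_1 w_2 = n_1(2w_1+w_2) = n_1 n_2 = p^2-1 = |\det G|$, so by the corollary to the tiling lemma it suffices to show that no nonzero difference of two points of $\cS$ is a lattice point. A general lattice point has the form $(aw_1 - bw_2, (a+2b)n_1)$. Since the vertical extent of $\cS$ is $2n_1$, the vertical component forces $a+2b\in\{-1,0,1\}$. If $a+2b=0$, then the horizontal component is a multiple of $n_2 = 2w_1+w_2$, and since the horizontal extent of $\cS$ is $w_1+w_2<n_2$, this forces $b=0=a$. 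If $a+2b=\pm 1$, the only candidate horizontal difference within the allowed range is $\pm w_1$ (achieved at $b=0$), giving the putative translation $(w_1, n_1)$. But the excised corner is defined precisely to rule this out: if $(j_1,j_2)\in\cS$ and $(j_1+w_1, j_2+n_1)\in\cS$, then $j_2+n_1<2n_1$ forces $j_2<n_1$, whence $j_1+w_1\ge w_1$ and $j_2+n_1\ge n_1$, placing the second point inside the removed $n_1\times w_2$ rectangle, a contradiction.

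Second, I would check that $(\Lambda,\cS,\delta)$ defines a folding. By Theorem~\ref{thm:cond_fold2D}, for $\delta=(0,+1)$ this reduces to $\gcd(v_{11},v_{21})=\gcd(w_1,w_2)=1$. In cases (W.1) and (W.2), $w_1$ and $w_2$ are consecutive integers, so the gcd is $1$. In case (W.3), $|w_1-w_2|=3$, so the gcd divides $3$; the residue condition on $\omega \bmod 3$ in the two subcases of (W.3) is imposed exactly so that $3$ divides neither $w_1$ nor $w_2$, forcing the gcd to be $1$.

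Third, with both preconditions established I would apply Theorem~\ref{thm:Bose} to obtain a $B_2$-sequence $\cB\subseteq\Z_{p^2-1}=\Z_{|\cS|}$ of size $p$, and then invoke Theorem~\ref{thm:fold_B2} to conclude that the pattern of dots defined by $(\Lambda,\cS,\delta,\cB)$ is a doubly periodic $\cS$-DDC containing exactly $p$ dots. The main obstacle is the case analysis in the tiling verification, since it is the only step where the specific geometry of the excised corner (rather than just its area) is used; the folding condition and the appeal to $B_2$-sequences are routine once the tiling is in place.
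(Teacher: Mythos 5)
Your proposal is correct and follows essentially the same route as the paper: the same generator matrix $\left[\begin{smallmatrix} w_1 & n_1 \\ -w_2 & 2n_1\end{smallmatrix}\right]$, the direction $(0,+1)$ checked via Theorem~\ref{thm:cond_fold2D}, and the combination of Theorems~\ref{thm:Bose} and~\ref{thm:fold_B2}. The only difference is that you spell out the tiling verification and the gcd case analysis that the paper dismisses as ``clearly'' and ``easy to verify,'' and both of your verifications are sound.
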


\subsection{Flipped T}

\begin{figure}[tb]
\centering

\setlength{\unitlength}{.5mm}
\begin{picture}(30,50)(40,0)
\linethickness{.5 pt}

\put(15,0){\framebox(85,5){}} \put(15,10){\framebox(85,5){}}
\put(15,20){\framebox(85,5){}} \put(15,0){\framebox(15,25){}}
\put(20,0){\framebox(5,25){}} \put(75,0){\framebox(25,25){}}
\put(80,0){\framebox(15,25){}} \put(85,0){\framebox(5,25){}}
\put(35,30){\framebox(35,5){}} \put(35,40){\framebox(35,5){}}
\put(35,0){\framebox(35,50){}} \put(40,0){\framebox(25,50){}}
\put(45,0){\framebox(15,50){}} \put(50,0){\framebox(5,50){}}

\end{picture}

\caption{A flipped T FT$(5,17;4,6)$}\label{fig:flippedT}

\end{figure}
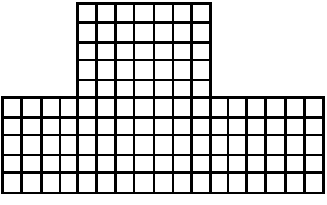

A {\it flipped T}, FT$(h,w_1+w_2+w_3;w_1,w_3)$, is an $(2h) \times
(w_1 + w_2+w_3)$ rectangle from which an $h \times w_1$ rectangle
was removed from its left upper corner and an $h \times w_3$
rectangle was removed from its right upper corner. An example is
given in Figure~\ref{fig:flippedT}. Let $\cS$ be a
FT$(h,w_1+w_2+w_3;w_1,w_3)$ and let $\Lambda$ the lattice with the
following generator matrix
$$
G=\left[\begin{array}{cc}
w_1+w_2 & h \\
w_1+2w_2+w_3 & 0
\end{array}\right]~.
$$

Clearly, $\Lambda$ is a lattice tiling for $\cS$. A general result
concerning DDCs whose shape is a flipped T seems to be quite
difficult. We will consider the case which seems to be the most
useful for our purpose. First note, that by
Theorem~\ref{thm:cond_fold2D}, $\delta = (0,+1)$ defines a folding
for $\Lambda$ if and only if
$\text{g.c.d.}(w_1+w_2,w_1+2w_2+w_3)=1$ which is equivalent to
$\text{g.c.d.}(w_1+w_2,w_2+w_3)=1$. Assume that $|w_1-w_3| \leq
4$. By Theorem~\ref{thm:ratio_rec}, we have an $n_1 \times n_2$
rectangle $\cQ$ such that $n_1 n_2 = p^2 -1$ for some prime $p$,
$\frac{n_1}{n_2} \approx \frac{h}{w_1+2w_2+w_3}$, and $n_2$ is
even. Now, we will make new choices for $h$, $w_1$, and $w_3$,
which are close to the old ones. Let $h= n_1$; we distinguish
between two cases of $n_2$:

\begin{enumerate}
\item [(Y.1)] If $n_2 = 4 \omega$ then $w_1 = 2 \omega +1 -w_2$
and $w_3=2 \omega-1-w_2$.

\item [(Y.2)] If $n_2 = 4 \omega +2$ then $w_1 = 2 \omega +3 -w_2$
and $w_3=2 \omega-1-w_2$.
\end{enumerate}

It is easy to verify that the size of the new flipped T,
FT$(h,w_1+w_2+w_3;w_1,w_3)$, $\cS'$, is $n_1 n_2 = p^2-1$,
$\Lambda$ is a lattice tiling for $\cS'$, $(\Lambda,\cS',\delta)$,
$\delta = (0,+1)$, defines a folding, and we can form a doubly
periodic $\cS'$-DDC with it. Hence, we have the following theorem.

\begin{theorem}
Let $n_1$ and $w_2$ be two integers such that $n_2 =
w_1+2w_2+w_3$, $w_1$, $w_3$, are defined by (Y.1), (Y.2), and $n_1
n_2 = p^2 -1$ for some prime number $p$. Then there exists a
doubly periodic $\cS$-DDC, whose shape is a flipped T,
FT$(n_1,w_1+w_2+w_3;w_1,w_3)$, with $p$ dots.
\end{theorem}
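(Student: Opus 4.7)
The plan is to follow the same four-step template used for the corner shape in the preceding subsection, adapted to the flipped T geometry. Let $\cS = \text{FT}(n_1,w_1+w_2+w_3;w_1,w_3)$ and let $\Lambda$ be the lattice whose generator matrix is
$$
G=\left[\begin{array}{cc}
w_1+w_2 & n_1 \\
w_1+2w_2+w_3 & 0
\end{array}\right].
$$
First I would check the volume: $|\det G|=n_1(w_1+2w_2+w_3)=n_1 n_2=p^2-1$, which matches the area of a flipped T obtained by removing two $n_1\times w_1$ and $n_1\times w_3$ rectangles from a $2n_1 \times (w_1+w_2+w_3)$ rectangle. Hence the cardinality condition of the tiling lemma is automatic.

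Next I would verify that $\Lambda$ is a lattice tiling for $\cS$. Place $\cS$ with lower-left corner at the origin so that the bottom rectangle occupies $[0,w_1+w_2+w_3]\times[0,n_1]$ and the top stem occupies $[w_1,w_1+w_2]\times[n_1,2n_1]$. The horizontal basis vector $(n_2,0)$ tiles each horizontal strip of height $n_1$ by a rectangle of width $w_1+w_2+w_3$ followed by a gap of width $w_2$; the diagonal basis vector $(w_1+w_2,n_1)$ shifts the shape so that the top stem of the copy at height $0$ and the right-hand gap of width $w_2$ at height $0$ get filled by the bottom rectangle of the copy at $(w_1+w_2,n_1)$ and by the top stem of the copy at $(w_2+w_3,-n_1)$ respectively. (Alternatively, this may be derived from a lattice tiling of a $n_1 \times n_2$ rectangle with the same generator matrix via iterative application of Theorem~\ref{thm:trans_shape}.) Thus $\Lambda$ tiles $\cS$.

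Third, I would invoke Theorem~\ref{thm:cond_fold2D} for the direction $\delta=(0,+1)$, which requires $\gcd(v_{11},v_{21})=\gcd(w_1+w_2,w_1+2w_2+w_3)=\gcd(w_1+w_2,w_2+w_3)=1$. In case (Y.1), $n_2=4\omega$ gives $w_1+w_2=2\omega+1$ and $w_2+w_3=2\omega-1$; these are consecutive odd integers, so their gcd divides $2$ and both are odd, hence equals $1$. In case (Y.2), $n_2=4\omega+2$ gives $w_1+w_2=2\omega+3$ and $w_2+w_3=2\omega-1$; both are odd and their difference is $4$, so again the gcd is $1$. Therefore $(\Lambda,\cS,(0,+1))$ defines a folding.

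Finally, since $|\cS|=p^2-1$, Theorem~\ref{thm:Bose} supplies a $B_2$-sequence $\cB$ of cardinality $p$ over $\Z_{|\cS|}$. Theorem~\ref{thm:fold_B2} then gives that the array $(\Lambda,\cS,(0,+1),\cB)$ is a doubly periodic $\cS$-DDC, and by construction it carries exactly $|\cB|=p$ dots per fundamental domain, completing the proof. I expect the only delicate step to be the geometric verification of the tiling in Step~2, since the horizontal period $n_2$ strictly exceeds the width $w_1+w_2+w_3$ of $\cS$, so one must track carefully how the top stem of one copy together with the right-hand gap-row of a neighboring copy are filled by two further diagonally-related copies.
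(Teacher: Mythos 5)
Your proposal is correct and follows essentially the same route as the paper: the same generator matrix, the direction $\delta=(0,+1)$ checked via Theorem~\ref{thm:cond_fold2D} (your explicit verification that $\gcd(w_1+w_2,w_2+w_3)=1$ in cases (Y.1) and (Y.2) is exactly the parity/difference argument the paper leaves as "easy to verify"), and then Theorem~\ref{thm:Bose} together with Theorem~\ref{thm:fold_B2} to produce the doubly periodic DDC with $p$ dots. Your more detailed geometric check of the tiling (and the noted alternative via Theorem~\ref{thm:trans_shape}) simply fills in what the paper asserts with "Clearly, $\Lambda$ is a lattice tiling for $\cS$."
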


\section*{Appendix C}

In this section we demonstrate how Theorem~\ref{thm:infinite} is
applied for several geometric shapes (having the role of $\cQ$ in
the theorem), where our shape $\cS$ in the doubly periodic
$\cS$-DDC is an appropriate corner, flipped T, or quasi-regular
hexagon.

\subsection{Equilateral Triangle}

Let $\cQ$ be an equilateral triangle with sides of length $B$. The
area of $\cQ$ is $\frac{\sqrt{3}}{4} B^2$ and hence an upper bound
on the number of dots in $\cQ$ is $\frac{3^\frac{1}{4}}{2}B + o(B)
= 0.658 B +o(B)$. For our shape $\cS$ we take a flipped T,
FT($\frac{B}{2 \sqrt{2}}, \sqrt{\frac{2}{3}} B ; \frac{B}{2
\sqrt{6}},\frac{B}{2 \sqrt{6}}$) which overlaps in its shorter
base with the base of $\cQ$. These bases of $\cS$ and $\cQ$ share
the same center. The area of $\cS$ is $\frac{\sqrt{3}}{4} B^2$ and
hence the density of the array is $\frac{2}{3^\frac{1}{4}B}$. The
intersection of $\cS$ and $\cQ$, $\Delta (\cQ,\cS)$, equal to
$\frac{3 \sqrt{2} - 2 \sqrt{3}}{2} B^2$. Therefore, a lower bound
on the number of dots in $\cQ$ is $\frac{3 \sqrt{2} - 2
\sqrt{3}}{3^{\frac{1}{4}}} B +o(B) = 0.5916B+o(B)$ and the
resulting packing ratio is 0.899~. The same result can be obtained
by using other structures instead of a flipped T.

\subsection{Isosceles Right Triangle}

Let $\cQ$ be an equilateral triangle with base and height of
length $B$. The area of $\cQ$ is $\frac{1}{2} B^2$ and hence an
upper bound on the number of dots in $\cQ$ is $\frac{1}{\sqrt{2}}B
+ o(B) = 0.707 B +o(B)$. For our shape $\cS$ we take a corner
CR($\sqrt{\frac{2}{3}} B, \sqrt{\frac{2}{3}} B ;
\frac{B}{\sqrt{6}},\frac{B}{\sqrt{6}}$) which overlaps in its two
shorter sides with the base and height of $\cQ$. $\cS$ and $\cQ$
shares the intersection vertex of these sides. The area of $\cS$
is $\frac{1}{2} B^2$ and hence the density of the array is
$\frac{\sqrt{2}}{B}$. The intersection of $\cS$ and $\cQ$, $\Delta
(\cQ,\cS)$, equal to $(\sqrt{6}-2) B^2$. Therefore, a lower bound
on the number of dots in $\cQ$ is $(\sqrt{12} - 2 \sqrt{2})B +o(B)
=0.63567B+o(B)$ and the resulting packing ratio is 0.899~ (exactly
as in the case of an equilateral triangle).

\subsection{Regular Pentagon}

Let $\cQ$ be a pentagon with radius $R$. The area of $\cQ$ is
$\frac{5}{2} \sin \frac{2 \pi}{5}$ and hence an upper bound on the
number of dots in $\cQ$ is $1.54196R +o(R)$. Let $\cS$ be a
quasi-perfect hexagon having a joint base with $\cQ$ and two short
overlapping sides with $\cQ$, where these sides are connected to
this base (see Figure~\ref{fig:pentagon}). The distance between
the base and the diameter of $\cS$ is $aR$, $2 \sin \frac{\pi}{10}
\cos \frac{3 \pi}{10} < a \leq (1 + \sin \frac{3 \pi}{10})/2$. The
length of the base is $2R \sin \frac{\pi}{5}$ and the length of
the diameter of $\cS$ is $2R \sin \frac{\pi}{5} +2 aR \tan
\frac{\pi}{10}$. Hence, the area of $\cS$ is $(4 \sin
\frac{\pi}{5} +2 a \tan \frac{\pi}{10})aR^2$ and the density of
the array is $\frac{1}{\sqrt{4 a \sin \frac{\pi}{5} +2 a^2 \tan
\frac{\pi}{10}}R }$. The area of the intersection between $\cQ$
and $\cS$, $\Delta (\cS,\cQ)$, is computed by subtracting from the
area of $\cS$ the area of the two isosceles triangles $\sigma_1$
and $\sigma_2$. The lower bound on the number of dots is
$\frac{1}{\sqrt{4 a \sin \frac{\pi}{5} +2 a^2 \tan
\frac{\pi}{10}}R } \Delta (\cS,\cQ)$. The maximum on this lower
bound is obtained for $a=0.814853$, i.e., the lower bound on the
number of dots in a pentagon with radius $R$ is $1.45992R+o(R)$
yielding a packing ratio of 0.946795.

\begin{figure}[tb]


\vspace{2cm}

\setlength{\unitlength}{.5mm}
\begin{picture}(40,40)(40,0)
\linethickness{.6 pt}

\put(80,0){\line(1,0){60}} \put(140,0){\line(3.249,10){18.54}}
\put(110,92.3305){\line(10,-7.2654){48.54}}
\put(61.46,57.064){\line(10,7.2654){48.54}}
\put(80,0){\line(-3.249,10){18.54}}

\put(60,9){\makebox(4,25){$\left.\rule{0cm}{24\unitlength}\right\{$}}
\put(52,21){\makebox(0,0){$aR$}} \put(80,0){\line(30,41.2915){30}}
\put(89,21){\makebox(0,0){$R$}}
\put(84,79){\makebox(0,0){$\sigma_1$}}
\put(135,79){\makebox(0,0){$\sigma_2$}}

\linethickness{1.7 pt}
\put(66.4868,41.5893){\line(3.249,10){13.5132}}
\put(80,83.1786){\line(1,0){60}}
\put(140,83.1786){\line(3.249,-10){13.5132}}

\end{picture}

\caption{Quasi-regular hexagon intersecting a regular pentagon}
\label{fig:pentagon}
\end{figure}

\subsection{Regular Heptagon}

Let $\cQ$ be a regular heptagon with radius $R$. The area of $\cQ$
is $\frac{7}{2} \sin \frac{2 \pi}{7} R^2$ and hence an upper bound
on the number of dots in $\cQ$ is $1.65421R + o(R)$. Let $\cS$ be
a quasi-perfect hexagon constructed as follows. We refer to the
sides of $\cQ$ as side 0, side 1, side 2, side 3, side 4, side 5,
side 6, in consecutive order clockwise. Let's denote the six sides
of $\cS$ by side A, side B, side C, side D, side E, side F, in
consecutive order clockwise, where side A is the lower base of
$\cS$. Sides B and C of $\cS$ overlap sides 1 and 2 of $\cQ$,
respectively; sides B and C are longer than sides 1 and 2. The two
bases of $\cS$, sides A and D, have angles $\frac{9 \pi}{14}$ with
sides B and C, respectively. Side A intersect sides 0 and 6 of
$\cQ$; side D intersect sides 3 and 4 of $\cQ$. The length of the
segment, on side 0, from the vertex of the intersection between
sides 0 and 1 and the intersection of side A and side 0 is $xR$.
Finally, sides E and F of $\cS$ are parallel to sides B and C,
respectively; Side E intersect sides 4 and 5; side F intersect
sides 5 and 6. The distance between the vertex of the intersection
between side E and F of $\cS$ and side 5 of $\cQ$ is $aR$.
Computing $|\cS|$, $\Delta (\cS,\cQ)$, and the lower bound on the
number of dots in $\cQ$, i.e., $\frac{\Delta
(\cS,\cQ)}{\sqrt{|\cS|}}$, as functions of $x$ and $a$ implies
that the maximum is obtained for $x=0.432042$ and $a=0.0840633$,
and the lower bound on the number of dots in $\cQ$ is
$1.58844R+o(R)$ yielding a packing ratio of 0.960241.

\subsection{Regular Octagon}

Let $\cQ$ be a regular octagon with radius $R$. The area of $\cQ$
is $4 \sin \frac{\pi}{4} R^2$ and hence an upper bound on the
number of dots in $\cQ$ is $1.68179R + o(R)$. Let $\cS$ be a
quasi-perfect hexagon having a joint diameter of length $2R$ with
$\cQ$ and overlapping four side edges with the $\cQ$. The distance
between the diameter of the hexagon (octagon) and a base of $\cS$
is $\alpha R$. The area of the $\cS$ is $4\alpha R^2 -2\alpha^2
\frac{\sin \frac{\pi}{8}}{\sin \frac{3\pi}{8}} R^2$ and hence the
density of the array is $\frac{1}{\sqrt{4\alpha -2\alpha^2
\frac{\sin \frac{\pi}{8}}{\sin \frac{3\pi}{8}}}R}$. The
intersection between $\cQ$ and $\cS$, $\Delta (\cS,\cQ)$, is $4
\sin \frac{\pi}{4} R^2 - 2(1-\alpha)^2 R^2 \frac{\sin \frac{3
\pi}{8}}{\sin \frac{\pi}{8}}$. Therefore, a lower bound on the
number of dots in the octagon is $\frac{4 \sin \frac{\pi}{4} R -
2(1-\alpha)^2 R \frac{\sin \frac{3 \pi}{8}}{\sin
\frac{\pi}{8}}}{\sqrt{4\alpha -2\alpha^2 \frac{\sin
\frac{\pi}{8}}{\sin \frac{3\pi}{8}}}}$. The maximum is obtained
for $\alpha =0.872852$ and hence a lower bound on the number of
dots is $1.62625R +o(R)$ and the packing ratio is 0.966977 .

\subsection{Regular Nonagon}

Let $\cQ$ be a regular nonagon with radius $R$. Let $\cS$ be a
quasi-regular hexagon with radius $\rho$, where $\rho = \frac{\sin
\frac{ 11 \pi}{18}}{\sin \frac{\pi}{3}} R$, such that $\cQ$ and
$\cS$ share the same center and there is an overlap in three pairs
of edges between $\cQ$ and $\cS$. The area of $\cQ$ is
$\frac{9}{2} \sin \frac{2 \pi}{9} R^2$ and hence an upper bound on
the number of dots in $\cQ$ is $1.700748R + o(R)$. The area of
$\cS$ is $\frac{3 \sqrt{3}}{2} \left( \frac{\sin \frac{ 11
\pi}{18}}{\sin \frac{\pi}{3}} \right)^2 R^2$ and hence the density
of the array is $\frac{\sqrt{2} \sin
\frac{\pi}{3}}{3^{\frac{1}{4}} \sqrt{3} R \sin \frac{ 11
\pi}{18}}$. The area of the intersection between $\cQ$ and $\cS$,
$\Delta (\cS,\cQ)$, is $\frac{3 \sqrt{3}}{2} \left( \frac{\sin
\frac{ 11 \pi}{18}}{\sin \frac{\pi}{3}} \right)^2 R^2 -6
\frac{\sin^2 \frac{\pi}{18} \cos \frac{\pi}{9}}{\sin
\frac{\pi}{3}} R^2=2.8625667R^2$. Therefore, a lower bound on the
number of dots in the nonagon is $1.63672R +o(R)$ and the packing
ratio is 0.96235 .

\subsection{Regular Decagon}


Let $\cQ$ be a regular decagon with radius $R$ with sides 0, 1, 2,
3, 4, 5, 6, 7, 8, 9 in consecutive order clockwise. The area of
$\cQ$ is $5 \sin \frac{\pi}{5} R^2$ and hence an upper bound on
the number of dots in $\cQ$ is $1.71433R + o(R)$. Let $\cS$ a
quasi-perfect hexagon with sides A, B, C, D, E, and F, in
consecutive order clockwise, where A is the lower base of $\cS$.
Sides B and C of $\cS$ overlap with sides 1 and 3 of $\cQ$; sides
E and F of $\cQ$ overlap with sides 6 and 8 of $\cQ$. The two
bases A and D of $\cS$ have distance $aR$ to the diameter of $\cS$
which connects the intersection vertex of sides B and C with the
intersection vertex of sides E and F. The distance between the
diameter and a base (A or D) is $aR$. The area of $\cS$ is $s = 2
(2 \sin \frac{2 \pi}{5} + 2 \frac{\sin \frac{\pi}{10} \sin
\frac{\pi}{5}}{\sin \frac{3 \pi}{10}} - \frac{\sin
\frac{\pi}{5}}{\sin \frac{3 \pi}{10}} a) aR^2$ and the density of
the array is $\frac{1}{\sqrt{s}}$. Finally, $\Delta (\cS,\cQ) =(5
\sin \frac{\pi}{5} - 2 \frac{\sin \frac{2 \pi}{5}}{\sin
\frac{\pi}{10}} (1-a)^2 )R^2$. The lower bound of the number of
dots in $\cQ$ is $\frac{1}{\sqrt{s}} \Delta (\cS,\cQ)$. The
maximum on this lower bound is obtained for $a=0.923286$; the
lower bound is $1.64786R+o(R)$ and the packing ratio is 0.961229 .

\section*{Acknowledgment}
My new interest in DDCs is a consequence of discussions with Simon
Blackburn, Keith Martin, and Maura Paterson during the spring of
2007 on key predistribution for wireless sensor networks. I thank
them all and also for the continuous hospitality which I get
yearly in the Mathematics department in Royal Holloway College.
Without these discussions the current work wouldn't have been
born. Discussions on related problems with Eitan Yaakobi were also
inspirational.


\begin{thebibliography}{10}
\providecommand{\url}[1]{#1}
\csname url@rmstyle\endcsname
\providecommand{\newblock}{\relax}
\providecommand{\bibinfo}[2]{#2}
\providecommand\BIBentrySTDinterwordspacing{\spaceskip=0pt\relax}
\providecommand\BIBentryALTinterwordstretchfactor{4}
\providecommand\BIBentryALTinterwordspacing{\spaceskip=\fontdimen2\font plus
\BIBentryALTinterwordstretchfactor\fontdimen3\font minus
  \fontdimen4\font\relax}
\providecommand\BIBforeignlanguage[2]{{%
\expandafter\ifx\csname l@#1\endcsname\relax
\typeout{** WARNING: IEEEtran.bst: No hyphenation pattern has been}%
\typeout{** loaded for the language `#1'. Using the pattern for}%
\typeout{** the default language instead.}%
\else
\language=\csname l@#1\endcsname
\fi
#2}}

\bibitem{Bab}
W. C. Babcock, ``Intermodulation interference in radio systems,''
\emph{Bull. Sys. Tech. Journal}, pp.\,63--73, June 1953.

\bibitem{BlGo77}
G. S. Bloom and S. W. Golomb, ``Applications of numbered
undirected graphs'', \emph{Proceedings of the IEEE}, vol.\,65,
pp.\,562--570, April 1977.

\bibitem{ASU}
M. D. Atkinson, N. Santoro, and J. Urrutia, ``Integer sets with
distinct sums and differences and carrier frequency assignments
for nonlinear repeaters'', \emph{IEEE Transactions on
Communications}, vol.\,COM-34, pp.\,614--617, 1986.

\bibitem{LaSa88}
A. W. Lam and D. V. Sarwate, ``On optimum time-hopping patterns'',
\emph{IEEE Transactions on Communications}, vol.\,COM-36,
pp.\,380--382, 1988.

\bibitem{GoTa82}
S. W. Golomb and H. Taylor, ``Two-dimensional synchronization
patterns for minimum ambiguity'', \emph{IEEE Trans.\ Inform.\
Theory}, vol.\,IT-28, pp.\,600--604, 1982.

\bibitem{GoTa84}
S. W. Golomb and H. Taylor, ``Constructions and properties of
Costas arrays'', \emph{Proceedings of the IEEE}, vol.\,72,
pp.\,1143--1163, 1984.

\bibitem{Rob85}
J. P. Robinson, ``Golomb rectangles'', \emph{IEEE Trans.\ Inform.\
Theory}, vol.\,IT-31, pp.\,781--787, 1985.

\bibitem{Games87}
R. A. Games, ``An algebraic construction of sonar sequences using
M-sequences'', \emph{SIAM Journal on Algebraic and Discrete
Methods}, vol.\,8, pp.\,753--761, October 1987.

\bibitem{BlTi88}
A. Blokhuis and H. J. Tiersma, ``Bounds for the size of radar
arrays'', \emph{IEEE Trans.\ Inform.\ Theory}, vol.\,IT-34,
pp.\,164--167, January 1988.

\bibitem{Rob97}
J. P. Robinson, ``Golomb rectangles as folded ruler'', \emph{IEEE
Trans.\ Inform.\ Theory}, vol.\,IT-43, pp.\,290--293, 1997.

\bibitem{EGRT92}
P. Erdos, R. Graham, I. Z. Ruzsa, and H. Taylor, ``Bounds for
arrays of dots with distinct slope or lengths'',
\emph{Combinatorica}, vol.\,12, pp.\,39--44, 1992.

\bibitem{LeTh95}
H. Lefmann and T. Thiele, ``Point sets with distinct distances'',
\emph{Combinatorica}, vol.\,15, pp.\,379--408, 1995.

\bibitem{BEMP}
S. R. Blackburn, T. Etzion, K. M. Martin, and M. B. Paterson,
``Efficient key predistribution for grid-based wireless sensor
networks,'' \emph{Lecture Notes in Computer Science}, vol. 5155,
pp.\,54--69, August 2008.

\bibitem{BEMP08a} S. R. Blackburn, T. Etzion, K. M. Martin, and
M. B. Paterson, ``Two-Dimensional Patterns with Distinct
Differences -- Constructions, Bounds, and Maximal Anticodes'',
\emph{arxiv.org/abs/0811.3832}.

\bibitem{BEMP08b} S. R. Blackburn, T. Etzion, K. M. Martin, and
M. B. Paterson, ``Distinct difference configurations: multihop
paths and key predistribution in sensor networks'',
\emph{arxiv.org/abs/0811.3896}.

\bibitem{Erd41} P. Erd\H{o}s and P. Tur\'an, ``On a problem of Sidon
in additive number theory and some related problems'', \emph{J.
London Math. Soc.}, vol.\,16, pp.\,212--215, 1941.

\bibitem{Ima73}
H. Imai, ``Two-dimensional Fire codes'', \emph{IEEE Trans. on
Inform. Theory}, vol.~IT-19, pp. 796-806, 1973.

\bibitem{Abd86}
K. A. S. Abdel-Ghaffar, ``An information- and coding-theoretic
study of bursty channels with applications to computer memories'',
\emph{Ph.D. dissertation, California Inst. Technol. Pasadena, CA},
June 1986.

\bibitem{AMT}
K. A. S. Abdel-Ghaffar, R. J. McEliece, and H. C. A. van Tilborg,
``Two-dimensional burst identification codes and their use in
burst correction'', \emph{IEEE Trans. on Inform. Theory},
vol.~IT-34, pp. 494-504, May 1988.

\bibitem{BBZS}
M. Breitbach, M. Bossert, V. Zyablov, and V. Sidorenko, ``Array
codes correcting a two-dimensional cluster of errors'', \emph{IEEE
Trans. on Inform. Theory}, vol.~IT-44, pp. 2025-2031, September
1998.

\bibitem{BBV}
M.\,Blaum, J.\,Bruck, and A.\,Vardy, ``Interleaving schemes for
multidimensional cluster errors'', \emph{IEEE Trans.\ Inform.\
Theory}, vol. IT-44, pp. 730--743, March 1998.

\bibitem{EtVa02}
T.\,Etzion and A.\,Vardy, ``Two-dimensional interleaving schemes
with repetitions: Constructions and bounds'', \emph{IEEE Trans.\
Inform.\ Theory}, vol. IT-48, 428--457, February 2002.

\bibitem{ScEt05}
M. Schwartz and T. Etzion, ``Two-dimensional cluster-correcting
codes'', \emph{IEEE Trans. on Inform. Theory}, vol.~IT-51, pp.
2121-2132, June 2005.

\bibitem{Boy06}
I. M. Boyarinov, ``Two-dimensional array codes correcting
rectangular burst errors'', \emph{Prob. of Infor. Transmission,}
vol. 42, pp. 26--43, June 2006.

\bibitem{EtYa09}
T. Etzion and E. Yaakobi, ``Error-Correction of Multidimensional
Bursts'', \emph{IEEE Trans. on Inform. Theory}, vol.\,IT-55,
pp.\,961--976, March 2009.

\bibitem{Gab85}
E.~M. Gabidulin, ``Theory of codes with maximum rank distance'',
\emph{Problemy Peredachi Informatsii}, vol. 21, pp. 3--16, 1985.

\bibitem{Rot91}
R .M. Roth, ``Maximum-rank arrays codes and their application to
crisscross error  correction'', \emph{IEEE Trans. on Inform.
Theory}, vol. IT-37, pp. 328--336, March 1991.

\bibitem{Rot97}
R. M. Roth, ``Probabilistic crisscross error correction'',
\emph{IEEE Trans. on Inform. Theory}, vol. IT-43, pp. 1425--1438,
September 1997.

\bibitem{BlaBru00}
M. Blaum and J. Bruck, ``{MDS} array codes for correcting
criss-cross errors'', \emph{IEEE Trans. on Inform. Theory}, vol.
IT-46, pp. 1068--1077, May 2000.

\bibitem{Fire59}
P. Fire, ``A class of multiple error correcting binary codes for
nonindependent errors'', \emph{Sylvania Reconnaisance Lab.},
Mountain View, California, Sylvania Rep. RSL-e-2, 1959.

\bibitem{Abr59}
N. M. Abramson, ``A class of systematic codes for non-independent
errors'', \emph{IRE Trans.\ Inform.\ Theory,} vol. IT-5, pp.
150--157, 1959.

\bibitem{ElSh}
B. Elspas, R. A. Short, ``A Note on Optimum Burst-Error-Correcting
Codes'', \emph{IRE Trans.\ Inform.\ Theory,} vol. IT-8, pp.
39--42, 1962.

\bibitem{AMOT}
K. A. S. Abdel-Ghaffar, R. J. Mceliece, A. M. Odlyzko, H. C. A.
van Tilborg,``On the existence of optimum cyclic burst correcting
codes'', \emph{IEEE Trans.\ Inform.\ Theory,} vol. IT-32, pp.
768--775, 1986.

\bibitem{FaHo}
P. G. farrell and S. J. Hopkins, ``Burst-error-correcting array
codes'', \emph{Radio Elec. Eng.,} vol. 52, pp. 188--192, April
1982.

\bibitem{BFvT}
M. Blaum, P. G. Farrell, and H. C. A. van Tilborg, ``A class of
error-correcting array codes'', \emph{IEEE Trans.\ Inform.\
Theory,} vol. IT-32, pp. 836--839, November 1986.

\bibitem{Bla90}
M. Blaum, ``A family of efficient burst-correcting array codes'',
\emph{IEEE Trans.\ Inform.\ Theory,} vol. IT-36, pp. 671--674, May
1990.

\bibitem{ZhWo}
W. Zhang and J. K. Wolf, ``A class of binary burst
error-correcting quasi-cyclic codes'', \emph{IEEE Trans.\ Inform.\
Theory,} vol. IT-34, pp. 463--479, May 1988.

\bibitem{Zha91}
Z. Zhang, ``Limitimng efficiencies of burst-correcting array
codes'', \emph{IEEE Trans.\ Inform.\ Theory,} vol. IT-37, pp.
976--982, July 1991.

\bibitem{Golomb}
S. W. Golomb, {\em Shift Register Sequences}, Aegean Park Press,
1982.

\bibitem{McSl76}
F. J. MacWilliams, N. J. A. Sloane, ``Pseudo-random sequences and
arrays'', \emph{Proc. IEEE}, vol. 64, pp. 1715--1729, December
1976.

\bibitem{NMIF}
T. Nomura, H. Miyakawa, H. Imai, A. Fukuda, ``The theory of
two-dimensional linear recurring arrays'', \emph{IEEE Trans.\
Inform.\ Theory}, vol.\,IT-18, pp.\,773--785, November 1972.

\bibitem{ReSt}
I. S. Reed and R. M. Stewart, ``Note on the existence of perfect
maps'', \emph{IRE Trans.\ Inform.\ Theory}, vol.\,IT-8,
pp.\,10--12, January 1962.

\bibitem{FFMS}
C. T. Fan, S. M. fan, S. L. Ma, M. K. Siu, ``On de Bruijn
arrays'', \emph{Ars Combinatoria}, vol.\,19A, pp.\,205--213, May
1985.

\bibitem{Etz88}
T. Etzion, ``Construction for perfect maps and pseudorandom
arrays'', \emph{IEEE Trans.\ Inform.\ Theory}, vol.\,IT-34,
pp.\,1308--1316, 1988.

\bibitem{Pat94}
K. G. Paterson, ``Perfect maps'', \emph{IEEE Trans.\ Inform.\
Theory}, vol.\,IT-40, pp.\,743--753, 1994.

\bibitem{STO}
R. G. van Schyndel, A. Z. Tirkel, C. F. Osborne, ``A digital
watermark'', \emph{ Image Processing proceedings}, vol. 2, pp.
86--90, 1994.

\bibitem{TSO}
A. Z. Tirkel, R. G. van Schyndel, C. F. Osborne, ``A
two-dimensional digital watermark'', \emph{DICTA95}, pp. 210--216,
1998.

\bibitem{Hsi01}
Y. C. Hsieh, ``Decoding structured light patterns for
three-dimensional imaging systems'', \emph{Pattern Recognition},
vol. 34, pp. 343--349, 2001.

\bibitem{MOCDZN}
R. A. Morano, C. Ozturk, R. Conn, S. Dubin, S. Zietz, J. Nissanov,
``Structured light using pseudorandom codes'', \emph{IEEE Trans.\
Pattern Analysis and Machine Intelligence}, vol. 20, pp. 322--327,
1988.

\bibitem{SPB}
J. Salvi, J. Pages, J. Battle, ``Pattern codification strategies
in structured light systems'', \emph{Pattern Recognition}, vol.
37, pp. 827--849, 2004.

\bibitem{PSCF}
J. Pages, J. Salvi, C. Collewet, J. Forest, ``Optimised de\,Bruijn
patterns for one-shot shape acquisition'', \emph{Image and Vision
Computing}, vol. 23, pp.707--720, 2005.

\bibitem{CoSl93}
J. H. Conway and N. J. A. Sloane, {\em Sphere Packings, Lattices,
and Groups,} New York: Springer-Verlag, 1993.

\bibitem{StSz94}
S. K. Stein and S. Szab\'{o}, {\em Algebra and tiling,}
Mathematical Association of America, 1994.

\bibitem{UrRi98}
R. Urbanke and B. Rimoldi, ``Lattice codes can achieve capacity on
AWGN channel'', \emph{IEEE Trans.\ Inform.\ Theory}, vol. IT-44,
273--278, January 1998.

\bibitem{ViBo99}
E. Viterbo and J. Boutros, ``A universal lattice decoder for
fading channels'', \emph{IEEE Trans.\ Inform.\ Theory}, vol.
IT-45, 1639--1642, July 1999.

\bibitem{TVZ99}
T. Tarokh, A. vardy, and K. Zeger, ``Universal bounds on the
performance of lattice codes'', \emph{IEEE Trans.\ Inform.\
Theory}, vol. IT-45, 670--681, March 1999.

\bibitem{Bry04}
K. O'Bryant, ``A complete annotated bibliography of work related
to Sidon sequences'', \emph{The Elec. J. of Combin.}, DS11,
pp.\,1--39, July 2004.

\bibitem{Bose42} R. C. Bose, ``An affine analogue of Singer's theorem'',
\emph{J. Indian Math.\ Soc.\ (N.S.)}, vol. 6, pp. 1-15, 1942.

\bibitem{NiZu79}
I. Nivan and H. S. Zuckerman, {\em An Introduction to The Theory
of Numbers,} New York: John Wiley \& Sons, 1979 (fourth edition).

\bibitem{YaEt09}
E. Yaakobi and T. Etzion, ``High dimensional error-correcting
codes'', arXiv:0910.5697, October 2009.

\bibitem{MacBi88}
S. MacLane and G. Birkhoff, {\em Algebra,} New York: Chelsea
Publishing Company, 1988 (third edition).


\end{thebibliography}

\end{document}